\documentclass[9pt,twocolumn,twoside]{IEEEtran}

\IEEEoverridecommandlockouts         
               

\bibliographystyle{IEEEtran}

\usepackage{amsmath,amssymb,amsthm,latexsym,amsfonts,relsize}

\usepackage{paralist}
\usepackage{graphics} 
\usepackage{epsfig} 
\usepackage{epstopdf}
\usepackage{color}
\usepackage{bbm}
\usepackage{tikz}
\usetikzlibrary{arrows}

\usepackage{subfigure}
\usepackage{pgf}
\usepackage[labelsep=period]{caption}
\usepackage{hyperref}
\usepackage{url}
\usepackage{setspace,booktabs}
\usepackage{caption}
\usepackage{wrapfig}
\usepackage{lipsum}%
\usepackage{comment}

\usepackage{resizegather}

\newtheorem{thm}{Theorem}

\newtheorem{lem}{Lemma}

\newtheorem{assum}{Assumption}
\newtheorem{prop}{Proposition}

\newtheorem{defn}{Definition}

\newtheorem{rem}{Remark}

\DeclareMathOperator{\R}{\mathbb{R}}

\newcommand{\G}{{\mathcal{G}}}
\newcommand{\V}{{\mathcal{V}}}
\newcommand{\EE}{{\mathcal{E}}}

\DeclareMathOperator*{\argmin}{arg\,min}

\newcommand{\Sp}{\hspace{0.05cm}}

\newcommand{\ER}{\Xi_{\mathcal G}}

\graphicspath{{../figs/}}

\begin{document}

\title{Risk of Collision and Detachment in Vehicle Platooning:  Time--Delay--Induced Limitations and  Trade--offs \\ (Extended Version)}
\author{Christoforos Somarakis, Yaser Ghaedsharaf, Nader Motee
\thanks{*This work is supported by NSF CAREER ECCS-1454022, AFOSR FA9550-19-1-0004 and ONR YIP N00014-16-1-2645}
\thanks{C. Somarakis is with the System Sciences Lab, Palo Alto Research Center, Palo Alto, CA, 94304 {\tt \small somarakis@parc.com}. Y. Ghaedsharaf and N. Motee are with the Department of Mechanical Engineering and Mechanics, Lehigh University, Bethlehem, PA, 18015, USA.   \{\tt\small ghaedsharaf,motee\}@lehigh.edu}.}%

\maketitle

\begin{abstract}  We quantify the value-at-risk of  inter-vehicle collision and detachment for a class of  platoons, which are governed by second-order dynamics in presence of communication time-delay and exogenous stochastic noise. Closed-form expressions for the risk measures are obtained as  functions of Laplacian eigen-spectrum as well as their fine explicit  approximations using rational polynomial functions. We quantify several hard limits and fundamental trade-offs among the risk measures, network connectivity, communication time-delay, and statistics of exogenous stochastic noise. Simultaneous presence of stochastic noise and time delay in a platoon imposes some idiosyncratic behavior risk of collision and detachment, for instance, weakening (improving) network connectivity may result in lower (higher) levels of risk. Furthermore, a thorough risk analysis is conducted for networks with specific graph topology. We support our theoretical findings via multiple simulations.
\end{abstract}

\section{Introduction}\label{sect: INTRODUCTION}	
{Networked control systems are often susceptible to external disturbances and onboard hardware limitations (e.g. communication time-delay, limited computational power and battery life).  Notable examples of  such networks include platoon of autonomous vehicles, synchronous power networks with integrated renewable sources, water supply networks, transportation  networks, and inter-dependent financial systems \cite{Bamieh12,doi:10.1137/110851584,Tahbaz13,WRCR:WRCR4853,doi:10.1287/trsc.34.3.239.12300}. Exogenous disturbances usually steer trajectory of a  system  away from the target equilibrium and potentially into undesirable modes of operation. The situation  becomes even more challenging when the unperturbed  network consists of several  interconnected  subsystems, where the troublesome effects of noise propagate and get amplified across the network. Moreover, inherent limitations on the communication layer (e.g., time-delay, receiver and transmitter noise) may also exacerbate the effect of noise and  deteriorate the overall performance of the network.  One of the main engineering challenges is to design robust dynamical networks that damp, if not reject,  undesirable network-wide effects of disturbance and communication time-delay.
 
In this paper, we consider platoons vehicles that exchange information over a time-invariant communication network.  The platooning problem is a simple, yet rich, benchmark to study autonomy in robotic networks using their second- or third-order state-space models \cite{gazis2002traffic,250509,7963567,6515636,6683051}.  We assume vehicle dynamics to be represented by a double control integrator. Also vehicles are capable of receiving, transmitting and processing data to update their own state, according to a second-order consensus protocol. Stemming from real-world applications, vehicles suffer from non-negligible communication time-delays  due to deficiencies of existing hardware modules. It is assumed that all vehicles use identical hardware modules and, as a result, they all experience a uniform (identical) time-delay. In addition, the effects of uncertain surrounding environment on vehicles are modeled and incorporated into our network model via additive independent Gaussian force noises.    

The global objective of platooning is to guarantee the following two group behaviors in steady-state: (i) pair-wise difference between position variables converges to a prescribed distance, and (ii) the platoon of agents attain the same constant velocity. This is illustrated in  Figure  \ref{fig: the platoon}. We identify two types of undesirable events, also referred to as systemic events. It is crucial to identify a systemic event where at least two consecutive vehicles collide and calculate its
probability. We recall that a near-hit-region for two consecutive vehicles in platoon is an unsafe region in the state space of trajectories, as once there, vehicles may collide. This local event may interrupt the platoon and render the state of the entire network to unsafe regions in the state space. Similarly, no two consecutive vehicles may  stand off too remotely, as once there, the communication graph of the platoon may be. Identifying and accounting for such events becomes substantially challenging when platoon is subject to exogenous stochastic noise and communication time-delay.  

\vspace{0.1cm}
\noindent \textit{Related Literature:} Norm induced performance and robustness measures have been widely studied in the context of robust control \cite{Zhou:1996:ROC:225507}. Reference \cite{matt2004} gives an overview and a brief history of risk-sensitive stochastic optimal control and surveys various approaches to controller design as well as their relationships among each other. The control objective, in this context, is to synthesize a controller with satisfactory levels of performance in the presence of disturbances.  These methods face severe shortcomings when they are applied to stochastic dynamical networks. The resulting controllers usually require all-to-all communication and do not respect topology of the underlying  communication layer \cite{1556730}. Moreover, these controller design methods do not scale with network size.

The $\mathcal H_2$--norm has been recently utilized as a measure of performance and coherency for linear consensus  networks (see  \cite{Bamieh12,Siami16TACa,yaserecc16,JOVANOVIC2008528,7963567} and references therein). One of the main advantages of using $\mathcal H_2$--norm is its elegant representation in terms of Laplacian spectrum that makes development of  tractable and scalable network design  algorithms possible \cite{Siami16TACb}. An interesting interpretation of the $\mathcal H_2$ norm, in the context of platooning, is that it quantifies the ability of the entire platoon to withstand the effect of exogenous noise and remain a rigid body \cite{Bamieh12}.
The effects of time-delay in consensus networks have been investigated in various disciplines such as, to name only a few, traffic networks, flocking, distributed optimal control design \cite{gazis2002traffic,Bamieh12,4282756,BEXELIUS196813,RevModPhys.73.1067,SomBarifac15,Yu2010, doi:10.1137/060673813}. These works are mainly concerned with the problem of stability. Performance analysis and design of  time-delay linear consensus networks  using  $\mathcal{H}_2$-norm is recently studied in \cite{7963567,yaserecc16,yasernecsys16,PhysRevE.92.062816,PhysRevE.86.056114,PhysRevE.90.042135}. 

The $\mathcal H_2$--norm is an example of norms of Hardy-Schatten type. These systemic metrics quantify macroscopic features of networks. For instance, in consensus networks, $\mathcal H_2$--norm measures coherency \cite{Bamieh12} and $\mathcal{H}_{\infty}$--norm quantifies global connectivity \cite{Siami16TACb}. However, these measures cannot scrutinize microscopic behaviors of networks. The focus of this paper is to inspect risk of inter-vehicle collision or detachment in the platoon of vehicles. 
We build upon existing notions of risk that are widely used in the context of financial systems \cite{artzner97,artzner99}. In its rudimentary form, risk serves as a surrogate for uncertainty in stochastic models \cite{rockafellar07}. We utilize the notion of value-at-risk to measure the extend and occurrence of a random undesirable event, with a certain confidence level,  over a specified time period.



\noindent {\it Our Contributions:} Building upon our recent works on first order consensus systems \cite{sommiladnader16,somyasnader17,DBLP:journals/corr/abs-1801-06856}, we investigate aspects of fragility in the platooning model from a systemic risk perspective.  In Section \ref{sect: risk}, the value-at-risk measure is quantified to determine safety margins for the following two types of undesirable events: (i) inter-vehicle collision,  and (ii) inter-vehicle detachment. For a single collision or detachment event, we obtain a closed-form expression of risk in terms of Laplacian spectrum in Section \ref{sect: riskplatoon}.  In Section \ref{sect: jointrisk}, tight lower and upper bounds in terms of risk of individual events are provided for risk of  multiple (joint)  events. We outline the computational difficulties in deriving explicit expressions for the risk measures in Section \ref{sect: approx} and provide a tractable method for their approximation using rational functions. In Section  \ref{sect: tradeoff}, we prove that fundamental limits and tradeoffs emerge on the best achievable levels of risk, which are solely due to the presence of exogenous noise and communication time-delay. Furthermore, we show that strengthening (weakening) network connectivity results in higher (lower) levels of risk of inter-vehicle collision and detachment. Finally, in Section \ref{sect: topological}, we apply our approximate formulas to calculate risk of platoons with complete, path, and cyclic communication topologies and show how to identify high-risk vehicles in such networks. The paper concludes with Section \ref{sect: simulation} of simulation examples that outline the theoretical results and Appendix sections with proofs of technical results.

Risk analysis of the platooning problem in this paper differs in almost every building block from our earlier works on first-order linear consensus networks \cite{sommiladnader16,somyasnader17,DBLP:journals/corr/abs-1801-06856}. First, the time-delayed second-order linear consensus networks are inherently more perplexed. Therefore, the stability conditions, that serve as cornerstone of our results, require fundamentally different analytic approach. Second, the nature of systemic events are different, which in turn, result in new risk formulas that do not lend themselves to explicit expressions in terms of Laplacian spectrum.  The present paper is an outgrowth \cite{somyasnader18} in several different aspects.  First, we extend our risk analysis to inter-vehicle detachment events as well as multiple systemic events.  We propose a tractable rational function approximation scheme for evaluation of risk. Furthermore, we examine special communication topologies and present extra simulation examples. The manuscript also contains detailed proofs of all our technical results.

}

\section{Preliminaries}\label{sect: intro}
The $n$-dimensional Euclidean space with elements $\mathbf z=[\Sp z_1,\Sp \dots \Sp,z_n \Sp]^T$ is denoted by $\mathbb R^n$, where $\mathbb R^n_+$ will denote the positive orthant of $\mathbb R^n$. We denote the vector of all ones by $\mathbbm 1_n$. For every $\mathbf z_1,\mathbf z_2 \in \R^n$, we write $\mathbf z_1 \preceq  \mathbf z_2~$ if and only if $ \mathbf z_2-\mathbf z_1 \in \mathbb R_+^n$.  The set of standard Euclidean basis for $\R^n$ is represented by $\{\mathbf e_1,\ldots, \mathbf e_n \}$. The vector or induced matrix $2$-norms are represented by $\| \cdot \|$. The vector of all ones is denoted by $\mathbbm{1}_n= [1,\dots,1]^T$. 

\vspace{0.2cm}
\noindent {\it Algebraic Graph Theory:} 
 A weighted graph is defined by $\mathcal G=(\V, \mathcal{E},w)$, where $\V$ is the set of nodes, $\mathcal{E}$ is the set of links (edges),  and $w: \mathcal{V} \times \V \rightarrow \mathbb{R}_{+}$ is the weight function that assigns a non-negative number to every link.  Two nodes $i,j\in \mathcal V$ are directly connected if and only if $(i,j)\in \mathcal E$. The set of nodes adjacent to $i$ constitutes the neighborhood of node $i$ that is denoted by $\mathcal N_i=\big\{j \in \mathcal V ~\big|~(i,j)\in \mathcal E \big\}$. 
\begin{assum}\label{assum0} Every graph $\mathcal G=(\V, \mathcal{E},w)$ in the paper is connected. In addition for every $i,j \in \mathcal{V}$, the following properties hold:

\noindent (i) \hspace{0.075in}~$w(i,j)>0$ if and only if $(i,j) \in \mathcal E$;

\noindent (ii) \hspace{0.04in}~$w(i,j)=w(j,i)$, i.e., links are undirected;

\noindent (iii)~ $w(j,j)=0$, i.e., links are simple.

\end{assum} 
The Laplacian matrix of $\G$ is a $n \times n$ matrix $L=[l_{ij}]$ with elements  
\begin{equation}\label{eq: laplacian}
\displaystyle l_{ij} := \left\{\begin{array}{ccc}
-k_{ij} & \textrm{if} & i \neq j \\
 &  &  \\
k_{i1}+\ldots+k_{in}& \textrm{if} & i=j
\end{array}\right.,
\end{equation} 
where $k_{ij}:=w(i,j)$. Laplacian matrix of a graph is  symmetric and positive semi-definite. Assumption  \ref{assum0} implies that the smallest Laplacian eigenvalue is zero with algebraic multiplicity one. The spectrum of $L$ can be ordered as \begin{equation}\label{eq: spectrum}0=\lambda_1 < \lambda_2 \leq \ldots \leq \lambda_n.\end{equation} The eigenvector of $L$ corresponding to  $\lambda_k$  is denoted by $\mathbf q_k$. By letting $Q=[\Sp \mathbf q_1~|~\dots~|~\mathbf q_n \Sp]$, it follows that $L=Q \Lambda Q^T $ with $\Lambda=\mathrm{diag}[\Sp 0,\Sp \lambda_2,\Sp \ldots \Sp, \Sp\lambda_n \Sp]$. We normalize the Laplacian eigenvectors such that $Q$ becomes an orthogonal matrix, i.e., $Q^T Q = Q Q^T =I_n$ with  $\mathbf q_1=\frac{1}{\sqrt{n}} \mathbbm{1}_n$. The total effective resistance of $\mathcal G$ is a popular metric of connectivity  \cite{Mieghem:2011:GSC:1983675} that is characterized as  \cite{klein1993} 
\begin{equation}\label{eq: graphr}
\ER =n\sum_{i=2}^n \lambda_i^{-1}.
\end{equation} The smaller the value of $\Xi_{\mathcal G}$, the stronger the connectivity of $\mathcal G$.

\vspace{0.2cm} 
\noindent {\it Probability Theory:}  Let $\mathcal L^2(\mathbb R^q)$ be the set of all $\mathbb R^q$-valued random vectors $\mathbf z=\big[z^{(1)},\dots,z^{(q)}\big]^T$ of a probability space $(\Omega,\mathcal F, \mathbb P)$ with finite second moments. A normal random variable $y \in \mathcal L^2(\mathbb R^q)$ with mean value $\mu\in \mathbb R^q$ and $q\times q$ covariance matrix $\Sigma$ is represented by $y~\sim~\mathcal N(\mu,\Sigma)$. The error function $\mathrm{erf}(x):\mathbb R\rightarrow (-1,1)$ is 
$$\mathrm{erf}(x)=\frac{2}{\sqrt{\pi}}\int_{0}^{x}e^{-t^2}\,dt$$
which is invertible on its range. The complementary error function is $\text{erfc}(x)=1-\text{erf}(x)$. Formulation of stochastic differential equations, we employ standard notation $d\boldsymbol \xi_{t}$ for stochastic differentials\footnote{This formalism will also be retained for deterministic differentials.}. 


\section{Problem Formulation}\label{sect: formulation}
Suppose that a finite number of vehicles  $\mathcal V=\{1,\dots,n\}$ form a platoon along the horizontal axis. Vehicles are labeled in descending order, where the $n$'th vehicle is assumed to be the leader.  The $i$'th vehicle's state is determined by $[x^{(i)},v^{(i)}]^T$, where $x^{(i)}$ is the position and $v^{(i)}$ is the velocity of vehicle $i \in \mathcal V$. The $i$'th vehicle's state evolves in time according to the following stochastic  differential equation 
\begin{equation}\label{eq: model0}\begin{split}
d x_t^{(i)} & = v_t^{(i)}\,dt \\
d v_t^{(i)} & = u_t^{(i)}\,dt+g\,d\xi_t^{(i)}
\end{split}
\end{equation}
where $u_t^{(i)} \in \mathbb R$ is the control input at time $t$. The terms $g \, d\xi_t^{(i)},~i=1,\dots,n$ represent white noise generators affecting  dynamics of the vehicle and models the uncertainty diffused in the system. It is assumed that noise acts on every vehicle additively and independently from the other vehicles' noises. The noise magnitude is represented through diffusion $g\neq 0$, assumed identical for all $i\in \mathcal V$. The control objectives for the platoon are to guarantee the following
two global behaviors: (i) pair-wise difference between position variables of every two consecutive vehicles converges to zero; and (ii) the platoon of vehicles attain the same constant velocity in steady state. It is known that the following feedback control law can achieve these objectives
\begin{eqnarray}
\hspace{-0.3cm}u_t^{(i)} & = & \sum_{j=1}^n ~k_{ij}~\left(v^{(j)}_{t-\tau}-v_{t-\tau}^{(i)}\right)  \nonumber\\
& & \hspace{1cm} +~  \beta ~\sum_{j=1}^n ~k_{ij}~\left(x^{(j)}_{t-\tau}-x_{t-\tau}^{(i)}-(d_j-d_i)\right).\label{eq: feedback}
\end{eqnarray}

\begin{figure}[t]
\center
\includegraphics[scale=1]{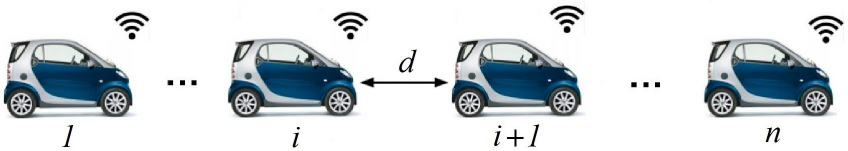}
 \caption{Schematics of platooning. Each vehicle aims to travel in a constant velocity while preserving distance $d$ from its immediate neighbors. Vehicles exchange their state information over a constant communication network, in presence of noise and time-delay.}\label{fig: the platoon}
 \vspace{-0pt}
\end{figure}

Let us denote the communication graph by $\G=(\V,\mathcal E,w)$, where $\V=\{1,\dots,n\}$, $(i,j) \in \EE$ iff $k_{ij}>0$, and $w(i,j)=k_{ij}$ for all $i,j \in \V$. The feedback gains $k_{ij}\geq 0$ are designed so that the resulting communication graph $\G$ with Laplacian matrix \eqref{eq: laplacian} satisfies Assumption \ref{assum0}.  The constant $\tau\geq 0 $ is the communication time-delay. The first term in \eqref{eq: feedback} guarantees the control objective (i). The second term in \eqref{eq: feedback} guarantees the control objective (ii) and stabilizes the relative position of vehicles $i$ and $j$ around the distance $d_j-d_i$. The control parameter $\beta>0$ balances the effect of the relative positions and velocities. Following the scenario described in Figure \ref{fig: the platoon}, we select $d_i=i d$ for some parameter $d>0$ and all $i=1,\dots,n$. The target distance between vehicles $i+1$ and $i$ becomes $d_{i+1}-d_i = d$. Let us define  the vector of positions, velocities, and noise inputs as $\mathbf x_t=[x_t^{(1)},\dots,x_t^{(n)}]^T$, $\mathbf v=[v_t^{(1)},\dots,v_t^{(n)}]^T$,  and $\boldsymbol{\xi}_t= [ \xi^{(1)}_t, \dots, \xi_t^{(n)}]^T$, respectively\footnote{The stochastic process $\{\boldsymbol{ \xi}_t\}_{t\geq 0}$ with  $\boldsymbol{ \xi}_t=\big[\xi_t^{(1)},\dots,\xi_t^{(q)}\big]^T\in \mathcal L^2(\mathbb R^q)$ denotes an $\mathbb R^q$-valued Brownian motion.}. By 
applying the feedback control law \eqref{eq: feedback} to \eqref{eq: model0} and denoting $\mathbf d = d\mathbbm 1_n$, the closed-loop dynamics can be cast as the following initial value problem:
\begin{equation}\label{eq: sys0}\begin{split}
d\mathbf x_t&=\mathbf v_t \Sp d_t\\
d\mathbf v_t&=-L \Sp \mathbf v_{t-\tau} \Sp dt \Sp - \Sp \beta L \Sp \big(\mathbf x_{t-\tau}- \mathbf d) \Sp dt \Sp + \Sp g \Sp d\boldsymbol{\xi}_t
\end{split}
\end{equation} for all $t\geq 0$ and given deterministic initial functions $\boldsymbol \phi^{\mathbf x}(t), \boldsymbol \phi^{\mathbf v}(t) \in \R^n$ over $t \in [-\tau,0]$. Standard results in the theory of stochastic functional differential equations \cite{Mohammed84} guarantee that \eqref{eq: sys0} generates a well-posed stochastic process $\{(\mathbf x_t,\mathbf v_t)\}_{t\geq -\tau}$.

The {\it problem} is to quantify risk of systemic events as a function of communication graph $\G$, time-delay, and statistics of noise. The systemic events are undesirable events that lead to greater (negative) impact on the global behavior of the platoon. We consider two types of systemic events: inter-vehicle collision, where two successive vehicles get too close to each other, and inter-vehicle detachment, where two successive vehicles distance themselves too far from their target distance $d$ and lose connectivity\footnote{This is practically relevant as vehicles are usually equipped with communication modules with limited range.}.

 
 \begin{figure}
\tikzstyle{block} = [draw, fill=blue!20, rectangle, 
    minimum height=3em, minimum width=6em]
\tikzstyle{sum} = [draw, fill=blue!20, circle, node distance=1cm]
\tikzstyle{input} = [coordinate]
\tikzstyle{output} = [coordinate]
\tikzstyle{pinstyle} = [pin edge={to-,thin,black}]
%
\begin{tikzpicture}[auto, node distance=2cm,>=latex']
    \node [input, name=input] {};
    \node [sum, right of=input] (sum){+};
    \node [block, right of=sum,pin={[pinstyle]above: Time-Delay}] (controller) {Controller};
    \node [block, right of=controller, pin={[pinstyle]Stochastic Disturbance},
            node distance=3.5cm] (system) {Vehicle $i$};
    \draw [->] (controller) -- node[name=u] {$u_t^{(i)}$} (system);
    \node [output, right of=system] (output) {};
    \draw (1.0,1.1) node [block] (measurements) {Network};
    \coordinate [below of=u] (tmp);
    \coordinate [above of=system] (tmp2);
    \draw [draw,->] (input) -- node{\tiny Reference} (sum);
    \draw [->] (sum) -- node {} (controller);
    \draw [->] (system) --  node [name=y] {} (output);
    \draw[->] (output) |-(tmp2)-| (measurements);
    \draw [->] (measurements) -- (sum);
    \draw [->] (output) |- (tmp) -| (sum);
        \draw [->] (output)  |- (tmp) -| node[pos=0.99]{$-$} 
        node [near end] {} (sum);  
\end{tikzpicture}

\caption{Block diagram of  vehicles' feedback mechanism. The propagation and process of information is prone to time-delay and exogenous disturbances. Vehicle $i$ receives the state information of the neighboring vehicles (network block), while it communicates its own state information back to them.}\label{fig: controller}
\vspace{-0pt}
\end{figure}
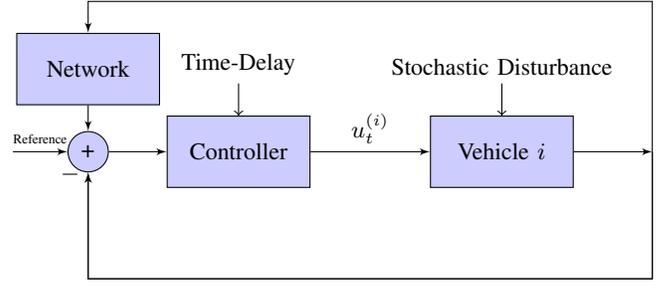

%

\section{Stability and Solution Statistics}\label{sect: prelim} We investigate stability of the unperturbed closed-loop network, i.e.,  when $g=0$ in  \eqref{eq: sys0}. Then, we analyze the statistical properties of the solution of \eqref{eq: sys0} when $g\neq 0$. 
\begin{defn}\label{def: platooning}
The solution of network \eqref{eq: sys0}  converges to a platoon if 
\begin{equation*}\label{eq: convergence}
\lim_{t\rightarrow\infty}\left|v_t^{(i)}-v_{t}^{(j)}\right|=0\hspace{0.1in}~\text{and}~\hspace{0.1in}\lim_{t\rightarrow\infty}\left|x_t^{(i)}-x_{t}^{(j)}-(i-j)d\right|=0
\end{equation*}
for all $i,j\in \mathcal{V}$ and all initial functions  $\boldsymbol \phi^{\mathbf x}(t), \boldsymbol \phi^{\mathbf v}(t)$ over $[-\tau,0]$.  
\end{defn} 
Decomposition of $L=Q\Lambda Q^T$ offers a useful transformation 
 \begin{equation*}
 \mathbf z_t:=Q^T\big(\mathbf x_t-\mathbf d\big)~~\textrm{and}~~ \boldsymbol \upsilon_t :=Q^T \mathbf v_t.
 \end{equation*} In this new coordinates, \eqref{eq: sys0} transforms to 
\begin{equation}\label{eq: sys2}\begin{split}
d\mathbf z_t& \Sp = \Sp \mathbf u_t\, d_t\\
d\boldsymbol \upsilon_t& \Sp = \Sp -\Lambda \Sp \boldsymbol \upsilon_{t-\tau}\Sp dt \Sp - \Sp \beta \Sp \Lambda \Sp \mathbf z_{t-\tau}\,dt \Sp + \Sp g \Sp  Q^T \Sp  d\boldsymbol {\xi}_t.
\end{split} 
\end{equation} 

\subsection{Exponential Stability of the Unperturbed System}

\noindent Consider the set 
\begin{equation}\label{eq: S}\begin{split}
S=\bigg\{&(s_1,s_2)\in \mathbb R^2~\bigg|~ s_1\in \bigg(0,\frac{\pi}{2}\bigg),~ s_2 \in \bigg(0,\frac{a}{\tan(a)}\bigg),~~~~\\~~~~&~~\text{for}~a\in \bigg(0,\frac{\pi}{2}\bigg)~\text{the solution of}~~ a\sin(a)=s_1 \bigg\}.
\end{split}
\end{equation}  
It is noted that so long as $s_1\in (0,\frac{\pi}{2})$ there equation $a\sin(a)=s_1$ is true for a unique $a\in (0,\frac{\pi}{2})$. Figure \ref{fig: stability} illustrates geometry of  $S$. 

\begin{thm}\label{thm: main0} The solution  of network \eqref{eq: sys0} with $g=0$ converges to a platoon if and only if $$(\lambda_i\tau,\beta\tau) \in S~~~~\text{for all}~~~~i=2,\dots,n.$$
\end{thm}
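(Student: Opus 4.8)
The plan is to exploit the modal decoupling already recorded in \eqref{eq: sys2}. Since $\Lambda=\mathrm{diag}[\,0,\lambda_2,\dots,\lambda_n\,]$, the transformed unperturbed system ($g=0$) splits into $n$ independent scalar second-order delay equations, one per eigenvalue. The mode attached to $\lambda_1=0$ obeys $\dot z^{(1)}=\upsilon^{(1)}$, $\dot\upsilon^{(1)}=0$, which is merely the rigid translation of the platoon and is immaterial to Definition \ref{def: platooning}. Convergence to a platoon is therefore equivalent to asymptotic stability of every disagreement mode $i=2,\dots,n$, namely $\ddot z^{(i)}(t)+\lambda_i\dot z^{(i)}(t-\tau)+\beta\lambda_i z^{(i)}(t-\tau)=0$. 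First I would write its characteristic quasi-polynomial $s^2+\lambda_i(s+\beta)e^{-s\tau}$ and, after the substitution $w=s\tau$ (with the trivial case $\tau=0$ dispatched separately by the Routh--Hurwitz test on $s^2+\lambda_i s+\beta\lambda_i$), reduce stability to the requirement that every root of
$$p(w)=w^2+s_1(w+s_2)\,e^{-w},\qquad s_1:=\lambda_i\tau,\ \ s_2:=\beta\tau,$$
lie in the open left half-plane.

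The core of the argument is a D-subdivision (boundary-crossing) analysis in the parameter plane $(s_1,s_2)\in\mathbb R_+^2$. Because $p$ is of retarded type---its principal term $w^2$ dominates the delayed terms for large $|w|$---only finitely many roots can sit in any fixed right half-plane, and every root depends continuously on $(s_1,s_2)$. Hence stability can be lost only when a root crosses the imaginary axis as the parameters vary. I would anchor the count at small $s_1,s_2$ (short delay), where $p$ is a small perturbation of the Hurwitz polynomial $w^2+s_1 w+s_1 s_2$ and all roots lie strictly in the left half-plane.

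Next I would locate the crossing set by putting $w=ia$ with $a>0$ and splitting $p(ia)=0$ into real and imaginary parts. The imaginary part forces $a\cos a-s_2\sin a=0$, that is $s_2=a/\tan a$, and substituting this back collapses the real part to $s_1=a\sin a$; the value $a=0$ (i.e. $w=0$) is ruled out because $p(0)=s_1 s_2\neq0$ throughout the open positive quadrant. Restricting to $a\in(0,\tfrac{\pi}{2})$ reproduces exactly the boundary of $S$: the map $a\mapsto a\sin a$ is a strictly increasing bijection of $(0,\tfrac{\pi}{2})$ onto $(0,\tfrac{\pi}{2})$, so each admissible $s_1$ selects a unique $a$ and the crossing sits at $s_2=a/\tan a$. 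I would also verify that the branches $a>\tfrac{\pi}{2}$ yield crossings only with $s_1\le0$ or $s_2\le0$, hence fall outside the positive quadrant, and that $s_1\ge\tfrac{\pi}{2}$ admits no stable $s_2$ at all: the limit $s_2\to0^+$ factors $p(w)=w\bigl(w+s_1 e^{-w}\bigr)$, whose nontrivial factor is the classical first-order delay equation, stable precisely when $s_1<\tfrac{\pi}{2}$. This accounts for the constraint $s_1\in(0,\tfrac{\pi}{2})$ in the definition of $S$.

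Finally I would fix the crossing direction by implicit differentiation of $p(w(s_2))=0$, showing $\operatorname{sign}\tfrac{d}{ds_2}\operatorname{Re}w\big|_{w=ia}>0$, so that increasing $s_2$ through $s_2=a/\tan a$ drives a conjugate pair of roots from the left into the right half-plane. Together with left-half-plane stability near the origin and the fact that this is the first crossing met, this pins the stable side as $s_2<a/\tan a$, i.e. the region $S$, and yields both directions of the equivalence at once: all disagreement modes are asymptotically stable, and hence \eqref{eq: sys0} with $g=0$ converges to a platoon, if and only if $(\lambda_i\tau,\beta\tau)\in S$ for every $i=2,\dots,n$. The step I expect to be the main obstacle is the bookkeeping of the infinitely many roots of the quasi-polynomial: one must argue rigorously that a root cannot enter the right half-plane except through the computed imaginary-axis crossings---using the retarded structure to keep right-half-plane roots bounded---and that the transversality is sign-definite throughout, so that $S$ is genuinely one connected stability region with no re-stabilization windows.
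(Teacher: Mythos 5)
Your overall route is sound and genuinely different from the paper's in its key step. The paper's proof shares your skeleton exactly: it decouples \eqref{eq: sys2} into the scalar delayed subsystems \eqref{eq: sys3}, dismisses the $\lambda_1=0$ mode, forms the characteristic function $\Delta_i(s)=s^2+\lambda_i(\beta+s)e^{-\tau s}$, and then --- instead of any D-subdivision analysis --- simply invokes Theorem 13.12 of Bellman--Cooke (after normalization the quasi-polynomial $w^2e^{w}+s_1w+s_1s_2$ is precisely the canonical second-degree form treated there), concluding with the same inversion of the transformation to recover velocity consensus and the prescribed spacings. So where the paper cites a classical criterion, you re-derive it from scratch: anchoring stability at small $(s_1,s_2)$, computing the imaginary-axis crossings $s_2=a\cot a$, $s_1=a\sin a$, excluding $w=0$ via $p(0)=s_1s_2\neq 0$, and checking transversality. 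What your approach buys is a self-contained proof and a transparent picture of how the boundary of $S$ arises; what the citation buys is brevity and insulation from exactly the root-bookkeeping hazards you flagged. Your crossing computations on the primary branch are correct (the real part indeed collapses to $s_1=a/\sin a\cdot\sin^2 a=a\sin a$ after substituting $s_2=a\cot a$), as is the $s_2\to 0^+$ factorization $p(w)=w\bigl(w+s_1e^{-w}\bigr)$ and the Hayes threshold $s_1<\pi/2$.

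There is, however, one concrete error in your root-crossing bookkeeping: the claim that all branches with $a>\tfrac{\pi}{2}$ yield crossings with $s_1\le 0$ or $s_2\le 0$ is false. For $a\in(2k\pi,\,2k\pi+\tfrac{\pi}{2})$, $k\ge 1$, both $\sin a>0$ and $\cot a>0$, so $s_1=a\sin a>0$ and $s_2=a\cot a>0$: these crossing curves lie \emph{inside} the open positive quadrant, and since $s_1=a\sin a$ sweeps through all of $(0,\tfrac{\pi}{2})$ as $a$ traverses the beginning of each such interval, they live over the same $s_1$-range as your primary curve. Your argument survives, but only after an extra step you have not supplied: writing $s_2=a^2\cos a/s_1$ along any crossing branch, one checks that on the $k$-th branch $a^2\cos a>(2k\pi)^2\sqrt{1-(s_1/2k\pi)^2}$, which for $s_1\in(0,\tfrac{\pi}{2})$ strictly dominates the primary-branch value $a_0^2\cos a_0<(\pi/2)^2$; hence every higher crossing curve lies strictly above the primary one and cannot cut into $S$ --- those crossings only add further right-half-plane root pairs to already unstable configurations. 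Without this comparison, your assertion that $S$ is ``genuinely one connected stability region with no re-stabilization windows'' is unsupported precisely at the step you yourself identified as the main obstacle. The same comparison is also what rigorously completes your exclusion of $s_1\ge\tfrac{\pi}{2}$: instability at small $s_2>0$ persists for all $s_2$ because no crossing curve over that $s_1$-range can restabilize the system.
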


\begin{rem} Theorem \ref{thm: main0} also holds for $\tau = 0$. In the absence of time-delay, the decomposed characteristic polynomial of \eqref{eq: sys2} reads $\Delta_i(s)=s^2+\lambda_i\,s+\lambda_i\beta$ with roots $\frac{\lambda_i}{2}\pm\frac{1}{2}\sqrt{\lambda_i^2-4\lambda_i\, \beta}$ that both lie on the left half plane.  We note however that conclusion of Theorem \ref{thm: main0} seizes to hold when $\beta=0$. In this case, vehicles will align their velocities but not their relative distances.
\end{rem}

\begin{rem}
Stability properties of noise-free system \eqref{eq: sys0} were previously analyzed in \cite{Yu2010}. Theorem \ref{thm: main0} proposes conditions that are more suitable for the analysis to follow. 
\end{rem}

\begin{figure}\center
\includegraphics[trim=20 0 30 10, clip,width=8cm]{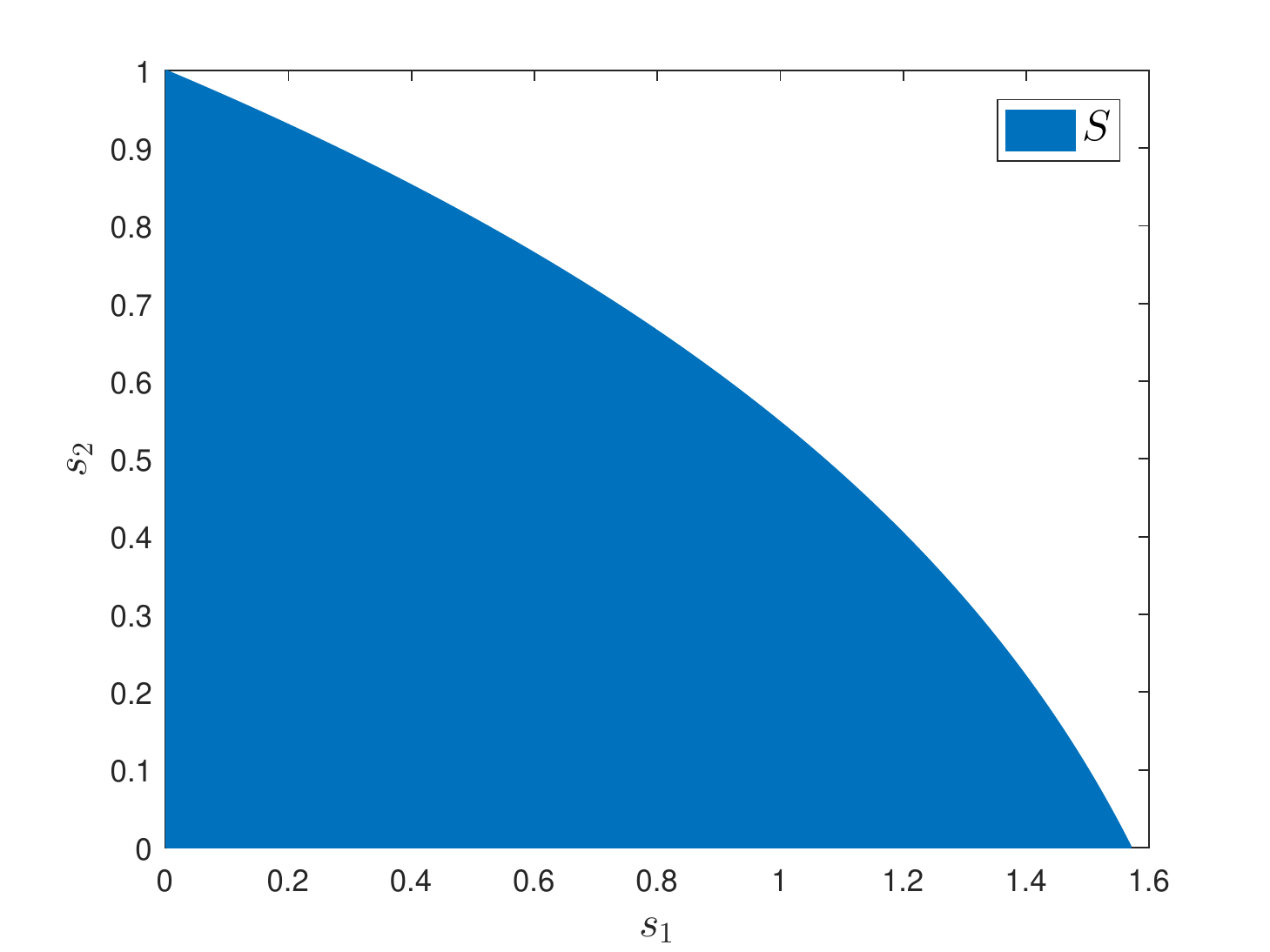}
\caption{The blue area is stability region $S$ as defined in \eqref{eq: S}.}\label{fig: stability}
\end{figure}

\subsection{Steady-State Statistics of Positions and Velocities}
System \eqref{eq: sys2} can be decomposed into two-dimensional subsystems with state variables $\big[z_t^{(i)},\upsilon_t^{(i)}\big]^T$ for $i=1,\dots,n$. We apply the variation of parameters formula for It\^{o} calculus \cite{Mohammed84} to express the solution of the decoupled subsystems 
\begin{equation*}
\begin{bmatrix}
z_t^{(i)}\\
\upsilon_t^{(i)}
\end{bmatrix}=
\Phi_i(t)\begin{bmatrix}
z_0^{(i)}\\
\upsilon_0^{(i)}
\end{bmatrix}+\int_{-\tau}^0\Phi_i(t-s)\mathbf r_s\,ds+g\int_0^t\Phi_i(t-s)B_i \,d\boldsymbol \xi_s.
\end{equation*} 
According to the conditions of Theorem \ref{thm: main0},  the principal solution $\Phi_i(t)$ of the unperturbed system, see \eqref{eq: sys3}, is exponentially decaying (stable) with respect to the consensus equilibrium. The vector $\mathbf r_s=\mathbf r(z_s^{(i)},\upsilon_s^{(i)})=[0 \Sp, \Sp-\lambda_i\beta z_s^{(i)} -\lambda_i \upsilon_s^{(i)}]^T$ depends on the initial functions and
$B_i=\big[
\mathbf 0_{1\times n} \Sp, \Sp
\mathbf q_i^T
\big]^T$.  For each $i \in \{2,\dots,n\}$, the process $\big\{\big[z^{(i)}_t,\upsilon^{(i)}_t\big]^T\big\}_{t\geq -\tau}$ is well-defined and as $t\rightarrow \infty$ it converges, in distribution, to the bi-variate normal $\mathcal N\big(\mathbf 0,\Sigma_{\infty}^{(i)}\big)$ with covariance \begin{equation}\label{eq: cov}\Sigma_{\infty}^{(i)}:=g^2 \int_0^{\infty}\Phi_i(s)\begin{bmatrix}
0 &  0\\
0 & 1 
\end{bmatrix}\Phi_i^T(s)\,ds.\end{equation} 
The steady-state statistics, which are free from the transient effects of initial functions, carry the effects of the persistent network features, i.e., the communication topology, time-delay, and  statistics of the exogenous uncertainties. Explicit calculation of $\Sigma_\infty^{(i)}$ is neither feasible nor useful. We are interested in studying events that are related to the relative distance between vehicles. We are thus only concerned with the marginal statistics of the above system, rather than the full state, i.e., the statistics of $\overline{z}^{(i)}:=\lim_{t\rightarrow \infty}z_t^{(i)}$ for all $i \in \{2,\dots,n\}$. 
 
\begin{lem}\label{thm: main1} Suppose that  conditions of Theorem \ref{thm: main0} hold.  Then, $$\overline{z}^{(i)}\sim \mathcal N\bigg( 0 \Sp ,\Sp \frac{g^2\tau^3}{2\pi} \Sp f\big(\lambda_i\tau,\beta\tau\big) \bigg), $$  for all $i=2,\dots,n$, where  $f:S\rightarrow \mathbb R$ is defined as
\begin{equation}\label{eq: functionf}
f\big(s_1,s_2\big)=\int_{\mathbb R}\frac{d r}{\big(s_1 s_2-r^2 \cos(r)\big)^2+r^2\big(s_1-r \sin(r)\big)^2}.
\end{equation}  
\end{lem}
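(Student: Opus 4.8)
The plan is to reduce the statement to computing the $(1,1)$ entry of the steady-state covariance matrix $\Sigma_\infty^{(i)}$ in \eqref{eq: cov}, since the excerpt already establishes that $\big[z_t^{(i)},\upsilon_t^{(i)}\big]^T$ converges in distribution to $\mathcal N\big(\mathbf 0,\Sigma_\infty^{(i)}\big)$. The marginal of a bivariate centered Gaussian is itself a centered Gaussian, so $\overline z^{(i)}\sim \mathcal N\big(0,[\Sigma_\infty^{(i)}]_{11}\big)$, and the only remaining task is to evaluate $[\Sigma_\infty^{(i)}]_{11}$ in closed form. Writing $\phi(s):=[\Phi_i(s)]_{12}$, a direct multiplication in \eqref{eq: cov} gives $[\Sigma_\infty^{(i)}]_{11}=g^2\int_0^\infty \phi(s)^2\,ds$, because the weighting matrix $\mathrm{diag}(0,1)$ selects only the second column of $\Phi_i$.

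Next I would identify $\phi(s)$ with an impulse response in the frequency domain. The principal solution $\Phi_i$ solves the unperturbed delay system $\dot\Phi_i(t)=A_0\Phi_i(t)+A_1\Phi_i(t-\tau)$ with $A_0=\big[\begin{smallmatrix}0&1\\0&0\end{smallmatrix}\big]$ and $A_1=\big[\begin{smallmatrix}0&0\\-\beta\lambda_i&-\lambda_i\end{smallmatrix}\big]$, whose characteristic matrix $sI-A_0-A_1e^{-s\tau}$ has determinant $\Delta_i(s)=s^2+\lambda_i s\,e^{-s\tau}+\beta\lambda_i e^{-s\tau}$. Computing the relevant cofactor shows that $\phi$ is exactly the inverse Laplace transform of the scalar transfer function $H(s)=1/\Delta_i(s)$ from the velocity-channel noise input to the position output $z$. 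Under the conditions of Theorem \ref{thm: main0}, $\Delta_i$ has all its roots in the open left half-plane, so $\phi$ decays exponentially, $\phi\in L^2(0,\infty)$, and $H$ has no poles on the imaginary axis. These facts let me invoke Plancherel's theorem to rewrite $\int_0^\infty\phi(s)^2\,ds=\frac{1}{2\pi}\int_{\mathbb R}|H(j\omega)|^2\,d\omega=\frac{1}{2\pi}\int_{\mathbb R}|\Delta_i(j\omega)|^{-2}\,d\omega$.

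It then remains to evaluate $|\Delta_i(j\omega)|^2$ and rescale. The convenient trick is to multiply $\Delta_i(j\omega)$ by the unit-modulus factor $e^{j\omega\tau}$, which clears the delay exponential from the two lower-order terms and yields $e^{j\omega\tau}\Delta_i(j\omega)=(\beta\lambda_i-\omega^2\cos\omega\tau)+j\,(\lambda_i\omega-\omega^2\sin\omega\tau)$, so that $|\Delta_i(j\omega)|^2=(\beta\lambda_i-\omega^2\cos\omega\tau)^2+\omega^2(\lambda_i-\omega\sin\omega\tau)^2$. Substituting $r=\omega\tau$ and factoring $\tau^{-4}$ out of the denominator turns $\beta\lambda_i\tau^2$ into $s_1s_2$, $\omega^2\tau^2$ into $r^2$, and $\lambda_i\tau$ into $s_1$, with $s_1=\lambda_i\tau$ and $s_2=\beta\tau$; the Jacobian $d\omega=\tau^{-1}dr$ supplies the final power of $\tau$. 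Collecting factors produces the prefactor $\tfrac{g^2\tau^3}{2\pi}$ multiplying precisely the integrand that defines $f$ in \eqref{eq: functionf}, which is the claimed result.

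The main obstacle is the frequency-domain step: justifying Plancherel rigorously for a delay system requires verifying that the $(1,2)$ entry of the principal solution is genuinely the inverse transform of $1/\Delta_i$ on a strip containing the imaginary axis, and that it lies in $L^2(0,\infty)$. Because $\Delta_i$ is a quasi-polynomial, i.e. an entire function carrying exponential terms rather than an ordinary polynomial, I cannot fall back on partial fractions or residue bookkeeping; instead I would lean on the exponential stability granted by Theorem \ref{thm: main0} to guarantee both the integrability of $\phi$ and the analyticity of $H$ up to the imaginary axis, after which the $L^2$ Parseval identity applies verbatim.
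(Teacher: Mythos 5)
Your proof is correct and takes essentially the same route as the paper's: the paper likewise reduces to the marginal variance $\mathbf e_1^T\Sigma_\infty^{(i)}\mathbf e_1$, applies Parseval's identity with the transfer function $G_i(j\omega)=g\,\mathbf e_1^T\big(j\omega I-A_0-A_1e^{-j\omega\tau}\big)^{-1}B_i$ whose squared magnitude is $g^2/|\Delta_i(j\omega)|^2$ (exactly your $g^2|H(j\omega)|^2$), expands $|\Delta_i(j\omega)|^2=(\lambda_i\beta-\omega^2\cos(\omega\tau))^2+\omega^2(\lambda_i-\omega\sin(\omega\tau))^2$ by the same unit-modulus manipulation, and finishes with the same substitution $\overline\omega=\omega\tau$. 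Your explicit identification of $[\Phi_i(s)]_{12}$ as the inverse Laplace transform of $1/\Delta_i(s)$ and your stability-based justification of Plancherel simply make rigorous the steps the paper leaves implicit.
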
 
The result implies that $\big\{\overline{z}^{(i)}\big\}_{i \in \{ 1,\dots, n-1\}} \in \mathcal L^2(\mathbb R)$ if and only if  $(\lambda_{i+1}\tau,\beta\tau )\in S$. The function $f$ is well-defined in the stability region $S$, however, it cannot be calculated in an explicit form. To address this challenge, we propose in Section \ref{sect: approx} an efficient rational approximation of function $f$. 

\section{Value-At-Risk Measures}\label{sect: risk}  

The standard deviation of a random variable in $\mathcal L^2(\mathbb R)$ is one the common ways to quantify the uncertainty level encapsulated in that random variable. The notion of risk provides a more comprehensive and meticulous way to measure uncertainty in a random variable. Risk measures are defined either in terms of moments of a  random variable or its distribution  \cite{rockafellar07,follmer11}. In this work, we focus on the latter type of risk, known as value-at-risk measures\footnote{For moment-based risk analysis of a class of linear dynamical networks, we refer to \cite{DBLP:journals/corr/abs-1801-06856} for more details.}. These risk measures quantify the manner with which the uncertainty, nested in a random variable, steers its realization close to some undesirable range of values. Let us denote the set of undesirable values, which is also referred to as the set of systemic events,   by $U\subset \mathbb R$. Then, the higher the risk on a random variable, the more likely that random variable will approach $U$. 

\begin{defn}\label{def: event} Let $(\Omega, \mathcal F, \mathbb P)$ be a probability space, $y: \Omega\rightarrow \mathbb R$ and $U\subset \mathbb R$.  The set of systemic events of $y$ is $\big\{\omega \in \Omega ~|~ y(\omega) \in U\big\}\in \mathcal F $.
\end{defn}
We will evaluate the risk of $\big\{\omega \in \Omega ~|~ y(\omega) \in U\big\}$ leveraging the distribution of $y$. The idea is to construct a set structure to measure the distance of $y$ from $U$. Then, the risk of systemic events for $y$ will be defined on the basis of this structure. 

\subsection{Value-at-Risk of Scalar Events} 
Suppose that it is desirable for random variable $y\in \mathcal L^2(\mathbb R)$ to stay away from  the set $U$. Let us consider a collection of super-sets $\{U_{\delta}\}_{\delta\in \mathbb R_+}$ of $U$ with the following properties:
\vspace{0.2cm}
\begin{enumerate}
\item[($\Pi_1$)] $U_{\delta_1}\subset U_{\delta_2}$ when  $\delta_1>\delta_2$ 
\item[($\Pi_2$)] $\lim_{n\rightarrow\infty} U_{\delta_n}=\bigcap_{n=1}^{\infty}U_{\delta_n}=U$ for any sequence $\{\delta_{n}\}_{n=1}^{\infty}$ with property  $\lim_{n\rightarrow\infty}\delta_n=\infty$.
\end{enumerate}
\vspace{0.2cm}
The collection $\{U_{\delta}\}_{\delta\in \mathbb R_+}$ can be further shaped  to cover a suitable vicinity of $U$. This vicinity plays the role of an alarm zone and yields high risk indexes as $y$ approaches $U$.  For some specific  $\delta >0$, the occurrence of $\big\{ y\in U_{\delta}\big\}$ signifies how close $y$ can  get to $U$ in probability. The idea is implemented with the use of quantile functions on the systemic events of $y$. For a given parameter $\varepsilon \in (0,1)$, the risk measure $\mathcal R_{\varepsilon}~:~\mathcal F \rightarrow \mathbb R_+$ is defined by
\begin{equation}\label{eq: risk0}
\mathcal R_{\varepsilon}=\inf\Big\{\delta>0~\Big|~\mathbb P \big\{y \in U_\delta \big\} <  \varepsilon \Big\}.
\end{equation} The number $\varepsilon$ is the cut-off value that characterizes the level of confidence on the systemic events. The smaller its value, the higher the confidence of the index $\mathcal R_{\varepsilon}$.  Let us elaborate and  interpret what typical values of $\mathcal R_{\varepsilon}$ imply. The case  $\mathcal R_{\varepsilon}=0$ signifies that the probability of observing $y$ dangerously close to $U$ is less than $\varepsilon$. We have  $\mathcal R_{\varepsilon}>0$ iff $y\in U_\delta$ for some $\delta>0$ (in fact, $\delta>\mathcal R_{\varepsilon}$) with probability greater than $\varepsilon$. The extreme case $\mathcal R_{\varepsilon}=\infty$ means that the event that $y$ is to be found in $U$ is assigned a probability greater than $\varepsilon$. In addition to several interesting properties (see for instance  \cite{somyasnader17,follmer11,rockafellar07}), the risk index \eqref{eq: risk0} is  non-increasing with $\varepsilon$.
\begin{prop}\label{prop: monotonicityofrisk} Let $\varepsilon_1,\varepsilon_2\in (0,1)$ and consider the set of undesirable values $U$ together with the collection of supersets  $\{U_\delta\}_{\delta\in \mathbb R_+}$ that satisfy properties $\Pi_1$ and $\Pi_2$. Then,
\begin{equation*}
\mathcal R_{\varepsilon_1}<\mathcal R_{\varepsilon_2} ~~~ \text{if and only if} ~~~ \varepsilon_2<\varepsilon_1.
\end{equation*}
\end{prop}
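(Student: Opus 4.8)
The plan is to reduce the statement to a monotonicity property of the single scalar map
\[
F(\delta) := \mathbb P\{y \in U_\delta\}, \qquad \delta > 0,
\]
and to read $\mathcal R_\varepsilon$ off as a generalized inverse of $F$. First I would record that $F$ is non-increasing: property $\Pi_1$ states that $\delta_1 > \delta_2$ forces $U_{\delta_1} \subset U_{\delta_2}$, and monotonicity of $\mathbb P$ then yields $F(\delta_1) \le F(\delta_2)$. Hence, for fixed $\varepsilon$, the feasible set in \eqref{eq: risk0}, namely $A_\varepsilon := \{\delta>0 : F(\delta) < \varepsilon\}$, is an up-set: if $\delta \in A_\varepsilon$ and $\delta' > \delta$ then $F(\delta') \le F(\delta) < \varepsilon$, so $\delta' \in A_\varepsilon$. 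Thus $A_\varepsilon$ is a right ray whose left endpoint is $\mathcal R_\varepsilon = \inf A_\varepsilon$.

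With this structure the forward implication is the routine half. I would argue its contrapositive: assuming $\varepsilon_1 \le \varepsilon_2$, every $\delta$ with $F(\delta) < \varepsilon_1$ also satisfies $F(\delta) < \varepsilon_2$, so $A_{\varepsilon_1} \subseteq A_{\varepsilon_2}$ and therefore $\mathcal R_{\varepsilon_2} = \inf A_{\varepsilon_2} \le \inf A_{\varepsilon_1} = \mathcal R_{\varepsilon_1}$. Consequently $\mathcal R_{\varepsilon_1} < \mathcal R_{\varepsilon_2}$ can occur only when $\varepsilon_2 < \varepsilon_1$, which is the $\Rightarrow$ direction.

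The reverse implication is where the real content lies, since it demands a strict decrease. Here I would suppose, toward a contradiction, that $\varepsilon_2 < \varepsilon_1$ while $\mathcal R_{\varepsilon_1} = \mathcal R_{\varepsilon_2} =: \delta^\star$. Because each $A_\varepsilon$ is a right ray with endpoint $\delta^\star$, every $\delta > \delta^\star$ belongs to $A_{\varepsilon_2}$ and hence obeys $F(\delta) < \varepsilon_2$, whereas every $\delta < \delta^\star$ lies outside $A_{\varepsilon_1}$ and hence obeys $F(\delta) \ge \varepsilon_1$. Taking $\delta \downarrow \delta^\star$ and $\delta \uparrow \delta^\star$ then forces a jump of $F$ at $\delta^\star$ of magnitude at least $\varepsilon_1 - \varepsilon_2 > 0$, so the proof closes once such a jump is excluded, i.e., once $F$ is shown to be continuous. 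The main obstacle is precisely this continuity step: properties $\Pi_1$ and $\Pi_2$ by themselves only pin down the limit as $\delta \to \infty$ (they give $\lim_{\delta\to\infty} F(\delta) = \mathbb P\{y \in U\}$ via continuity of $\mathbb P$ along the decreasing family), and they do not preclude interior jumps. To rule these out I would invoke the standing regularity: by Lemma \ref{thm: main1} the variable $y = \overline{z}^{(i)}$ is Gaussian, hence absolutely continuous with no atoms, so that the probability of the incremental sets $U_{\delta'} \setminus U_\delta$ vanishes as $\delta' \to \delta$ provided the family $\{U_\delta\}$ is continuously nested. Verifying that last property for the explicit collision and detachment supersets built in Section \ref{sect: riskplatoon} — equivalently, that the boundaries of $U_\delta$ sweep continuously in $\delta$ — is the one genuinely model-dependent ingredient, and it confines the strict statement to the range of $\varepsilon$ on which $\mathcal R_\varepsilon$ is finite.
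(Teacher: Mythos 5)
Your proposal is correct in substance and is, if anything, more careful than the paper's own proof, though the two share the same skeleton: both exploit that $\mathcal R_\varepsilon$ is a generalized inverse of the non-increasing map $F(\delta)=\mathbb P\{y\in U_\delta\}$. The differences lie in how each half is closed. For the implication $\mathcal R_{\varepsilon_1}<\mathcal R_{\varepsilon_2}\Rightarrow\varepsilon_2<\varepsilon_1$, the paper sets $\delta_i=\mathcal R_{\varepsilon_i}$, asserts ``by construction'' that $\mathbb P\{y\in U_{\delta_i}\}=\varepsilon_i$, and then extracts strictness from the strict inclusion $U_{\delta_2}\subset U_{\delta_1}$; this tacitly needs both attainment of the infimum with equality (continuity of $F$) and positive probability mass on $U_{\delta_1}\setminus U_{\delta_2}$. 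Your contrapositive argument, using only that $A_{\varepsilon_1}\subseteq A_{\varepsilon_2}$ when $\varepsilon_1\le\varepsilon_2$, proves the same implication with no regularity whatsoever, which is a genuine improvement. For the converse, the paper's feasibility argument rigorously yields only $\mathcal R_{\varepsilon_1}\le\mathcal R_{\varepsilon_2}$ and silently upgrades $\le$ to $<$; you correctly locate the missing ingredient --- a jump of $F$ at the common value $\delta^\star$ must be excluded, i.e.\ $F$ must be continuous, which $\Pi_1$--$\Pi_2$ alone do not imply but Gaussianity of $\overline{x}^{(i+1)}-\overline{x}^{(i)}$ together with the continuous nesting of the explicit families $C_\delta$ and $D_\delta$ does. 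In short, you made explicit the regularity hypothesis that the paper uses tacitly in both halves.

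One residual caveat: your jump argument needs $\delta^\star$ to be approachable from both sides, so it fails not only when $\mathcal R_{\varepsilon_1}=\mathcal R_{\varepsilon_2}=\infty$ (empty feasible sets, where $F(\delta)\ge\varepsilon_1$ everywhere is consistent with continuity) but also when $\mathcal R_{\varepsilon_1}=\mathcal R_{\varepsilon_2}=0$: if $F(\delta)<\varepsilon_2$ for all $\delta>0$, the left-sided inequality $F(\delta)\ge\varepsilon_1$ is vacuous, no jump is forced, and both risks vanish with $F$ perfectly continuous. This is not hypothetical in the present model --- by \eqref{eq: riskcolformula} the risk is identically $0$ whenever $\varepsilon\ge\tfrac12$ or $\sigma_i$ is sufficiently small, so strict monotonicity genuinely fails on those branches; the proposition therefore requires $\mathcal R_{\varepsilon_i}\in(0,\infty)$, not merely the finiteness your closing sentence stipulates. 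The paper's proof carries the same lacuna (its identity $\mathbb P\{y\in U_{\delta_i}\}=\varepsilon_i$ breaks on the extreme branches), so this is a defect of the statement as posed rather than of your argument specifically, but your write-up should tighten the caveat accordingly.
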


The motivation for risk in terms of quantile functions \eqref{eq: risk0} emanates from the fact that \eqref{eq: sys0} admits stochastic dynamics with random variables in $\mathcal L^2$ with tractable distributions. It is then desirable to monitor the stochastic volatility of desired observables w.r.t. to a specific subset of $\mathbb R$. 

\subsection{Value-at-Risk of Vector of  Events} 

For the case of random vectors $\mathbf y\in \mathcal L^2(\mathbb R^q)$, we first extend the notion of super-sets by considering the product set $U_{\boldsymbol \delta}=U_{\delta_1}\times \cdots \times U_{\delta_{q}}$, where $\boldsymbol \delta = [\delta_1,\dots,\delta_q]^T$. Similar to the scalar case, each sequence $\big\{U_{\delta_i} \big\}_{\delta_i \in \R_+}$ is assumed to satisfy properties $\Pi_1$ and $\Pi_2$. The multi-dimensional extension of Definition \ref{def: event} includes systemic events constructed through combination of set operations. One scenario, for example, is through the union operation 
$$
\left\{ \omega\in \Omega ~\Bigg|~ \bigcup_{i=1}^q \left\{y^{(i)}(\omega)\in U,~i=1,\dots,q \right\}\right\}\in \mathcal F.$$ 
In this case, the associated risk measure becomes \begin{equation}\label{eq: globalvectorrisk}\begin{split}
\boldsymbol{\mathcal R}_{\varepsilon}&=\inf\bigg\{\boldsymbol{\delta}\in \mathbb R^{q}_+ ~\bigg |~  \mathbb P\bigg( \bigcup_{i=1}^{q} \left\{ y^{(i)} \in U_{\delta_i} \right\} \bigg)<\varepsilon  \bigg\}.
\end{split}
\end{equation} A moment of reflection on \eqref{eq: globalvectorrisk} reveals that their calculation requires treating multivariable distributions. Unfortunately, these are rarely expressed in closed form. A computationally efficient surrogate is the vector of scalar risks, defined as \begin{equation}\label{eq: vectorrisk_s}
\mathfrak{R}_{\varepsilon}=\big[\mathcal R_\varepsilon^{(1)},\mathcal R_\varepsilon^{(2)},\dots,\mathcal R_\varepsilon^{(q)}\big]^T,
\end{equation}
where  
\[ \mathcal R_{\varepsilon}^{(i)}=\inf \Big\{ \delta>0~\Big|~ \mathbb P\big\{y^{(i)}\in U_{\delta}\big\}<\varepsilon\Big\}\] 
for $i=1,\dots,q$. The vector $\mathfrak{R}_{\varepsilon}$ is a collection of the scalar risk measures based on the individual distributions of $y^{(i)}$ for $i=1,\dots,q$. In section \ref{sect: jointrisk}, we revisit this part by formulating several vector of systemic events that are of interest to risk analysis of the platooning problem. Furthermore, we investigate their relations with the more computationally tractable vector $\mathfrak{R}_{\varepsilon}$.

\section{Risk Of Single Systemic Events In The Platoon} \label{sect: riskplatoon}
We consider systemic events of inter-vehicle  collision and detachment between two successive vehicles in the platoon. 
Let us represent the relative position of vehicles $i$ and $i+1$ in steady-state by the random variable $$\lim_{t\rightarrow \infty} \left( x_t^{(i+1)}-x_t^{(i)} \right) \Sp = \Sp \overline{x}^{(i+1)}-\overline{x}^{(i)},$$ 
where its statistics can be directly inferred from those of $\overline{\mathbf z}$. 
\begin{thm}\label{cor: main1} Let the conditions of Theorem \ref{thm: main1} hold. Then, the steady-state solution of \eqref{eq: sys0} satisfies 
$$\overline{x}^{(i+1)}-\overline{x}^{(i)} \sim \mathcal N\big(d,\sigma^2_i\big)$$
with 
$$\sigma^2_i \Sp = \Sp g^2 \Sp \frac{\tau^3}{2\pi} \Sp \sum_{j=2}^n \Sp \left([\mathbf e_{i+1}-\mathbf e_i]^T\mathbf q_j\right)^2 \Sp f\big(\lambda_j\tau,\beta\tau\big)$$ for $i=1 ,\dots, n-1$ and $f(s_1,s_2)$ as in \eqref{eq: functionf}. 
\end{thm}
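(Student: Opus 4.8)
The plan is to transport the scalar result of Lemma~\ref{thm: main1} back to the physical coordinates through the orthogonal map $\mathbf z_t = Q^T(\mathbf x_t-\mathbf d)$. Since $Q$ is orthogonal, I would first invert this to $\mathbf x_t = \mathbf d + Q\mathbf z_t$ and read off the relative position of two successive vehicles as the linear functional
\begin{equation*}
x_t^{(i+1)}-x_t^{(i)} = [\mathbf e_{i+1}-\mathbf e_i]^T\mathbf x_t = (d_{i+1}-d_i) + \sum_{j=1}^n \big([\mathbf e_{i+1}-\mathbf e_i]^T\mathbf q_j\big)\,z_t^{(j)}.
\end{equation*}
With the prescribed offsets $d_i = id$ the constant term is $d_{i+1}-d_i = d$, which will furnish the mean of the claimed normal law.

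The decisive structural observation, which I would make next, is that the $j=1$ summand drops out: because $\mathbf q_1 = \tfrac{1}{\sqrt n}\mathbbm 1_n$, the coefficient $[\mathbf e_{i+1}-\mathbf e_i]^T\mathbf q_1 = \tfrac{1}{\sqrt n}(1-1) = 0$. This cancellation is exactly what renders a relative position a bona fide $\mathcal L^2$ observable even though the consensus mode $z^{(1)}$ --- governed by a pure double integrator since $\lambda_1 = 0$ --- does not converge. The surviving sum therefore runs only over the stable modes $j = 2,\dots,n$, each of which possesses the well-defined steady-state limit $\overline z^{(j)}$ supplied by Lemma~\ref{thm: main1}. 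Passing to $t\to\infty$ gives
\begin{equation*}
\overline x^{(i+1)}-\overline x^{(i)} = d + \sum_{j=2}^n \big([\mathbf e_{i+1}-\mathbf e_i]^T\mathbf q_j\big)\,\overline z^{(j)}.
\end{equation*}

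To conclude, I would establish that $\{\overline z^{(j)}\}_{j=2}^n$ is jointly Gaussian with diagonal covariance. Joint normality is inherited from the linear--Gaussian structure of \eqref{eq: sys2}; the vanishing of the cross terms follows because the transformed noise $Q^T d\boldsymbol\xi_t$ is again a standard Brownian motion with independent components (an orthogonal image preserves this), while the subsystems in \eqref{eq: sys2} are fully decoupled by $\Lambda = \mathrm{diag}[\,0,\lambda_2,\dots,\lambda_n]$. Distinct modes thus evolve under independent forcing and decoupled dynamics, so $\overline z^{(j)}$ and $\overline z^{(k)}$ are uncorrelated for $j\ne k$. A linear combination of independent zero-mean Gaussians is Gaussian, whence the displayed variable is normal with mean $d$ and with variance $\sum_{j=2}^n \big([\mathbf e_{i+1}-\mathbf e_i]^T\mathbf q_j\big)^2\,\mathrm{Var}\big(\overline z^{(j)}\big)$; inserting the modal variances $\tfrac{g^2\tau^3}{2\pi}f(\lambda_j\tau,\beta\tau)$ from Lemma~\ref{thm: main1} reproduces
\begin{equation*}
\sigma_i^2 = g^2\,\frac{\tau^3}{2\pi}\sum_{j=2}^n \big([\mathbf e_{i+1}-\mathbf e_i]^T\mathbf q_j\big)^2\, f(\lambda_j\tau,\beta\tau).
\end{equation*}

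The hard part will be justifying the cross-mode independence (equivalently, that the off-diagonal steady-state covariances vanish) together with the interchange of the $t\to\infty$ limit and the finite linear combination. The former rests cleanly on the orthogonal invariance of the noise and the diagonal structure of $\Lambda$, and the latter is routine once each $\overline z^{(j)}$ is known to converge in distribution; nonetheless both steps merit an explicit word rather than being taken for granted.
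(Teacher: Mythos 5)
Your proposal is correct and takes essentially the same route as the paper's proof: both transport the modal statistics of Lemma~\ref{thm: main1} through the orthogonal map $\mathbf x_t = Q\mathbf z_t + \mathbf d$, annihilate the unstable consensus mode via $[\mathbf e_{i+1}-\mathbf e_i]^T\mathbf q_1 = 0$ (the paper phrases this as $q_{i+1,1}=q_{i,1}$), and sum the modal variances after letting $t\to\infty$. Your cross-mode independence argument via orthogonal invariance of Brownian motion is the same mechanism the paper encodes by observing that the stacked covariance $g^2\int_0^t \mathrm{diag}\big\{\Phi_j(t-s)B_jB_j^T\Phi_j^T(t-s)\big\}\,ds$ is block-diagonal because $B_jB_k^T \propto \mathbf q_j^T\mathbf q_k = 0$ for $j\neq k$.
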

We utilize the result of this theorem to calculate risk of inter-vehicle collision and detachment for two successive vehicles.

\subsection{Inter-Vehicle Collision} 

In absence of noise and under the conditions of Theorem \ref{thm: main0}, the steady-state distance between two successive vehicles satisfy $\overline{x}^{(i+1)}-\overline{x}^{(i)}\equiv d$.
The differential term $g\,d\boldsymbol \xi_t$ forces $x^{(i+1)}_t-x^{(i)}_t$ to fluctuate around $d$. Vehicles $i+1$ and $i$ experience a collision if $x_t^{(i+1)}=x_{t}^{(i)}$ at some $t > 0$.  If $x_t^{(i+1)}-x_{t}^{(i)}<0$, then the collision has already occurred at some time prior to  $t$. When  $x^{(i+1)}_t-x^{(i)}_t$ is positive, but close to zero, this will be an alert for a near collision. Let us define $(0,d/c)$ to be the zone of potential  collisions, where parameter $c\geq 1$ models the fluctuation magnitude and   determines  the upper endpoint of the collision set. For example, the collision set $c=1$ is $(0,d)$, which  allows no tolerance  for any deviation from the target distance $d$ between the two consecutive vehicles. For $c=2$, the collision set $(0,d/2)$ implies that there is no risk involved with the relative distance between two consecutive vehicles being in $[d/2,d]$. In summary, we can use the steady-steady statistics and infer whether vehicles $i$ and $i+1$ have already collided or are dangerously close to each other if $\overline{x}^{(i+1)}-\overline{x}^{(i)}\in (-\infty,0] $ or $\overline{x}^{(i+1)}-\overline{x}^{(i)}\in (0,d/c)$, respectively. The union of the two disjoint sets define the family of parameterized events 
\begin{equation*}\label{eq: collisionevent}
\bigg\{ \overline{x}^{(i+1)}-\overline{x}^{(i)} \in  C_{\delta}  \bigg\} \hspace{0.2in} \text{where} \hspace{0.2in} C_{\delta} =\bigg(-\infty~,~\frac{d}{\delta+c} \bigg)
\end{equation*} for $\delta\in \mathbb R_+$. It is straightforward to verify that the collection $\{C_\delta\}_{\delta\geq 0}$ satisfies properties $\Pi_1$ and $\Pi_2$. Then, the associated risk measure is defined as\footnote{\noindent We read $\mathcal R_{\varepsilon}^{C,i}$ as the risk of collision between vehicles $i+1$ and $i$ with confidence level $\varepsilon$. Similar notation is used for detachment risk.} 
\begin{equation}\label{eq: risk}
\mathcal R_{\varepsilon}^{C,i}=\inf \Big\{\delta>0 ~\Big|~\mathbb P \left\{\overline{x}^{(i+1)}-\overline{x}^{(i)}\in C_\delta \right\}<\varepsilon \Big\}
\end{equation}
for confidence level $\varepsilon\in (0,1)$. In other words, $\mathcal R_{\varepsilon}^{C,i}$ is the safety margin below which the likelihood of a past collision or a new one to be developed  is less than $\varepsilon$. The larger the value of risk, the higher the probability of the two vehicles being vulnerable to a collision. 

\begin{thm}\label{thm: main2} Suppose that the conditions of Theorem \ref{thm: main0} hold. For every $i \in \{1,\dots,n-1\}$, the risk of inter-vehicle  collision is
\begin{equation}\label{eq: riskcolformula}
\mathcal R_{\varepsilon}^{C,i}=\begin{cases}
0~\hspace{0.8in}& \text{if}~~~ \sigma_{i}\leq \frac{d}{\kappa_{\varepsilon}\sqrt{2}}\frac{c-1}{c}~~\text{or}~~\varepsilon\geq \frac{1}{2}\\
\frac{d}{d-\kappa_\varepsilon \sigma_i \sqrt{2}}-c~& \text{if}~~~\frac{d}{\kappa_{\varepsilon}\sqrt{2}}\frac{c-1}{c}<\sigma_i<\frac{d}{\kappa_{\varepsilon}\sqrt{2}}\\
\infty~ & \text{if}~~~\sigma_i\geq \frac{d}{\kappa_{\varepsilon}\sqrt{2}}
\end{cases}
\end{equation}
where $\sigma_i^{2}$ is as in Theorem \ref{cor: main1} and $\kappa_\varepsilon:=\text{{\normalfont erf}}^{-1}(1-2\varepsilon)> 0 $.
\end{thm}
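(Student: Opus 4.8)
The plan is to turn the computation of $\mathcal R_\varepsilon^{C,i}$ into an explicit inversion of a Gaussian tail. By Theorem \ref{cor: main1}, the steady-state gap $Y:=\overline{x}^{(i+1)}-\overline{x}^{(i)}$ is distributed as $\mathcal N(d,\sigma_i^2)$, so the probability appearing in the definition \eqref{eq: risk} admits a closed form: since $C_\delta=(-\infty,\,d/(\delta+c))$, standardizing gives
\[
\mathbb P\{Y\in C_\delta\}=\frac12\left[1+\mathrm{erf}\!\left(\frac{1}{\sigma_i\sqrt2}\Big(\frac{d}{\delta+c}-d\Big)\right)\right].
\]
The first fact I would record is that $\delta\mapsto d/(\delta+c)$ is strictly decreasing on $\mathbb R_+$ and $\mathrm{erf}$ is strictly increasing, so $\delta\mapsto\mathbb P\{Y\in C_\delta\}$ is continuous and strictly decreasing. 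Hence the feasible set $\{\delta>0:\mathbb P\{Y\in C_\delta\}<\varepsilon\}$ is a half-line, and its infimum is controlled entirely by the single crossing point where the probability equals $\varepsilon$, subject to the clipping $\delta>0$.

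Next I would invert the inequality $\mathbb P\{Y\in C_\delta\}<\varepsilon$. Writing $\kappa_\varepsilon=\mathrm{erf}^{-1}(1-2\varepsilon)$ and using that $\mathrm{erf}$ is odd and increasing, the inequality rearranges cleanly to
\[
\frac{d}{\delta+c}<d-\kappa_\varepsilon\,\sigma_i\sqrt2 .
\]
When the right-hand side is positive, solving for $\delta$ produces the candidate threshold $\delta^\ast=\dfrac{d}{d-\kappa_\varepsilon\sigma_i\sqrt2}-c$, which is precisely the middle branch of \eqref{eq: riskcolformula}.

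The remaining work is a case split governed by the sign and size of $d-\kappa_\varepsilon\sigma_i\sqrt2$ together with the clipping $\delta>0$. If $\sigma_i\ge d/(\kappa_\varepsilon\sqrt2)$, then $d-\kappa_\varepsilon\sigma_i\sqrt2$ is nonpositive while $d/(\delta+c)$ is always strictly positive, so the feasible set is empty and its infimum is $+\infty$, giving the third branch. If instead $\sigma_i< d/(\kappa_\varepsilon\sqrt2)$, then $\delta^\ast$ is finite, and I would compare it with $0$: a short algebraic manipulation shows $\delta^\ast\ge0$ is equivalent to $\sigma_i\ge\frac{d}{\kappa_\varepsilon\sqrt2}\frac{c-1}{c}$, so on that subregion the infimum equals $\delta^\ast$ (middle branch), whereas if $\sigma_i<\frac{d}{\kappa_\varepsilon\sqrt2}\frac{c-1}{c}$ then $\delta^\ast<0$, the entire positive half-line is feasible, and the infimum collapses to $0$ (first branch). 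The extra subcase $\varepsilon\ge\frac12$ also yields zero risk directly: there $c\ge1$ forces the argument of $\mathrm{erf}$ to be negative for every $\delta>0$, so $\mathbb P\{Y\in C_\delta\}<\frac12\le\varepsilon$ on all of $\mathbb R_+$.

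I expect the main difficulty to reside not in any single estimate but in the bookkeeping that stitches the three regimes together—in particular, correctly propagating the constraint $\delta>0$ from the infimum into the threshold $\sigma_i=\frac{d}{\kappa_\varepsilon\sqrt2}\frac{c-1}{c}$, and invoking the strict monotonicity of $\delta\mapsto\mathbb P\{Y\in C_\delta\}$ to justify reading off the infimum as the crossing point rather than arguing about it more delicately.
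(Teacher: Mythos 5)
Your proposal is correct and follows essentially the same route as the paper: both reduce the risk to inverting the Gaussian tail probability $\mathbb P\{\overline{x}^{(i+1)}-\overline{x}^{(i)}\in C_\delta\}$ written via the error function, exploit monotonicity in $\delta$ to read the infimum off the crossing point $\delta^\ast=\frac{d}{d-\kappa_\varepsilon\sigma_i\sqrt2}-c$, and case-split on the sign of $d-\kappa_\varepsilon\sigma_i\sqrt2$ together with the clipping $\delta>0$ (the paper phrases the zero and infinite branches as endpoint conditions at $\delta=0$ and $\delta\to\infty$, which is the same bookkeeping). Your explicit monotonicity observation and the $\varepsilon\geq\frac12$ subcase argument are sound and, if anything, slightly more carefully organized than the paper's version.
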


The two extreme values of $\mathcal R_{\varepsilon}^{C,i}$ are self-explanatory. In the unperturbed case, there is no risk of collision in the long run because asymptotic platooning is achieved exponentially fast. In this case, we have $\sigma_i=0$, which implies $\mathcal R_{\varepsilon}^{C,i}=0$. When noise is present, we have $\sigma_i>0$. For $d$ large enough, vehicles lie far away from each other in expectation. Hence vehicle collision is unlikely to occur. Finally, when the standard deviation $\sigma_i$ exceeds an $\varepsilon$-dependent cut-off, the value of risk is $+ \infty$ in the following sense: collision cannot be avoided with probability higher than $1-\varepsilon$ for this range of $\sigma_i$'s. The curve of $\mathcal R_{\varepsilon}^{C,i}=\mathcal R_{\varepsilon}^{C,i}(\sigma_i)$ is graphically illustrated in Fig. \ref{fig: riskgraph}.

\begin{figure}\center
\includegraphics[trim = 20 0 20 10, clip,width=9cm]{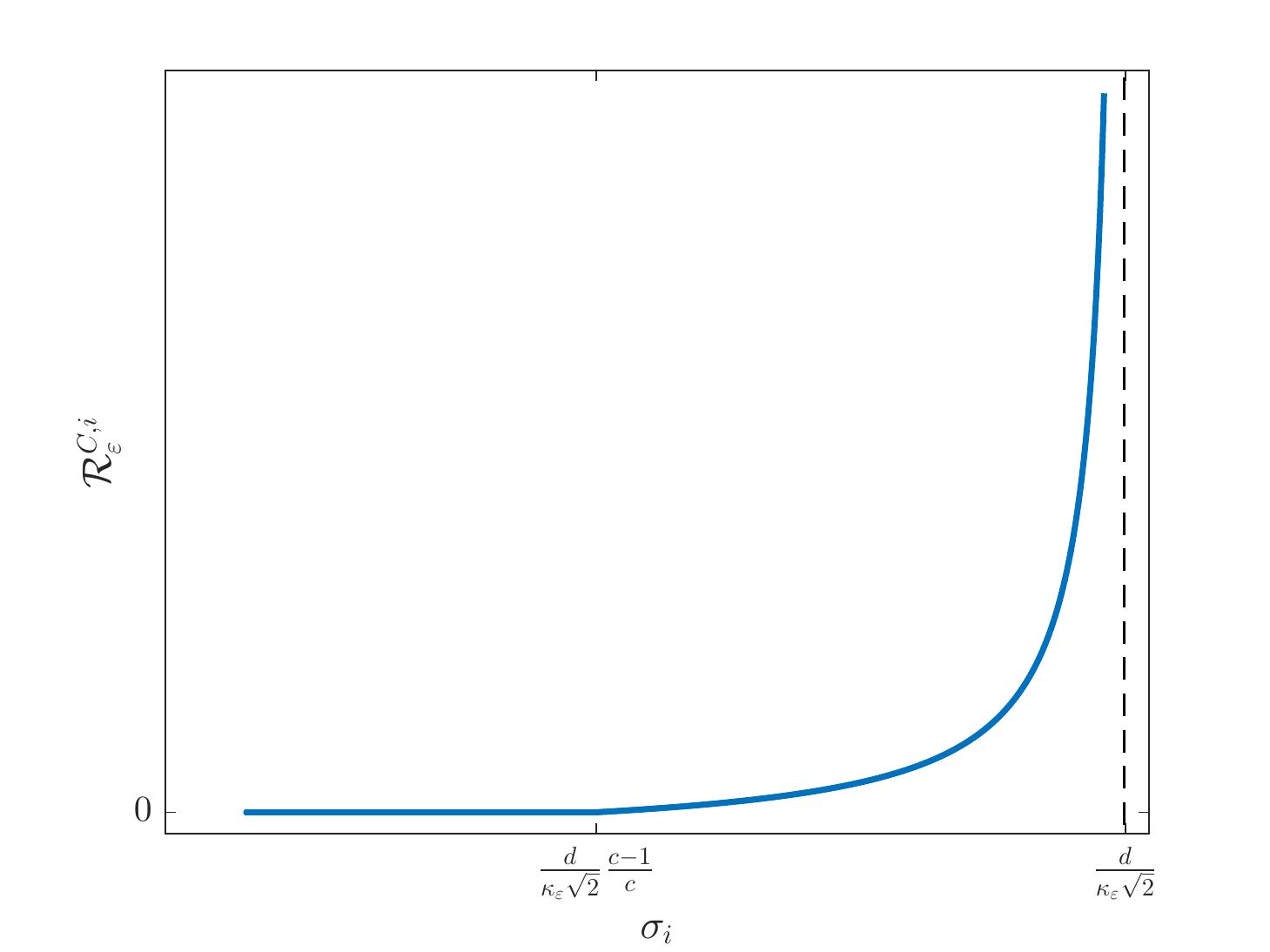}\caption{Graph of the risk of collision as a function of $\sigma_i$. The plot for the risk of detachment is qualitatively equal.}\label{fig: riskgraph}
\vspace{-0pt}
\end{figure}
\subsection{Inter-Vehicle Detachment}
 Cohesive motion in convoys of vehicles implies aligned traveling  of agents within prescribed distance. For platooning ensembles of Fig. \ref{fig: the platoon} any two successive vehicles that steered too distant from each other may trigger undesirable phenomena. A case-study that resonates with the geometry of Fig. \ref{fig: the platoon} is this: In practice, vehicles are equipped with onboard finite-range communication modules. The $i$'th vehicle can establish reliable communication with those vehicles whose positions lie in range $(x_t^{(i)}-r^*,x_t^{(i)}+r^*)$, where $r^*>0$ is the communication range. Thus, a condition like $d\leq r^*$ is necessary to guarantee existence of a  reliable communication topology that guarantees reliable transmission of data in accordance to Assumption \ref{assum0}. Without loss of generality, let us take $$r^*=ad$$ for some parameter $a\geq 1$. In the presence of stochastic noise, it is likely that two successive vehicles move further away from each other beyond $r^*$  and into dangerous communication status.  Detachment between vehicles  $i+1$ and $i$ occurs when their relative position exceeds $r^*$.  It is said that vehicles $i+1$ and $i$ are dangerously close to lose communication and experience detachment  when $x_t^{(i+1)}-x_t^{(i)}$ takes values in $\big[ad-\frac{1}{h},ad\big]$ for some design parameter $h >0$.   Parameter $h$ plays the same role as parameter $c$ in the inter-vehicle collision scenario: the higher the value of $h$, the narrower the length of the alarm zone prior to experiencing detachment. Let us define the family of parameterized events 
\begin{equation*}\label{eq: amputationevent}
\bigg\{ \overline{x}^{(i+1)}-\overline{x}^{(i)} \in  D_{\delta}  \bigg\} \hspace{0.2in} \text{where} \hspace{0.2in} D_{\delta} =\bigg(a d -\frac{1}{\delta+h}~ , ~+\infty\bigg)
\end{equation*} for $\delta\in \mathbb R_+$. 
\noindent Similar to the collision super-sets, the collection $\{D_{\delta}\}_{\delta \in \R_+}$  satisfies properties $\Pi_1$ and $\Pi_2$. The value-at-risk of detachment between vehicles $i+1$ and $i$ is defined as 
\begin{equation}\label{eq: riskamp}
\mathcal R_{\varepsilon}^{D,i}=\inf\left\{\delta>0 ~\Big|~\mathbb P \left\{ \overline{x}^{(i+1)}-\overline{x}^{(i)} \in  D_{\delta} \right\}<\varepsilon \right\}
\end{equation}for fixed confidence level $\varepsilon\in (0,1)$. Following the same steps as in the proof of Theorem \ref{thm: main2},  the closed-form expressions for the  risk of platoon detachment is given by 
\begin{equation}\label{eq: riskampformula}
\mathcal R_{\varepsilon}^{D,i}=\begin{cases}
0~\hspace{0.8in}& \text{if}~~~ \sigma_{i}\leq \frac{(a-1)d-1/h}{\kappa_{\varepsilon}\sqrt{2}} , ~~ \text{or}~~\varepsilon\geq \frac{1}{2}\\
\frac{1}{(a-1)d-\sqrt{2}\kappa_{\varepsilon}\sigma_i}-h~& \text{if}~~~ \frac{(a-1)d-1/h}{\kappa_{\varepsilon}\sqrt{2}}<\sigma_i<\frac{(a-1)d}{\kappa_{\varepsilon}\sqrt{2}}\\
\infty~ & \text{if}~~~\sigma_i\geq \frac{(a-1)d}{\kappa_{\varepsilon}\sqrt{2}}
\end{cases}
\end{equation}
where $\kappa_\varepsilon=\text{erf}^{-1}(1-2\varepsilon),~\varepsilon<\frac{1}{2}$, and $\sigma_i$ is as in Theorem \ref{cor: main1}.  Both collision and detachment scenarios are illustrated in Fig. \ref{fig: scenarios}.

\section{Risk Of Multiple Systemic Events In The Platoon}\label{sect: jointrisk}

We generalize results of the previous section by considering multiple collision and detachment events that involve more than two vehicles and may happen simultaneously. The idea is to define proper super-sets for random vectors $\mathbf y=\big[y^{(1)},\dots,y^{(q)}\big]^T$, where $y^{(i)}=\overline{x}^{(i+1)}-\overline{x}^{(i)}$ for $i=1,\dots,q$. In the following, several scenarios are considered.  

  
\begin{figure}[t]\center
\includegraphics[scale=0.17]{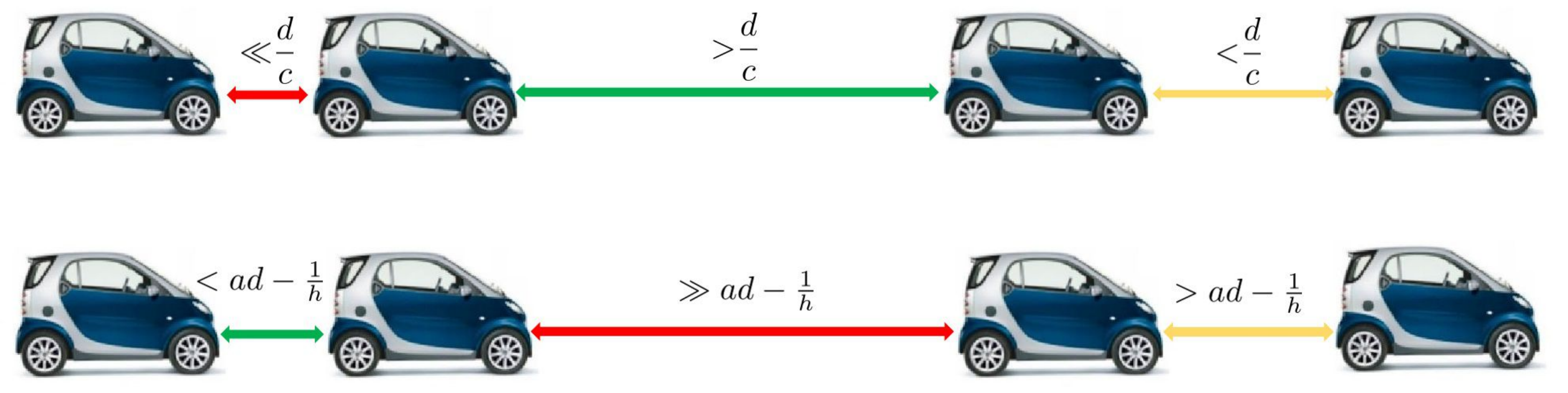}
\caption{Collision and detachment events in platooning. If relative distance between two successive vehicles exceeds $ad-h^{-1}$, then the two vehicles place themselves close in formation of two connected components and into dissolution of flok. If their relative distance is less than $d/c$, then vehicles are dangerously close to experience a collision. Parameters $a,c,d,h$ are  explained in Section \ref{sect: riskplatoon}. The arrows comply with the color code: green implies a safe operating mode, red means an unsafe operating mode, and yellow is an alert for near-collision/detachment events, respectively.}\label{fig: scenarios}
\end{figure}

By constructing the full  conjunction of the individual events discussed in Section \ref{sect: riskplatoon}, we can formulate risk of simultaneous collision and detachment between some pairs of successive vehicles  throughout the platoon via
\begin{equation}\label{eq: jointcuprisk}
\begin{split}
\boldsymbol{\mathcal R}_{\varepsilon}^{C,\cup}&=\inf\left\{\boldsymbol{\delta}\in \mathbb R^{n-1}_+ ~\Bigg |~  \mathbb P\left(  \bigcup_{i=1}^{n-1} \left\{ \overline{x}^{(i+1)}-\overline{x}^{(i)} \in C_{\delta_i} \right\}   \right ) <\varepsilon  \right\}\\
\boldsymbol{\mathcal R}_{\varepsilon}^{D,\cup}&=\inf \left\{\boldsymbol{\delta}\in \mathbb R^{n-1}_+ ~\Bigg |~ \mathbb P\left ( \bigcup_{i=1}^{n-1} \left\{ \overline{x}^{(i+1)}-\overline{x}^{(i)} \in D_{\delta_i} \right\}   \right )<\varepsilon  \right\}.
\end{split}
\end{equation} 
The risk of simultaneous collision and detachment across the platoon is measured by 
\begin{equation}\label{eq: jointcaprisk}\begin{split}
\boldsymbol{\mathcal R}_{\varepsilon}^{C,\cap}&=\inf\left\{\boldsymbol{\delta}\in \mathbb R^{n-1}_+  ~\Bigg |~  \mathbb P\left ( \bigcap_{i=1}^{n-1} \left\{ \overline{x}^{(i+1)}-\overline{x}^{(i)} \in C_{\delta_i} \right\} \right ) <\varepsilon  \right\}\\
\boldsymbol{\mathcal R}_{\varepsilon}^{D,\cap}&=\inf\left\{\boldsymbol{\delta}\in \mathbb R^{n-1}_+  ~\Bigg |~  \mathbb P\left ( \bigcap_{i=1}^{n-1} \left\{ \overline{x}^{(i+1)}-\overline{x}^{(i)} \in D_{\delta_i} \right\} \right ) <\varepsilon  \right\}.
\end{split}
\end{equation} 

The calculation of either \eqref{eq: jointcuprisk} or \eqref{eq: jointcaprisk} in closed-form is mathematically intractable as it requires working with multi-variable normal random variables. Since obtaining an explicit expressions is not feasible, we rely on first-order approximations using the vector of individual surrogates as in \eqref{eq: vectorrisk_s}.  Let us define the vectors of individual risks for collision and detachment as
\begin{equation}\label{eq: vectorrisk}\begin{split}
\mathfrak{R}_{\varepsilon}^{C}&=\left[\mathcal R_\varepsilon^{C,1},\mathcal R_\varepsilon^{C,2},\dots,\mathcal R_\varepsilon^{C,n-1}\right]^T,\\
\mathfrak{R}_{\varepsilon}^{D}&=\left[\mathcal R_\varepsilon^{D,1},\mathcal R_\varepsilon^{D,2},\dots,\mathcal R_\varepsilon^{D,n-1}\right]^T,
\end{split}
\end{equation} \noindent respectively. 

\begin{thm}\label{thm: riskestimates}
Suppose that conditions of Theorem \ref{thm: main0} hold.  The risk of simultaneous collision and detachment of all vehicles satisfies 
\begin{equation*}
\boldsymbol{\mathcal R}^{C,\cap}_{\varepsilon} \subseteq~ \mathbb V_{\mathfrak{R}_\varepsilon^{C}}~~~\text{and}~~~\boldsymbol{\mathcal R}^{D,\cap}_{\varepsilon} \subseteq~ \mathbb V_{\mathfrak{R}_\varepsilon^{D}},
 \end{equation*}
where 
\begin{equation*}\label{epsilon-q}
\mathbb V_{\mathfrak{R}_\varepsilon^{\square}}=\left\{ \left. \left[\begin{array}{c}
\delta_1 \\
\vdots \\
\delta_q
\end{array}\right]  ~\right|\begin{array}{c} 
\vspace{0.06cm}
0 \leq \delta_i \leq \mathcal R_{\varepsilon}^{\square,i} 
\end{array}  \right\} 
\end{equation*}
where  $\square$ is replaced by either $C$ or  $D$, 
and the risk of simultaneous collision and detachment of some of the vehicles satisfy 
\begin{equation*}
\boldsymbol{\mathcal R}^{C,\cup}_{\varepsilon} \subseteq~ \mathbb W_{\mathfrak{R}_\varepsilon^{C}}~~~\text{and}~~~\boldsymbol{\mathcal R}^{D,\cup}_{\varepsilon} \subseteq~ \mathbb W_{\mathfrak{R}_\varepsilon^{D}}
 \end{equation*}
where  
\begin{equation*}
\mathbb W_{\mathfrak{R}^{\square}_\varepsilon}=\left\{ \left. \left[\begin{array}{c}
\delta_1 \\
\vdots \\
\delta_q
\end{array}\right]  ~\right|\begin{array}{c} 
\vspace{0.06cm}
\mathcal R_{\varepsilon}^{\square,i} \leq \delta_i \leq \mathcal R_{\varepsilon_i}^{\square,i}\\
\textrm{for}~\varepsilon_i \in (0,1) ~\textrm{that satisfy:}  \\
\hspace{-0cm}\varepsilon_1+\dots+\varepsilon_q = \varepsilon 
\end{array}  \right\}
\end{equation*}
with  $\square$ being either $C$ or  $D$. 
\end{thm}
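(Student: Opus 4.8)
The plan is to read each of the four vector-risk objects as the set of \emph{minimal} (Pareto-optimal) elements of the corresponding feasible region, e.g.
\begin{equation*}
\mathcal A^{C,\cap}_\varepsilon = \Big\{\boldsymbol\delta\in\mathbb R^{n-1}_+ ~\Big|~ \mathbb P\big(\textstyle\bigcap_{i=1}^{n-1}\{y^{(i)}\in C_{\delta_i}\}\big)<\varepsilon\Big\},
\end{equation*}
and analogously for the union and for $D$. Property $\Pi_1$ makes $C_{\delta_i}$ (resp.\ $D_{\delta_i}$) shrink as $\delta_i$ grows, so $\delta_i\mapsto \mathbb P\{y^{(i)}\in C_{\delta_i}\}$ is non-increasing; hence every feasible region above is an \emph{upper set} (raising any coordinate preserves feasibility) and its infimum is its lower boundary of minimal points. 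I will use two elementary facts valid for any $\boldsymbol\delta$: the intersection is dominated by each marginal, $\mathbb P(\bigcap_i\{y^{(i)}\in C_{\delta_i}\})\le \mathbb P\{y^{(j)}\in C_{\delta_j}\}$, and the union is sandwiched by Boole's inequality, $\mathbb P\{y^{(j)}\in C_{\delta_j}\}\le \mathbb P(\bigcup_i\{y^{(i)}\in C_{\delta_i}\})\le \sum_i \mathbb P\{y^{(i)}\in C_{\delta_i}\}$. Since each $y^{(i)}$ is Gaussian by Theorem \ref{cor: main1}, all laws involved are absolutely continuous, so these defining probabilities equal $\varepsilon$ along the lower boundaries.

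For the \emph{intersection inclusion} $\boldsymbol{\mathcal R}^{C,\cap}_\varepsilon\subseteq\mathbb V_{\mathfrak R^C_\varepsilon}$, let $\boldsymbol\delta$ be a minimal element and suppose, for contradiction, that $\delta_j>\mathcal R^{C,j}_\varepsilon$ for some $j$. By the definition of the scalar risk \eqref{eq: risk} and marginal monotonicity, $\mathbb P\{y^{(j)}\in C_{\delta_j'}\}<\varepsilon$ for any $\delta_j'\in(\mathcal R^{C,j}_\varepsilon,\delta_j)$; the intersection bound then gives $\mathbb P(\bigcap_i\{\cdots\})\le \mathbb P\{y^{(j)}\in C_{\delta_j'}\}<\varepsilon$ for the vector $\boldsymbol\delta'$ obtained by lowering only the $j$-th coordinate to $\delta_j'$. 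Thus $\boldsymbol\delta'$ is feasible and $\boldsymbol\delta'\prec\boldsymbol\delta$, contradicting minimality. Hence $0\le\delta_j\le\mathcal R^{C,j}_\varepsilon$ for all $j$, i.e.\ $\boldsymbol\delta\in\mathbb V_{\mathfrak R^C_\varepsilon}$; the detachment statement follows verbatim with $C$ replaced by $D$.

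For the \emph{union inclusion} $\boldsymbol{\mathcal R}^{C,\cup}_\varepsilon\subseteq\mathbb W_{\mathfrak R^C_\varepsilon}$, fix a minimal element $\boldsymbol\delta$ and write $p_i:=\mathbb P\{y^{(i)}\in C_{\delta_i}\}$. The \emph{lower} bound $\delta_i\ge\mathcal R^{C,i}_\varepsilon$ holds for every feasible $\boldsymbol\delta$: from $p_i\le \mathbb P(\bigcup_j\{\cdots\})<\varepsilon$ and \eqref{eq: risk}, $\delta_i$ belongs to the set whose infimum defines $\mathcal R^{C,i}_\varepsilon$, so $\delta_i\ge \mathcal R^{C,i}_\varepsilon$. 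For the \emph{upper} bound, absolute continuity gives $\mathbb P(\bigcup_j\{\cdots\})=\varepsilon$ at the boundary, so Boole's inequality yields $\sum_i p_i\ge\varepsilon$; I then set $\varepsilon_i:=p_i\,\varepsilon/\sum_j p_j$, which satisfies $\varepsilon_1+\dots+\varepsilon_{n-1}=\varepsilon$ and $\varepsilon_i\le p_i$ since $\varepsilon/\sum_j p_j\le 1$. The inequality $\varepsilon_i\le p_i$ means $\delta_i$ fails the strict inequality defining $\mathcal R^{C,i}_{\varepsilon_i}$, so $\delta_i\le\mathcal R^{C,i}_{\varepsilon_i}$; moreover $\varepsilon_i\le\varepsilon$ makes the range $[\mathcal R^{C,i}_\varepsilon,\mathcal R^{C,i}_{\varepsilon_i}]$ nonempty by the monotonicity of risk in Proposition \ref{prop: monotonicityofrisk}. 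Together this places $\boldsymbol\delta\in\mathbb W_{\mathfrak R^C_\varepsilon}$, and the detachment case is identical.

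The routine direction is the intersection inclusion, which is pure monotonicity. The delicate point is the upper bound in the union case: it rests on the defining probability being \emph{exactly} $\varepsilon$ on the lower boundary (so that $\sum_i p_i\ge\varepsilon$ and the weights $\varepsilon_i$ can be chosen summing to $\varepsilon$), and on degenerate coordinates where $p_i=0$, for which $\varepsilon_i=0$ must be replaced by an arbitrarily small positive mass with the remaining weights rebalanced. I expect the careful treatment of these boundary/continuity technicalities, together with checking that the strict-versus-nonstrict inequalities in \eqref{eq: risk} are consistent with the closed sets $\mathbb V,\mathbb W$, to be where the real work lies; everything else reduces to Boole's inequality and the monotonicity recorded in Proposition \ref{prop: monotonicityofrisk}.
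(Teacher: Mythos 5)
Your proof is correct, and it runs on the same engine as the paper's own argument: the Boole--Fr\'echet inequalities \eqref{eq: bf1}--\eqref{eq: bf2} together with the $\Pi_1$-monotonicity of $\delta \mapsto \mathbb P\{y^{(i)}\in U_\delta\}$, with the union treated via the max lower bound and sub-additivity and the intersection via $\mathbb P(\cap)\le\min_i \mathbb P(A_i)$. The scaffolding, however, is genuinely different. The paper casts each vector risk as a chance-constrained program and argues by exhibiting feasible candidate points: for the intersection it checks that the vector of scalar risks $\boldsymbol\delta^+$ is feasible, and for the union that, for \emph{every} admissible split $\varepsilon_1+\dots+\varepsilon_q=\varepsilon$, the vector $\big(\mathcal R^{\square,i}_{\varepsilon_i}\big)_i$ is feasible, and then reads off the inclusions. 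You instead fix an \emph{arbitrary} Pareto-minimal element $\boldsymbol\delta$ and bound it directly: for the intersection by a lower-one-coordinate contradiction, and for the union by constructing the split $\varepsilon_i = p_i\,\varepsilon/\sum_j p_j$ (with $p_i=\mathbb P\{y^{(i)}\in C_{\delta_i}\}$) adapted to that particular $\boldsymbol\delta$, using the boundary identity $\mathbb P(\cup)=\varepsilon$ and Boole's inequality to get $\sum_i p_i\ge\varepsilon$, hence $\varepsilon_i\le p_i$ and $\delta_i\le\mathcal R^{C,i}_{\varepsilon_i}$. This buys something real: feasibility of a single point in an upper set does not by itself dominate all minimal elements of that set, so your per-element contradiction and your explicit, element-dependent choice of $\{\varepsilon_i\}$ supply precisely the domination step that the paper's proof leaves implicit when it passes from "$\boldsymbol\delta^+$ is feasible" to "this establishes the upper bound of $\mathbb V$" (and analogously for $\mathbb W$). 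Two minor remarks: the degenerate case $p_i=0$ that worries you cannot occur here, since by Theorem \ref{cor: main1} each $y^{(i)}$ is Gaussian with full support, so every $C_{\delta_i}$ and $D_{\delta_i}$ carries positive mass; and your use of $\mathbb P(\cup)=\varepsilon$ on the lower boundary (and of the strict-inequality characterization of the scalar risks at minimal points) is justified exactly because $\delta\mapsto\mathbb P\{y^{(i)}\in U_\delta\}$ is continuous and strictly decreasing for these Gaussian marginals -- it would be worth one line in a final write-up making that continuity explicit.
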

As it is discussed in the Appendix, this result relies on the Boole-Fr\'{e}chet inequalities. These inequalities are the best possible estimates on unions and intersections of events for which nothing is known other than the probabilities of the corresponding individual events \cite{halperin65}.  We remark that, unlike the case of global union of events in Theorem \ref{thm: riskestimates}, there is no non-trivial lower limit for the risk of global joint events. Despite their elegance, Boole-Fr\'{e}chet inequalities fail to provide non-trivial lower bounds for the type of joint risk events. This point is elaborated in the Appendix.

One can define more general scenarios by grouping the elements of $\mathbf y$ into classes of interest. Let $\mathfrak{P}$   be a partition of the set $\{1,\dots,n-1\}$ into mutually disjoint subsets $P_1, \dots, P_{|\mathfrak{P}|}$, where  $|\mathfrak{P}|$ denotes the cardinality of $\mathfrak{P}$.  It follows that $\bigcup_{k=1}^{|\mathfrak{P}|} P_k = \{1,\dots,n-1\}$. This labeling classifies the elements of $\mathbf y$ into $|\mathfrak{P}|$ groups. This notation allows to formulate the following class of risk measures 
\begin{gather}
\inf \left\{ \boldsymbol \delta \succeq \mathbf 0~\Bigg|~\mathbb P \left ( \bigcap_{k=1}^{|\mathfrak{P}|} \bigcup_{i\in P_k} \left\{ \overline{x}^{(i+1)}-\overline{x}^{(i)} \in U_{\delta_i}\right\} \right )< \varepsilon \right\}\label{eq:prisk1}
\end{gather}
and
\begin{equation}\label{eq:prisk2}
\inf \left\{ \boldsymbol \delta \succeq \mathbf 0 ~\Bigg |~ \mathbb P \left ( \bigcup_{k=1}^{|\mathfrak{P}|} \bigcap_{i\in P_k}\left\{ \overline{x}^{(i+1)}-\overline{x}^{(i)} \in U_{\delta_i}\right\} \right )< \varepsilon \right\}
\end{equation} 
in which $U_{\delta_i}$ is either $C_{\delta_i}$ or  $D_{\delta_i}$. The risk measure \eqref{eq: jointcaprisk} is a special case of \eqref{eq:prisk1} when $|\mathfrak{P}|=1$, where  \eqref{eq:prisk1} quantifies risk of at least one event in every group of the partition will experience a systemic event. Similarly, \eqref{eq: jointcuprisk} is a special case of    \eqref{eq:prisk2}, where  \eqref{eq:prisk2} measures likelihood of all members of at least one of the groups in the partition experiences a systemic event. The results of  Theorem \ref{thm: riskestimates} and  Proposition \ref{prop: monotonicityofrisk} can be combined to obtain bounds for \eqref{eq:prisk1} and \eqref{eq:prisk2} in terms of vector \eqref{eq: vectorrisk}. The results of this section allow us to design low risk platoons by formulating multi-objective optimization problems based on vectors of individual risk measures. 
%
%

\begin{figure}\center
\includegraphics[trim = 10 0 30 15, clip,width=8.5cm]{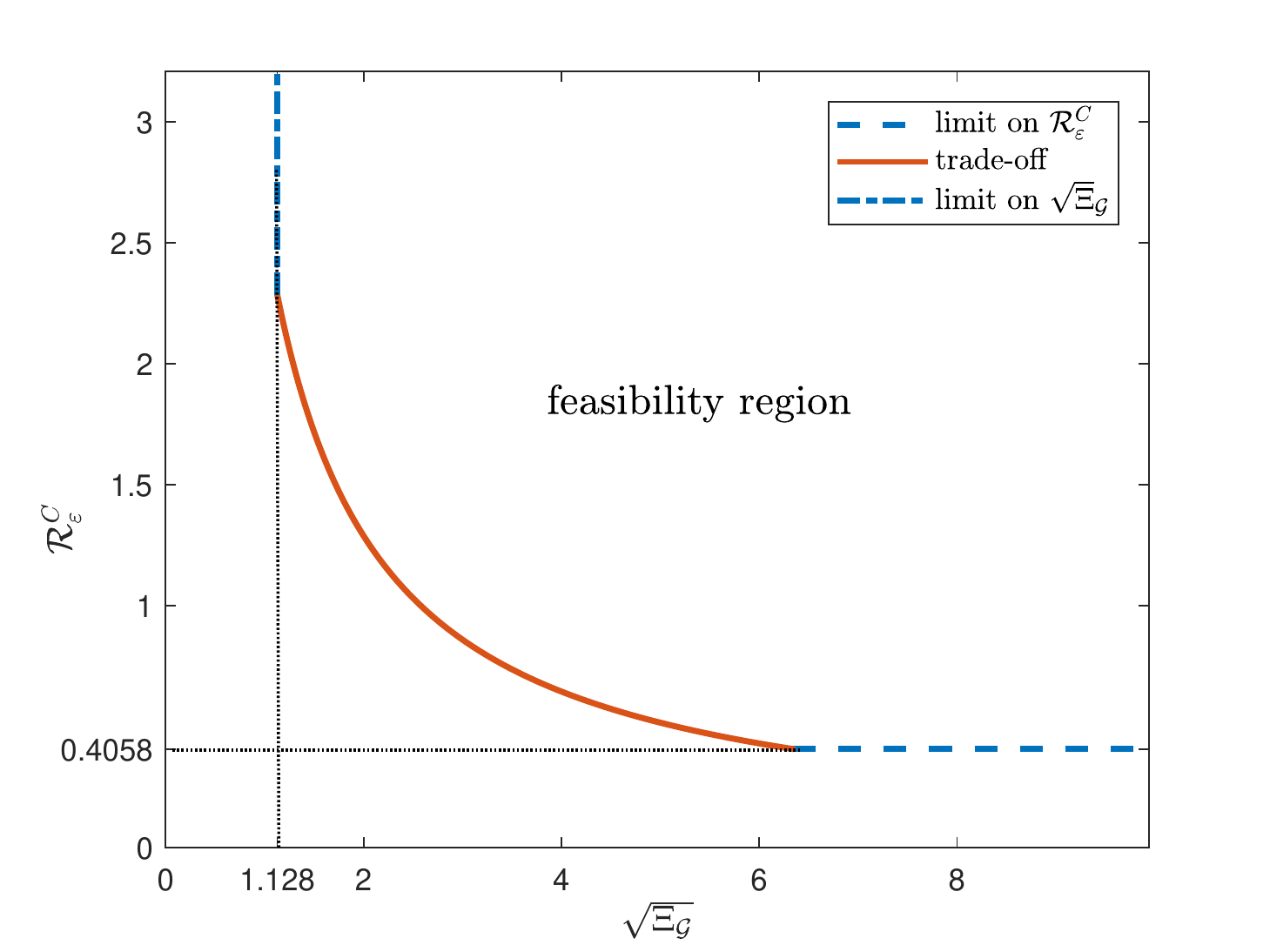}\caption{Inherent limits and trade-offs. The dashed lines illustrate Theorems \ref{cor: limit1} and \ref{prop: effectiveresistancelimit} on the best achievable values of Risk (horizontal) and Effective Graph Resistance (vertical). The solid curve illustrates the trade-off risk of collision and network connectivity as established in Theorem \ref{thm: main3}.}
\end{figure}

\section{Fundamental Limits and Trade-Offs}\label{sect: tradeoff}
An engineer has almost no control over the communication time-delay and exogenous disturbances. In such situations, one can design optimal communication topologies to minimize  the disruptive effect of such imperfections. Our goal is to explore inherent shortcomings in network design when the effects of neither noise nor time-delay can be neglected\footnote{Discussion in this and the following sections is focused on risk of inter-vehicle collision. Results on detachment can be derived in a similar fashion.}.
\begin{lem}\label{prop: limit} The marginal standard deviations $\sigma_i$, as in Theorem \ref{cor: main1}, satisfy the lower bound
\begin{equation*}
\sigma_i\geq \sigma^*:= \sqrt{\frac{1}{\pi}\, \underline{f}} ~\Sp  |g| \Sp \tau^{3/2}
\end{equation*} 
for $i=1,\dots,n-1$, where  $\underline{f}:=\inf_{(s_1, s_2)\in S}f(s_1,s_2)\approx 25.4603$.
\end{lem}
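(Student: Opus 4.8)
The plan is to read off a topology-independent lower bound directly from the closed-form variance in Theorem~\ref{cor: main1}, in two decoupled steps: first bound each value of $f$ below by the constant $\underline f$, then evaluate the remaining weight sum exactly by orthogonality.

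First I would use the stability hypothesis. Since the conditions of Theorem~\ref{thm: main0} hold, the pairs $(\lambda_j\tau,\beta\tau)$ belong to $S$ for every $j=2,\dots,n$, so each $f(\lambda_j\tau,\beta\tau)$ is well defined and, by the very definition of $\underline f=\inf_{(s_1,s_2)\in S}f(s_1,s_2)$, satisfies $f(\lambda_j\tau,\beta\tau)\geq \underline f$. Substituting this into the expression of Theorem~\ref{cor: main1} and factoring the constant $\underline f$ out of the sum gives
\begin{equation*}
\sigma_i^2 \Sp \geq \Sp \frac{g^2\tau^3}{2\pi}\,\underline f\,\sum_{j=2}^n \big([\mathbf e_{i+1}-\mathbf e_i]^T\mathbf q_j\big)^2 .
\end{equation*}

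Second I would evaluate the residual sum in closed form using orthogonality of $Q$. Setting $\mathbf w=\mathbf e_{i+1}-\mathbf e_i$, the identity $QQ^T=I_n$ yields $\sum_{j=1}^n(\mathbf w^T\mathbf q_j)^2=\|\mathbf w\|^2=2$; the $j=1$ term drops out because $\mathbf q_1=\tfrac{1}{\sqrt n}\mathbbm{1}_n$ and $\mathbf w^T\mathbbm{1}_n=0$, leaving $\sum_{j=2}^n(\mathbf w^T\mathbf q_j)^2=2$ regardless of $i$ and of the graph topology. Hence $\sigma_i^2\geq \frac{g^2\tau^3}{\pi}\,\underline f$, and taking square roots produces the stated bound $\sigma_i\geq \sigma^*$.

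The algebra above is routine; the real content --- and the main obstacle --- is the assertion that $\underline f>0$ together with its numerical value $\underline f\approx 25.4603$, without which the bound is vacuous. Although $f$ is finite and strictly positive throughout $S$ (the denominator $\big(s_1s_2-r^2\cos r\big)^2+r^2\big(s_1-r\sin r\big)^2$ equals $(s_1 s_2)^2>0$ at $r=0$ and, by strict stability, does not vanish for $(s_1,s_2)$ interior to $S$), the infimum over the open, curved region $S$ could a priori be zero. I would therefore examine the behaviour of $f$ as $(s_1,s_2)$ approaches the boundary and the stability margins of $S$ --- exploiting that $s_1 s_2$ stays bounded on $S$, so that the contribution to the integral near $r=0$ cannot vanish --- to bound $f$ below by a positive constant, and then locate the minimizer numerically to certify the value $25.4603$. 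Establishing this strictly positive floor is the crux of the argument.
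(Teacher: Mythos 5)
Your proof is correct and follows essentially the same route as the paper's: bound each $f(\lambda_j\tau,\beta\tau)$ below by $\underline f$ using the stability hypothesis, then use orthogonality of $Q$ together with $[\mathbf e_{i+1}-\mathbf e_i]^T\mathbf q_1=0$ to evaluate $\sum_{j=2}^n\big([\mathbf e_{i+1}-\mathbf e_i]^T\mathbf q_j\big)^2=\|\mathbf e_{i+1}-\mathbf e_i\|^2=2$ and take square roots. The positivity of $\underline f$, which you rightly flag as the crux, is settled in the paper exactly along the lines you sketch: $S$ has compact closure and $f$ diverges on its boundary, so by continuity the infimum is attained (and is strictly positive) in the interior, with the value $\underline f\approx 25.4603$ certified numerically.
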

This result reveals that the variance of $\overline{x}^{i+1}-\overline{x}^{i}$ attains a hard lower bound that only depends  on the strength of the diffusion $g$ and the time-delay $\tau$. 

\begin{figure}
\includegraphics[trim = 10 0 30 15, clip,width=8.5cm]{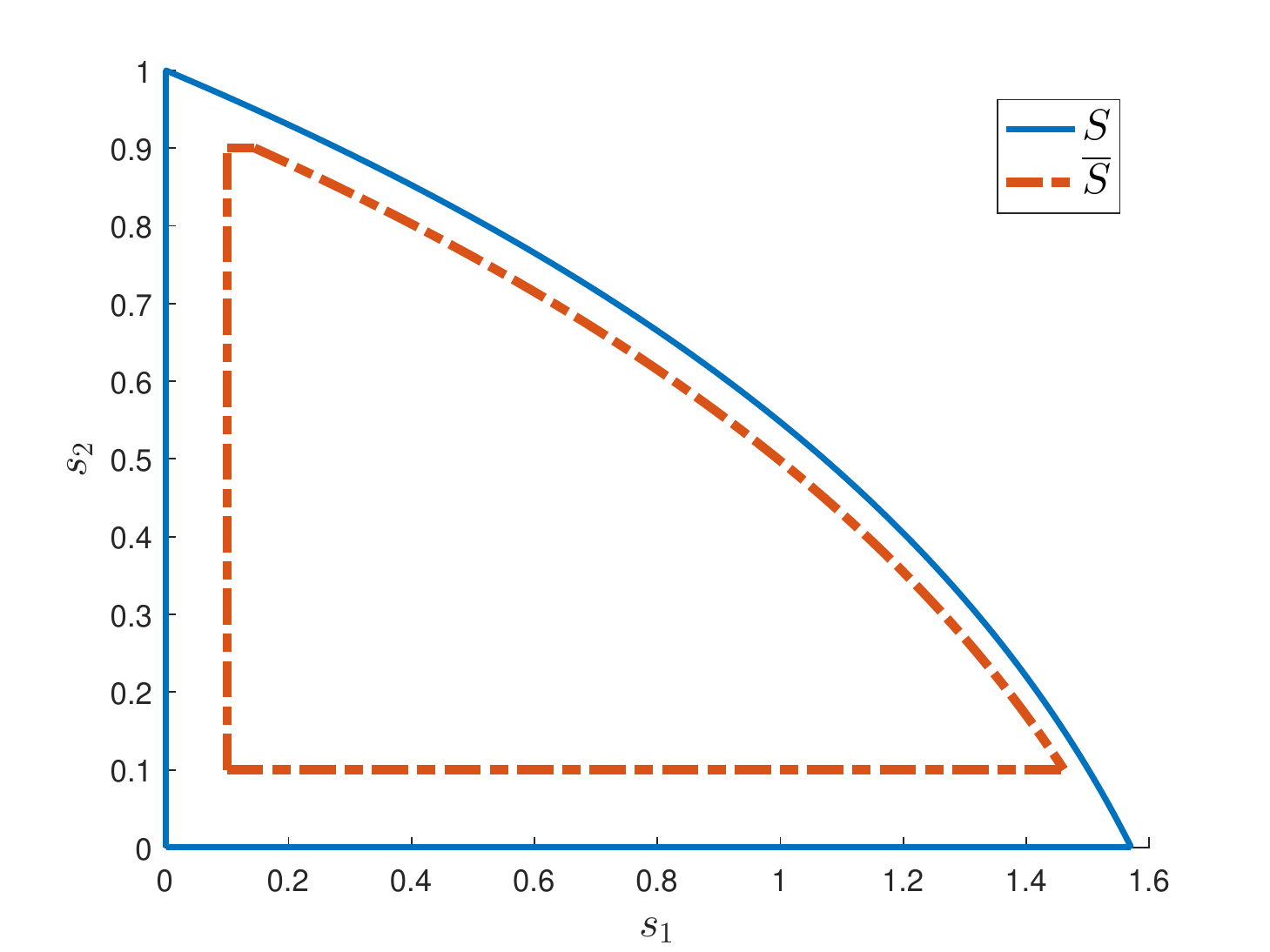}\caption{The compact set $\overline{S}$ is defined in the interior of $S$. Function $f$ is approximated over $\overline{S}$ with relative error in the order of $10^{-4}$.}
\label{fig: stability2}
\end{figure}  

\begin{thm}\label{cor: limit1} There is an inherent fundamental limit on the best achievable values of risk of inter-vehicle collision in the platoon that is given by 
\begin{equation*}\begin{split}
\mathcal R_{\varepsilon}^{C,i}\geq \begin{cases}
0, & \hspace{0.1in} \text{if}~~~\sigma^*\leq \frac{d}{\kappa_{\varepsilon}\sqrt{2}}\frac{c-1}{c}~~\text{or}~~\varepsilon\geq \frac{1}{2}\\
\frac{d}{d-4.02\, \kappa_{\varepsilon} |g| \tau^{3/2}}-c, & \hspace{0.1in} \text{if}~~~\sigma^*\in \big(\frac{d}{\kappa_{\varepsilon}\sqrt{2}}\frac{c-1}{c},\frac{d}{\kappa_{\varepsilon}\sqrt{2}}\big) \\
\infty, & \hspace{0.1in} \text{if}~~~\sigma^*>\frac{d}{\kappa_{\varepsilon}\sqrt{2}}
\end{cases}.
\end{split}
\end{equation*}
\end{thm}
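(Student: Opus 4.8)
The plan is to treat the closed-form risk \eqref{eq: riskcolformula} as a function of the single scalar $\sigma_i$ and exploit its monotonicity, then substitute the hard lower bound $\sigma_i \ge \sigma^*$ supplied by Lemma \ref{prop: limit}. Concretely, I would fix $d$, $c$, and $\varepsilon$ (hence $\kappa_\varepsilon>0$) and regard the right-hand side of \eqref{eq: riskcolformula} as a map $\sigma \mapsto \mathcal R_\varepsilon^{C,i}(\sigma)$ on $[0,\infty)$. The structural fact I would establish first is that this map is continuous and non-decreasing. On the middle branch $\frac{d}{d-\sqrt 2\,\kappa_\varepsilon \sigma}-c$, differentiation in $\sigma$ gives $\frac{\sqrt 2\,\kappa_\varepsilon d}{(d-\sqrt 2\,\kappa_\varepsilon \sigma)^2}>0$, so that branch is strictly increasing; at the left endpoint $\sigma=\frac{d}{\sqrt 2\,\kappa_\varepsilon}\frac{c-1}{c}$ one has $d-\sqrt 2\,\kappa_\varepsilon \sigma=d/c$ and the branch evaluates to $0$, matching the first branch, while at the right endpoint $\sigma=\frac{d}{\sqrt 2\,\kappa_\varepsilon}$ the denominator vanishes and the branch diverges to $+\infty$, matching the third. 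Hence the three pieces glue into a genuinely non-decreasing function of $\sigma$.

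Next I would invoke Lemma \ref{prop: limit}, which certifies $\sigma_i \ge \sigma^* = \sqrt{\tfrac{1}{\pi}\underline f}\,|g|\,\tau^{3/2}$ uniformly in $i$. By the monotonicity just established, $\mathcal R_\varepsilon^{C,i}=\mathcal R_\varepsilon^{C,i}(\sigma_i)\ge \mathcal R_\varepsilon^{C,i}(\sigma^*)$, so the claimed bound is nothing but \eqref{eq: riskcolformula} read at $\sigma=\sigma^*$. The three regimes in the statement are exactly the three cases of \eqref{eq: riskcolformula} with $\sigma_i$ replaced by $\sigma^*$ in the thresholds, and the numerical constant follows from $\sqrt 2\,\sigma^*=\sqrt{\tfrac{2}{\pi}\underline f}\,|g|\,\tau^{3/2}\approx 4.02\,|g|\,\tau^{3/2}$, since $\tfrac{2}{\pi}\underline f \approx 16.2$; substituting $\kappa_\varepsilon\sigma^*\sqrt 2 = 4.02\,\kappa_\varepsilon |g|\tau^{3/2}$ into the middle branch produces $\frac{d}{d-4.02\,\kappa_\varepsilon |g|\tau^{3/2}}-c$.

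There is essentially no deep obstacle remaining once Lemma \ref{prop: limit} is granted: the statement is a monotone-substitution corollary of Theorem \ref{thm: main2}. The only point requiring care is bookkeeping across the piecewise definition. One must check that when $\sigma^*$ lands in the blow-up regime $\sigma^*\ge \frac{d}{\sqrt 2\,\kappa_\varepsilon}$, every admissible $\sigma_i\ge\sigma^*$ lands there as well, so that the bound $+\infty$ is attained rather than merely dominated; and that in the first regime the trivial bound $0$ is consistent with the nonnegativity of the risk index. Both follow immediately from the monotonicity and the ordering $\sigma_i\ge\sigma^*$, so the entire proof reduces to the derivative/continuity verification above together with the one-line evaluation of the constant.
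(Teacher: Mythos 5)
Your proposal is correct and follows exactly the route the paper takes: its proof of Theorem \ref{cor: limit1} is the one-line observation that the result follows by combining Theorem \ref{thm: main2} with the hard bound $\sigma_i \geq \sigma^*$ of Lemma \ref{prop: limit}, and your monotonicity-plus-substitution argument (including the verification that the branches glue continuously and the evaluation $\sqrt{2}\,\kappa_\varepsilon\sigma^* \approx 4.02\,\kappa_\varepsilon|g|\tau^{3/2}$) simply supplies the details the paper leaves implicit.
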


According to this result, the systemic risk measure attains the trivial lower limit zero when $\sigma^*$ is less than or equal to the critical value $\frac{d}{\kappa_{\varepsilon}\sqrt{2}}\frac{c-1}{c}$. If $\sigma^*$ lies in a specific set of values, the risk of collision can be minimized as a function of communication topology. On the other extreme, risk of inter-vehicle collision becomes infinite (i.e., collision becomes inevidable) if $\sigma^*$ exceeds a safety cut-off value. In fact, we can characterize an inevitability condition as follows: the risk of collision between two consecutive vehicles is infinite for all platoons, independent of their communication topology, if\begin{equation*}
 |g|\Sp \Sp \tau^{\frac{3}{2}} \geq \frac{d}{1.12~ \text{\normalfont {erf}}^{-1}(1-2\varepsilon)}
\end{equation*}
for all $\varepsilon \in (0,0.5)$.  The risk can be reduced by minimizing the marginal standard deviations $\sigma_i$ up to a limit, which is characterized by Lemma \ref{prop: limit}, by adjusting the platoon's control parameters (i.e., feedback gains $k_{ij}$ and $\beta$). 
According to Theorem \ref{thm: main2} and Lemma \ref{prop: limit}, the Laplacian spectrum of a platoon with minimal risk must satisfy  
\begin{equation*}
\lambda_1 = 0,~~~~  \lambda_j=\frac{1}{\tau}\,\underline{s}_1~~~~~\text{for}~~ j=2\dots,n, ~~\text{and}~~\beta=\frac{1}{\tau}\,\underline{s}_2
\end{equation*} 
 where $\underline{s}_1$ and $\underline{s}_2$ are determined through 
\begin{equation*}
\big(\underline{s}_1,\underline{s}_2\big)=\argmin_{(s_1,s_2)\in S}f(s_1,s_2)\approx (1.111,~0.220).
\end{equation*} 
Therefore, the optimal communication topology is a complete graph with link weights $k_{ij} = \frac{\underline{s}_1}{n\tau}$ for all $i,j \in \{1,\dots,n\}$. 
\begin{figure}
\includegraphics[scale=0.3]{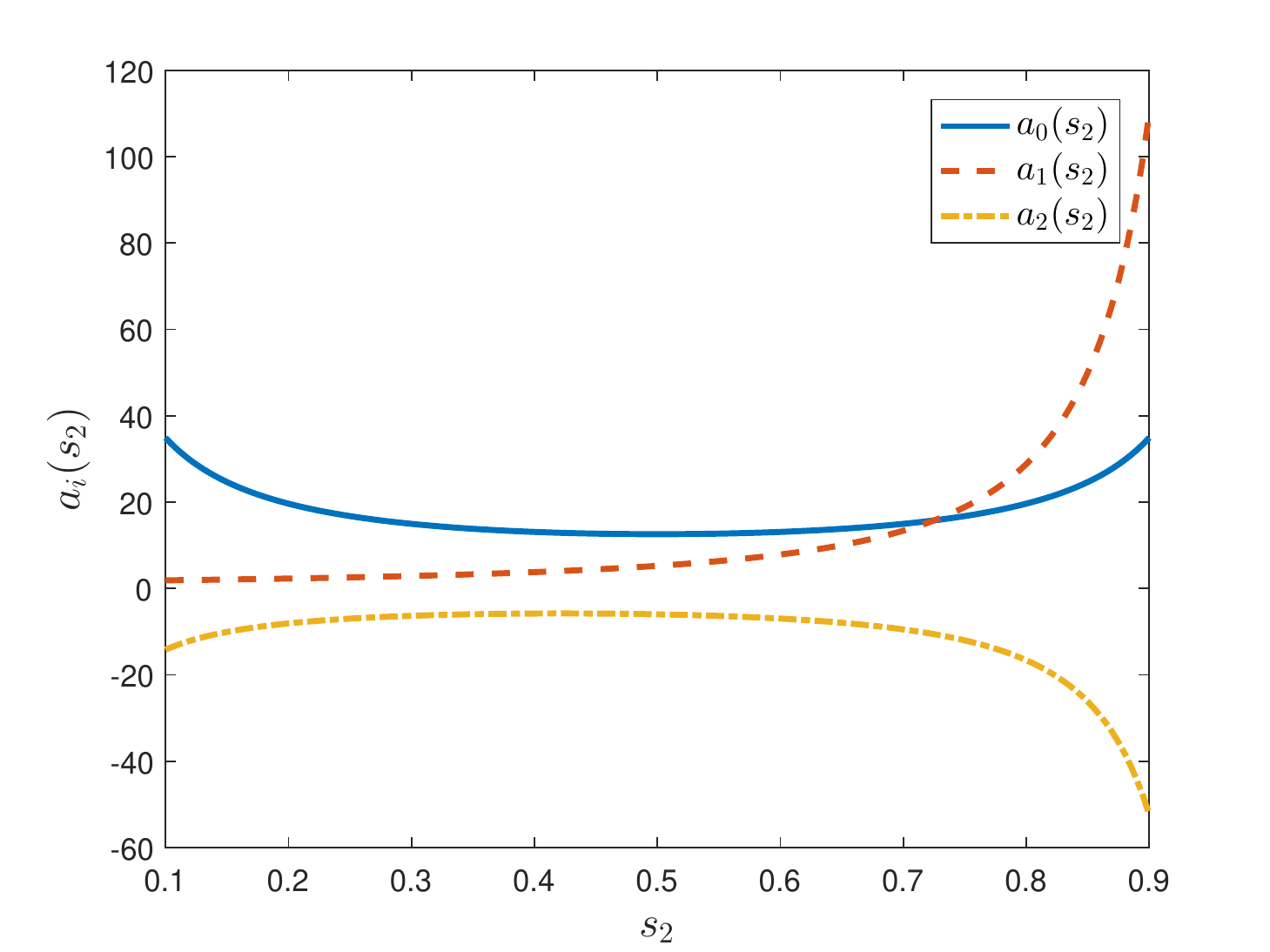}
\includegraphics[scale=0.3]{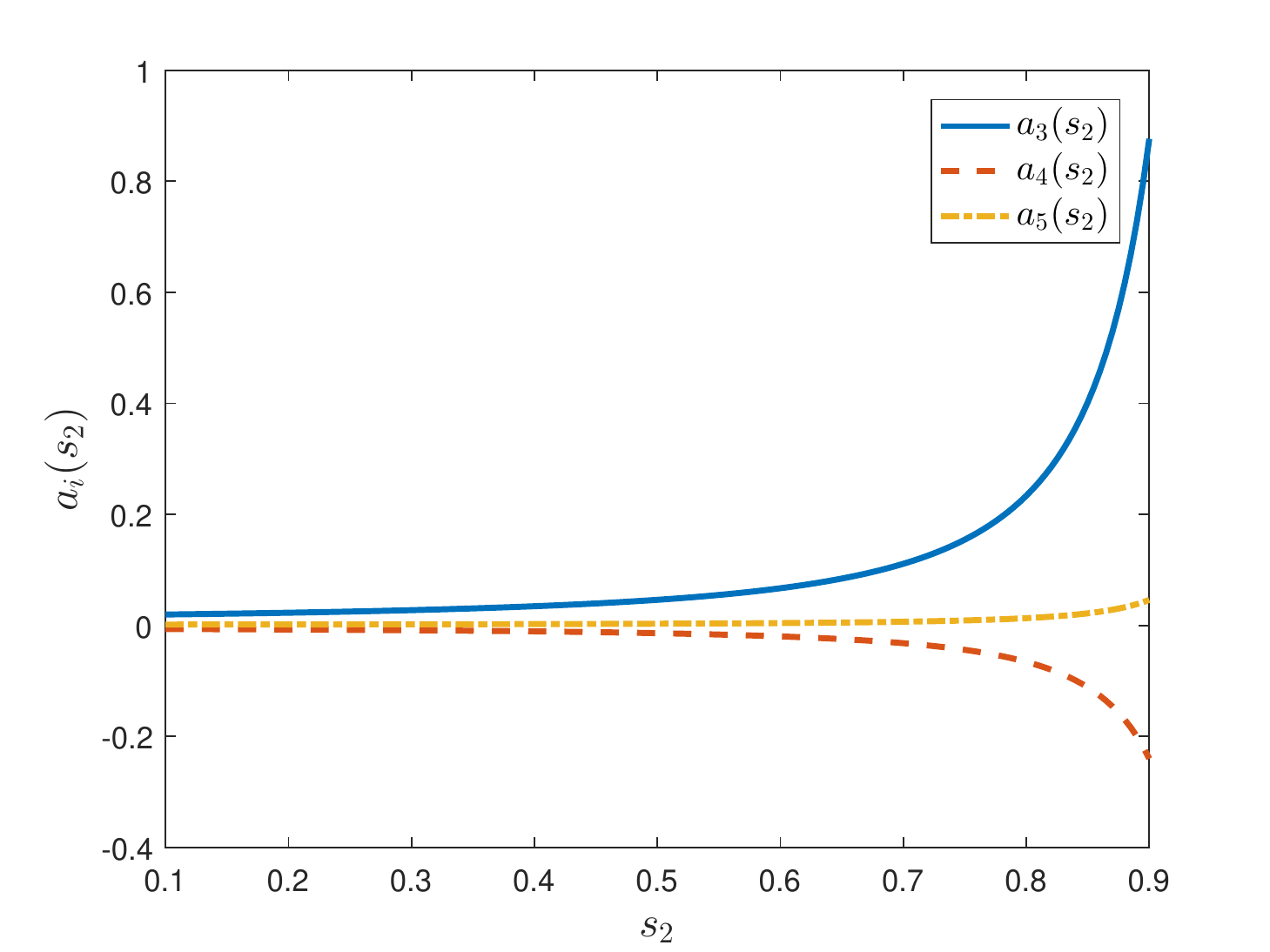}
\includegraphics[scale=0.3]{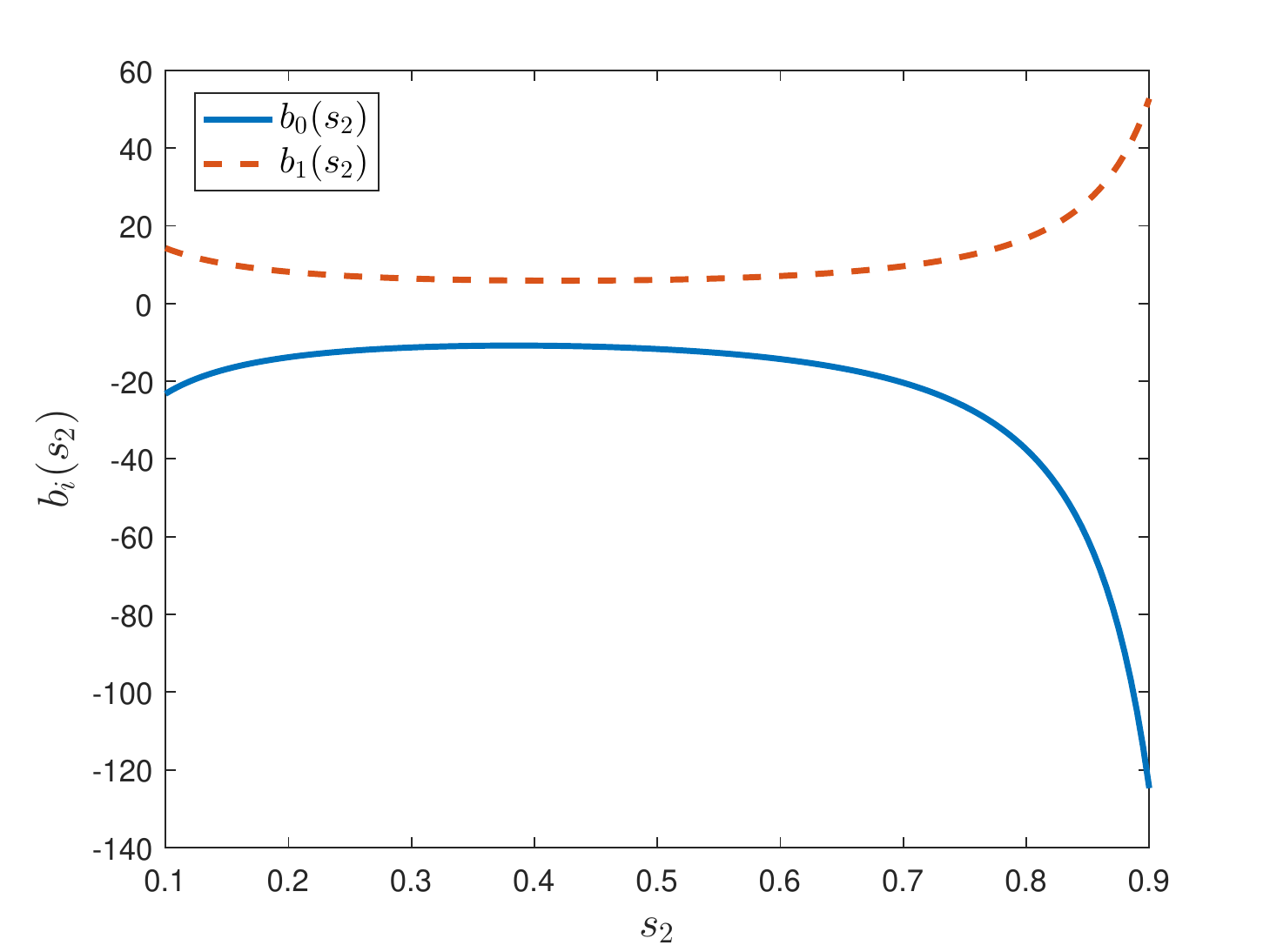}
\includegraphics[scale=0.3]{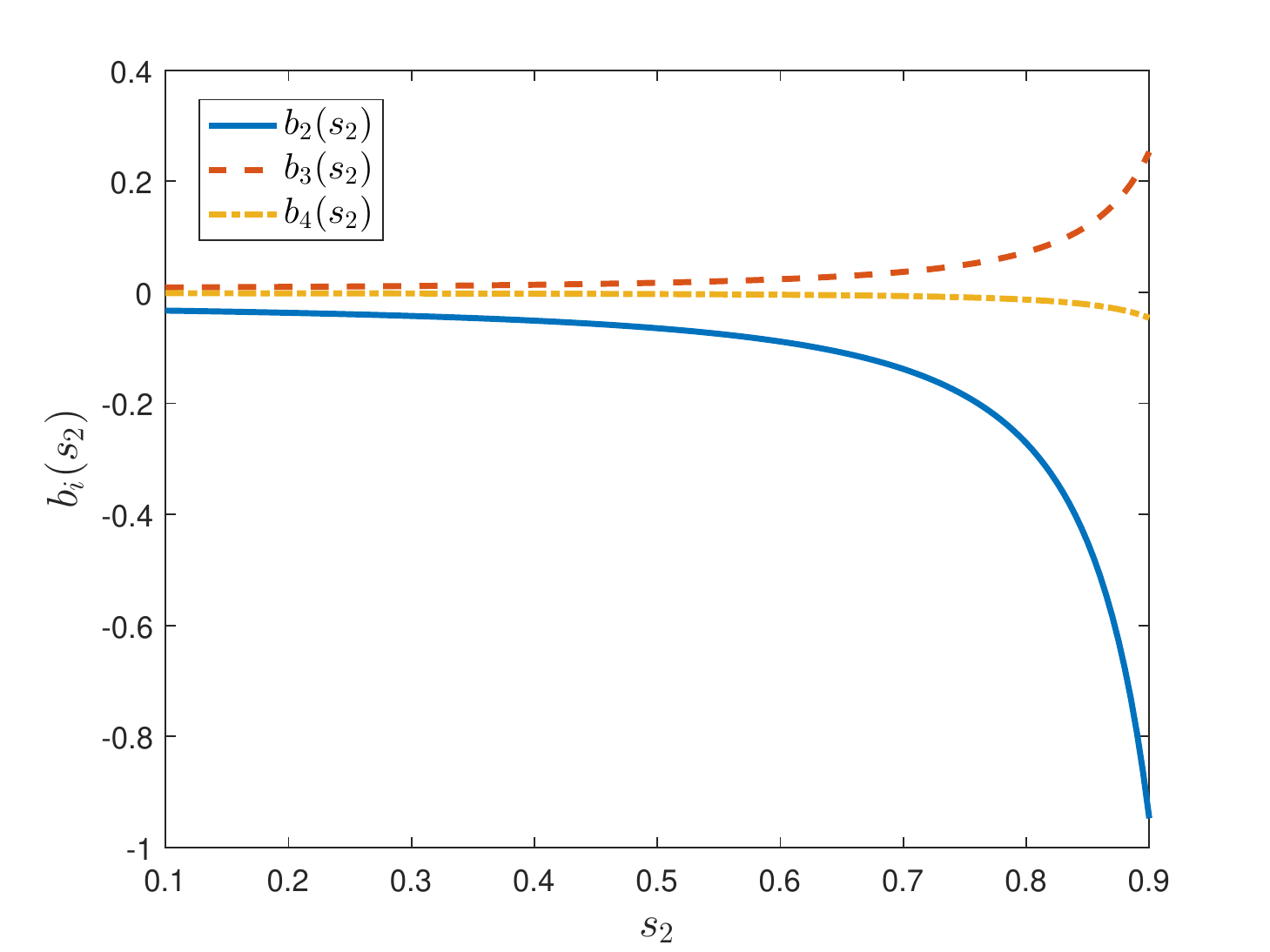}
\caption{Plots of coefficiens $a_i$ of $A_{s_2}(s_1)$, and $b_i$ of $B_{s_2}(s_1)$,  as functions of $s_2$.}\label{fig: coef1}
\end{figure} Optimal network topology as described above is far from ideal in real-world situations. Given an arbitrary (sub-optimal) network, synthesis objective regards developing formal methods to attain safe vehicle interactions. In this setting, safe is interpreted as the control topology that mitigates risk of systemic events. To this end, we recall that network synthesis is conducted via adding new feedback loops (coupling links), sparsifying, or adjusting the exiting feedback gains.  Every such operation impacts the eigen-spectrum of graph laplacian $L$, and this has immediate consequence on systemic risk measures $\mathcal R_{\varepsilon}^{\mathcal C}$. In the remainder of this section we report the existence of a counter-intuitive trade-off between network connectivity and risk of inter-vehicle collision. The measure of network connectivity, already introduced in \S \ref{sect: prelim}, is the effective graph resistance $\Xi_{\mathcal G}$. The next result is an interesting delay-induced fundamental limitation in close spirit to Theorem \ref{cor: limit1}.

\begin{thm}\label{prop: effectiveresistancelimit}
For given control parameter $\beta>0$ and $\tau>0$ such that $\beta\tau \in (0,1)$,  the communication connectivity, which is specified using the total effective resistance \eqref{eq: graphr}, cannot be improved beyond some certain threshold according to inequality  
\begin{equation}\label{eq: xilimit}
\Xi_{\mathcal G} \Sp > \Sp n(n-1)\frac{\tau}{\vartheta(\beta \tau)} 
\end{equation} where  $\vartheta~:~(0,1)\rightarrow \big(0,\frac{\pi}{2}\big)$ is defined by
\begin{equation}\label{fcn-1}
\vartheta(s_2)= \left(g\circ f^{-1}\right)(s_2)
\end{equation}
with $g(x)=x\sin(x)$ and $f(x)=x \cot{x}$.
\end{thm}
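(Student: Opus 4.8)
The plan is to convert the stability requirement of Theorem~\ref{thm: main0} into a uniform upper bound on the nonzero Laplacian eigenvalues, and then substitute that bound into the closed-form expression \eqref{eq: graphr} for $\Xi_{\mathcal G}$. By Theorem~\ref{thm: main0}, a platoon converges only if $(\lambda_i\tau,\beta\tau)\in S$ for every $i=2,\dots,n$. Since $\beta\tau$ is held fixed throughout, the core of the argument is to describe the horizontal slice $\{s_1 : (s_1,\beta\tau)\in S\}$ of the region $S$ at the fixed height $s_2=\beta\tau$, and to show that this slice is exactly the open interval $\big(0,\vartheta(\beta\tau)\big)$.

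First I would analyze the boundary curve of $S$. By definition \eqref{eq: S}, the upper endpoint of the admissible $s_2$-interval at a given $s_1$ is $a\cot(a)$, where $a=a(s_1)\in(0,\pi/2)$ is the unique root of $a\sin(a)=s_1$. Parametrizing by $a$, the boundary of $S$ is the curve $a\mapsto\big(g(a),f(a)\big)=\big(a\sin a,\,a\cot a\big)$, with $g,f$ the auxiliary functions of the statement. I would verify that $g(x)=x\sin x$ is strictly increasing on $(0,\pi/2)$ (its derivative $\sin x+x\cos x$ is positive there) and that $f(x)=x\cot x$ is strictly decreasing (equivalently, $\tfrac12\sin 2x<x$ for $x>0$, after clearing $\sin^2 x>0$ in $f'$). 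Monotonicity of $f$ guarantees that $f^{-1}:(0,1)\to(0,\pi/2)$ is well defined, which is precisely where the hypothesis $\beta\tau\in(0,1)$ enters: it places $\beta\tau$ in the range of $f$, so that a boundary point of $S$ at height $s_2=\beta\tau$ exists at all.

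With these monotonicities in hand, fixing $s_2=\beta\tau$ and writing $a^\star=f^{-1}(\beta\tau)$, the defining condition $s_2<a(s_1)\cot\big(a(s_1)\big)$ for membership in $S$ becomes, after inverting the strictly decreasing map $s_1\mapsto a(s_1)\cot\big(a(s_1)\big)$, the single inequality $s_1<g(a^\star)=\vartheta(\beta\tau)$; the auxiliary constraint $s_1<\pi/2$ is then automatic since $\vartheta(\beta\tau)<\pi/2$. Consequently every stabilizing platoon satisfies the strict bounds $\lambda_i\tau<\vartheta(\beta\tau)$, i.e. $\lambda_i^{-1}>\tau/\vartheta(\beta\tau)$, for all $i=2,\dots,n$ (strictness because the intervals in $S$ are open). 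Summing these $n-1$ strict inequalities and multiplying by $n$ yields
\begin{equation*}
\Xi_{\mathcal G}=n\sum_{i=2}^n\lambda_i^{-1}>n(n-1)\,\frac{\tau}{\vartheta(\beta\tau)},
\end{equation*}
which is the claimed bound.

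The main obstacle is the geometric step of the third paragraph: correctly identifying the slice of $S$ as $\big(0,\vartheta(\beta\tau)\big)$. The delicate point is that $S$ is presented ``vertically'' (for each $s_1$, an interval of admissible $s_2$), whereas the eigenvalue bound needs the ``horizontal'' description (for fixed $s_2$, the admissible range of $s_1$). Exchanging these two descriptions is legitimate only because $g$ and $f$ are each strictly monotone, so that the boundary of $S$ is simultaneously a graph of $s_2$ over $s_1$ and of $s_1$ over $s_2$; establishing these two monotonicity facts rigorously, and checking that the endpoint matching gives exactly $s_1^\star=g(a^\star)=\vartheta(\beta\tau)$, is the crux. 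Everything after that is a routine substitution into \eqref{eq: graphr}.
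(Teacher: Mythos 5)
Your proposal is correct and takes essentially the same route as the paper's own proof: both convert the vertical description of $S$ into a horizontal slice at $s_2=\beta\tau$ via the boundary parametrization $a\mapsto(g(a),f(a))$, deduce $\lambda_i\tau<\vartheta(\beta\tau)=\left(g\circ f^{-1}\right)(\beta\tau)$ for every nonzero eigenvalue, and then sum the reciprocals in \eqref{eq: graphr}. The only difference is that you explicitly verify the strict monotonicity of $g(x)=x\sin x$ and $f(x)=x\cot x$ on $\big(0,\frac{\pi}{2}\big)$ that legitimizes the inversion step, which the paper's proof invokes without comment.
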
 The smallest lower bound for \eqref{eq: xilimit} is achieved when $\beta=0$, which in that case \eqref{eq: xilimit} becomes 
\begin{equation}\label{eq: xilimit_opt}
\Xi_{\mathcal G}> n(n-1)\frac{2\tau}{\pi}.
\end{equation}
The main advantage of $\Xi_{\mathcal G}$, over other measures of connectivity (such as vertex/edge connectivity \cite{Bollobas_1998}) is that effective resistance applies equally well to weighted symmetric graphs as well as to topological (unweighted) graphs. Moreover it attains an elegant spectral representation that favors the comparison with risk expressions derived in Theorem \ref{thm: main2}.
Going beyond hard limits, we show that fundamental trade-offs emerge between risk and network connectivity. 
These trade-offs explain that for a non-trivial range of feedback gains, $k_{ij}$, improving connectivity, which results in decreasing $\Xi_{\mathcal G}$, leads to higher levels of systemic risk. For a rigorous exposition of our results, we need to employ some new notations. Let us define
$$\underline{f}_m:=\min_{j \in \{2,\dots,n\}}\Sp\inf_{(s_1,s_2)\in S}\bigg\{ f(s_1,s_2)\,\bigg[ \frac{(j-1)}{s_1} +\frac{(n-j)}{\zeta(s_2)}\bigg]^{\frac{2}{m}}\bigg\}$$ 
for every $m\geq 1$, where $f$ is given by  \eqref{eq: functionf} and $\vartheta$ by \eqref{fcn-1}, as well as  
\begin{equation*}
\begin{split}
\underline{E}^C&:=(1-c)^2 +2(1-c)c\frac{\sqrt{2}\kappa_\varepsilon}{d}\max\bigg\{\frac{c-1}{c}\frac{d}{\sqrt{2}\kappa_{\varepsilon}},\sigma^*\bigg\}\\&\hspace{0.2in}+c^2\bigg(\frac{\sqrt{2}\kappa_\varepsilon\,}{d}\bigg)^2 \max\bigg\{\bigg(\frac{c-1}{c}\frac{d}{\sqrt{2}\kappa_{\varepsilon}}\bigg)^2,(\sigma^*)^2\bigg\}
\end{split}
\end{equation*}
in which $\sigma^*$ is the fundamental limit of $\sigma_i$ as in Lemma \ref{prop: limit}. We also  define the sequences $\{\alpha_m^C\}_{m=1}^{\infty}$
with elements that are defined by
\begin{equation*}\label{eq: alphacoeff}\begin{split}
\alpha_m^C&=2^{\frac{m}{2}}(m+1)\left(\frac{|g| \Sp \tau^\frac{3}{2} \Sp \kappa_{\varepsilon}}{d\sqrt{\pi}}\right)^{m} \left( \underline{f}_m \right)^\frac{m}{2}.
\end{split}
\end{equation*}
 
\begin{thm}\label{thm: main3} Suppose that the conditions of Theorem \ref{thm: main2} hold. For $i \in \{1,\dots,n-1\}$, if $\sigma_i<\frac{d}{\kappa_{\varepsilon}\sqrt{2}}$, then $\sum_{m=1}^\infty \alpha_m^C<\infty$ and a fundamental trade-off between the best achievable levels of collision risk and communication connectivity  emerges as follows 
\begin{equation*}
\mathcal R_{\varepsilon}^{C,i}\,\cdot \,\sqrt{\Xi_{\mathcal G}}>\sqrt{n\tau \underline{E}^C \bigg(\frac{2(n-1)}{\pi}+\sum_{m=1}^\infty \alpha_m^C \bigg) }.
\end{equation*} 

\end{thm}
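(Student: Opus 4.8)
The plan is to establish the equivalent squared bound $(\mathcal R_{\varepsilon}^{C,i})^2\,\Xi_{\mathcal G}>n\tau\,\underline E^C\big(\tfrac{2(n-1)}{\pi}+\sum_{m\ge1}\alpha_m^C\big)$ and take square roots at the end, both sides being nonnegative. I work in the only interesting regime $\frac{d}{\kappa_\varepsilon\sqrt2}\frac{c-1}{c}<\sigma_i<\frac{d}{\kappa_\varepsilon\sqrt2}$, in which Theorem~\ref{thm: main2} gives a positive risk $R:=\mathcal R_{\varepsilon}^{C,i}=\frac{d}{d-y}-c$ with $y:=\sqrt2\,\kappa_\varepsilon\sigma_i\in(0,d)$; on the complementary sub-interval $\sigma_i\le\frac{d}{\kappa_\varepsilon\sqrt2}\frac{c-1}{c}$ the risk vanishes and the statement is vacuous.

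Because $c\ge1$, a direct expansion of $R^2$ carries a negative cross term, so first I would remove it by factoring. Writing $\sigma_0:=\frac{c-1}{c}\frac{d}{\sqrt2\kappa_\varepsilon}$ and $\psi(\sigma_i):=\big(1-\tfrac{c(d-y)}{d}\big)^2$, the identity $R+c=\frac{d}{d-y}$ gives $R^2=\psi(\sigma_i)\,(R+c)^2$. The map $\psi$ increases strictly on $[\sigma_0,\infty)$ (it is the square of a quantity that increases through $0$ at $\sigma_0$), and a short computation shows $\psi(M)=\underline E^C$ for $M:=\max\{\sigma_0,\sigma^*\}$. Since Lemma~\ref{prop: limit} gives $\sigma_i\ge\sigma^*$ and the regime gives $\sigma_i>\sigma_0$, we have $\sigma_i\ge M$, hence $\psi(\sigma_i)\ge\underline E^C$ and $R^2\ge\underline E^C\,(R+c)^2$. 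Expanding the geometric square $(R+c)^2=(1-y/d)^{-2}=\sum_{m\ge0}(m+1)(y/d)^m$ reduces everything to a term-by-term comparison of $\sum_{m\ge0}(m+1)(y/d)^m\,\Xi_{\mathcal G}$ against $\frac{2(n-1)}{\pi}+\sum_{m\ge1}\alpha_m^C$, all multiplied by $\underline E^C\ge0$.

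The heart of the proof is the per-term inequality, for which I use a per-index split of the effective resistance. Ordering $\lambda_2\le\dots\le\lambda_n$ and splitting $\Xi_{\mathcal G}/n=\sum_{l=2}^n\lambda_l^{-1}$ at any index $k$ gives $\sum_{l\le k}\lambda_l^{-1}\ge(k-1)/\lambda_k$, while the stability condition of Theorem~\ref{thm: main0} forces $\lambda_l\tau<\vartheta(\beta\tau)$ for every $l$, so $\sum_{l>k}\lambda_l^{-1}>(n-k)\tau/\vartheta(\beta\tau)$; thus $\Xi_{\mathcal G}>n\tau\big[\tfrac{k-1}{\lambda_k\tau}+\tfrac{n-k}{\vartheta(\beta\tau)}\big]$. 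For $m=0$ this yields $\Xi_{\mathcal G}>n\tau\frac{2(n-1)}{\pi}$ via $\vartheta<\frac\pi2$. For $m\ge1$, substituting $\sigma_i^2=\frac{g^2\tau^3}{2\pi}\sum_{k=2}^nw_{ik}f_k$ from Theorem~\ref{cor: main1}, where $w_{ik}=([\mathbf e_{i+1}-\mathbf e_i]^T\mathbf q_k)^2$, $f_k=f(\lambda_k\tau,\beta\tau)$ and $\sum_{k=2}^nw_{ik}=\|\mathbf e_{i+1}-\mathbf e_i\|^2=2$ (the $k=1$ term drops since $\mathbf e_{i+1}-\mathbf e_i\perp\mathbbm{1}_n$), the explicit constants cancel and $(m+1)(y/d)^m\Xi_{\mathcal G}\ge n\tau\,\alpha_m^C$ collapses to
\[
\Xi_{\mathcal G}\,\overline f^{\,m/2}>n\tau\,(\underline f_m)^{m/2},\qquad \overline f:=\tfrac12\textstyle\sum_{k=2}^nw_{ik}f_k.
\]
Multiplying the split bound by $f_k^{m/2}$ and folding the bracket into the exponent shows $\Xi_{\mathcal G}f_k^{m/2}>n\tau\big(f(\lambda_k\tau,\beta\tau)[\tfrac{k-1}{\lambda_k\tau}+\tfrac{n-k}{\vartheta(\beta\tau)}]^{2/m}\big)^{m/2}\ge n\tau(\underline f_m)^{m/2}$ for each $k$, by the very definition of $\underline f_m$. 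Hence $f_k>\big(n\tau(\underline f_m)^{m/2}/\Xi_{\mathcal G}\big)^{2/m}$ uniformly in $k$, so the convex combination $\overline f$ exceeds the same threshold and, raised to the power $m/2$, gives the displayed inequality. Summing over $m$ and using $\underline E^C\ge0$ finishes the estimate; the same per-term bound $\alpha_m^C\le\frac{\Xi_{\mathcal G}}{n\tau}(m+1)(y/d)^m$ with $y/d<1$ yields $\sum_m\alpha_m^C\le\frac{\Xi_{\mathcal G}}{n\tau}(1-y/d)^{-2}<\infty$, so convergence comes for free.

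I expect the main obstacle to be the second step above: $\Xi_{\mathcal G}$ is a single scalar while $\sigma_i^2$ is a $w_{ik}$-weighted sum over the whole spectrum, so matching each power $(y/d)^m$ to $\alpha_m^C$ uniformly over all admissible graphs is not immediate. The per-index resistance split, together with the elementary fact that a convex combination of numbers each exceeding a common threshold still exceeds it, is exactly what bridges this gap, and the quantity $\underline f_m$ is tailored so that the bracket$^{2/m}$ generated by the split is absorbed. A secondary point requiring care is the sign bookkeeping for $c\ge1$, which the factorization $R^2=\psi(\sigma_i)(R+c)^2$ handles cleanly, and checking $\underline E^C\ge0$ so that multiplying through preserves the strict inequality.
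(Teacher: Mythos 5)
Your proposal is correct and follows essentially the same route as the paper's proof: the factorization $R^2=\psi(\sigma_i)(R+c)^2$ is exactly the paper's decomposition $\mathcal R_\varepsilon^2=I_{\text{e}}(\sigma)/I_{\text{d}}(\sigma)$ with $I_{\text{e}}\geq \underline{E}^C$ via Lemma \ref{prop: limit}, followed by the identical geometric-series expansion of $1/I_{\text{d}}$, the same per-index spectral split $\Xi_{\mathcal G}>n\tau\big[\tfrac{k-1}{\lambda_k\tau}+\tfrac{n-k}{\vartheta(\beta\tau)}\big]$ using the stability-region bound $\lambda_l\tau<\vartheta(\beta\tau)$, and absorption of the bracket into $\underline{f}_m$ term by term. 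Your only departures are cosmetic or additive: the ``convex combination exceeds a common threshold'' phrasing is equivalent to the paper's termwise bound on the weighted sum $\sum_j w_j^2 f_j$ with $\sum_j w_j^2=2$, you make the convergence of $\sum_m\alpha_m^C$ explicit where the paper leaves it implicit, and your restriction to the positive-risk branch matches the paper's own silent use of the middle-branch formula from Theorem \ref{thm: main2}.
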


Theorem \ref{thm: main3} asserts that, for a non-trivial range of network parameters, the only way to maintain a safer (low-risk) network is through weakening  the communication connectivity, e.g., by decreasing the feedback gains or sparsifying the communication graph. Equivalently, strengthening the  connectivity, e.g., by increasing the feedback gains or adding new feedback loops or links,  increases the risk of collision (and detachment) between the vehicles. The intrinsic  trade-off between risk and communication connectivity is due to the combined effect of time-delay and noise. 

We conclude this section by highlighting that all the fundamental  limits and trade-offs disappears  when time-delay is absent, i.e., strengthening the connectivity among the vehicles reduces chances of witnessing collision events in the platoon. 

\begin{figure}\center
\includegraphics[scale=0.6]{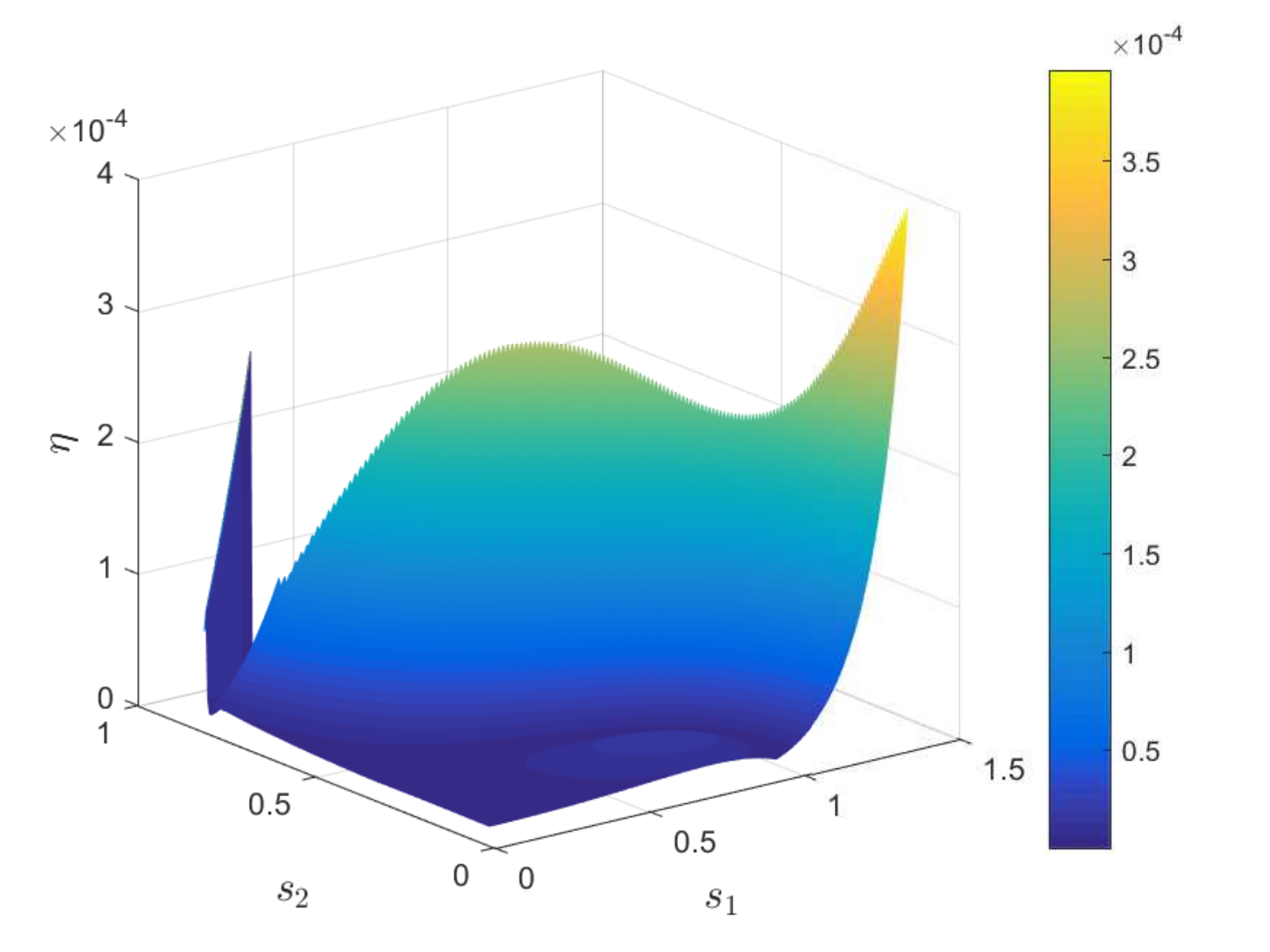}\caption{The relative error $\eta(s_1,s_2)$.}\label{fig: relativerror3}
\end{figure} 

\section{Approximation Formulas For Risk}\label{sect: approx} 

\begin{figure*}
\includegraphics[trim= 20 0 30 10, clip, width=6cm]{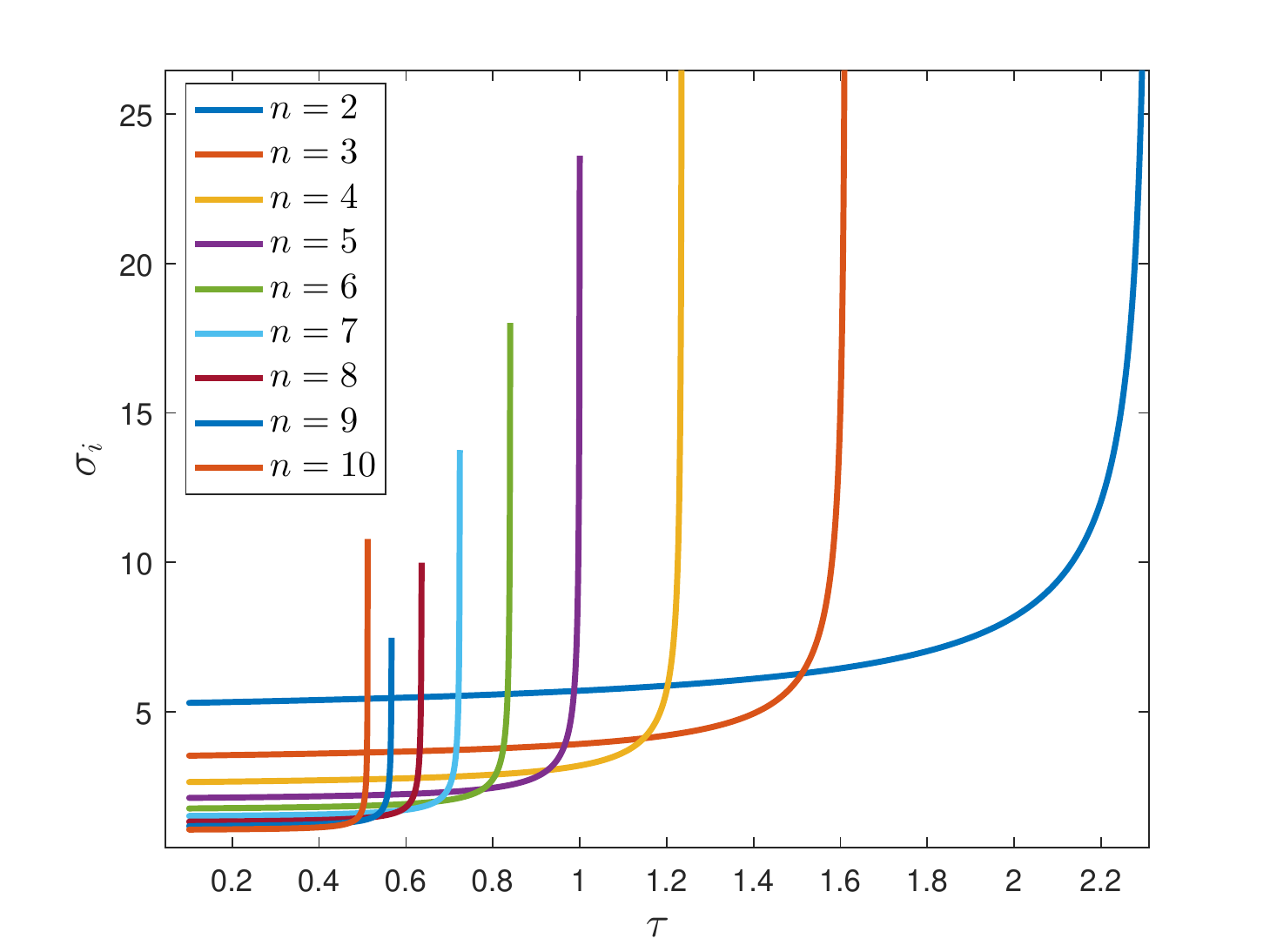}
\includegraphics[trim= 17 0 30 10, clip, width=6cm]{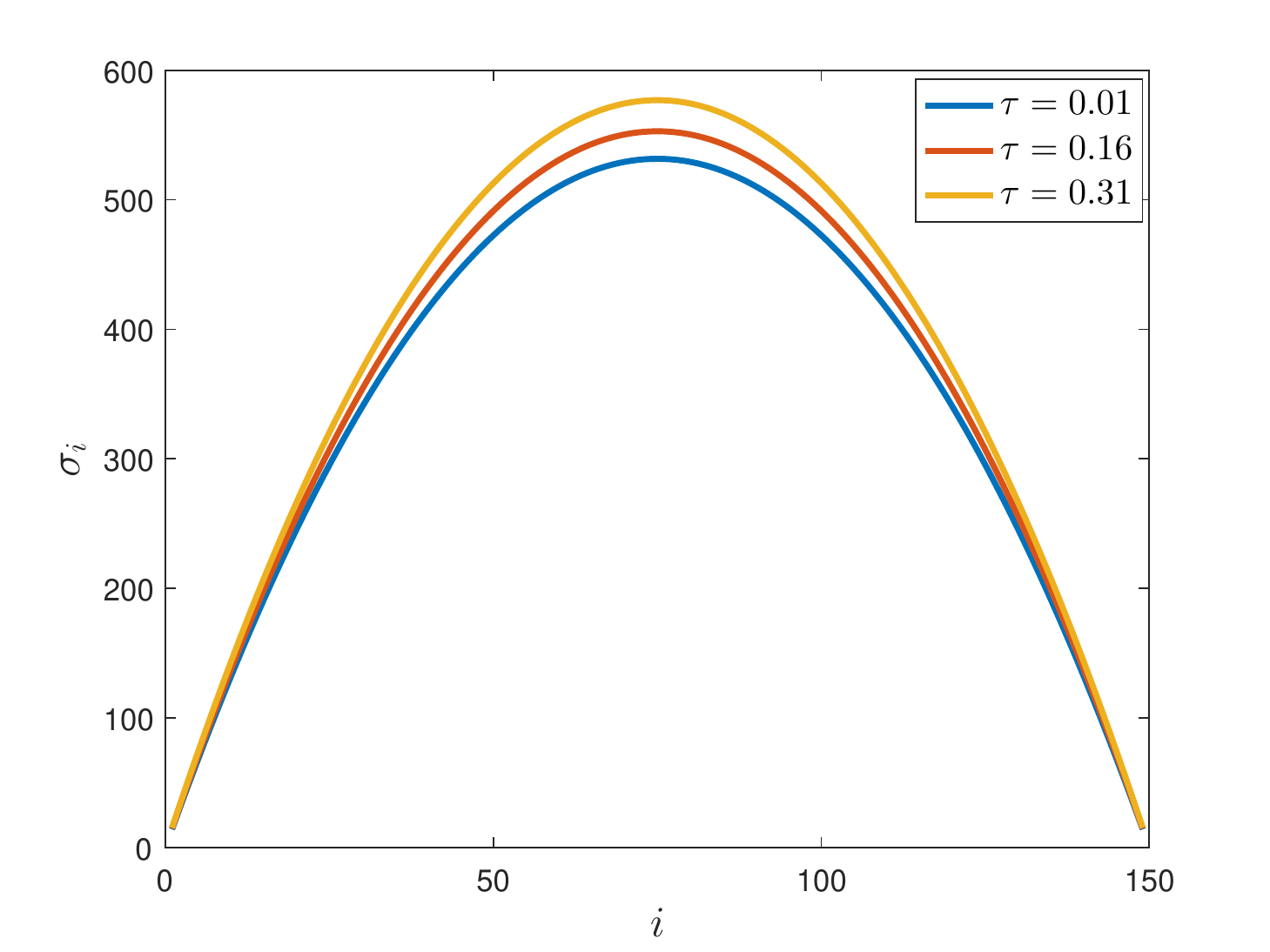}
\includegraphics[trim= 20 0 30 10, clip, width=6cm]{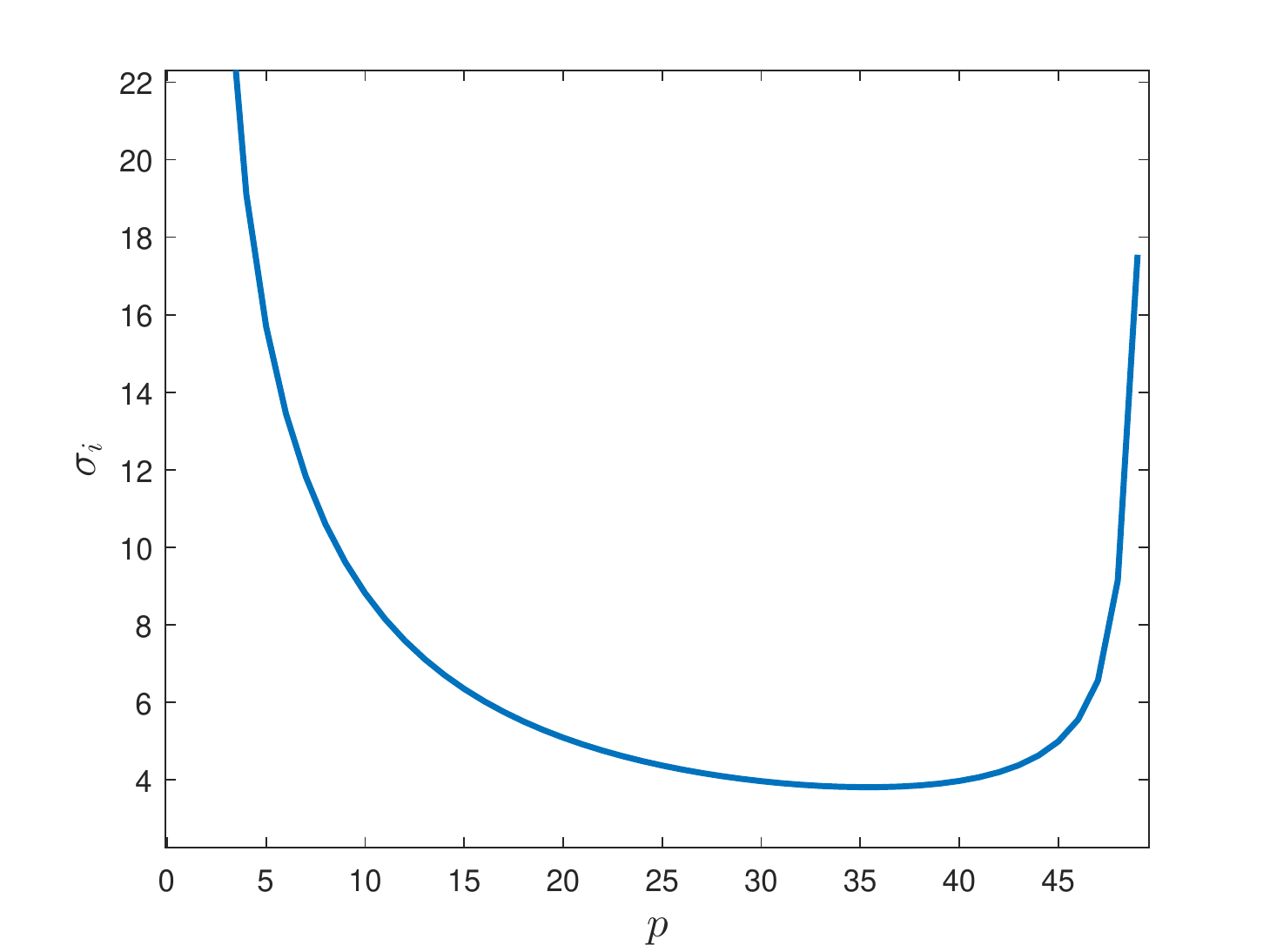}\caption{The plot of marginal variance for the complete, the path, and the $p$-cycle graphs. } \label{fig: topologicalgraph}
\vspace{-10pt}
\end{figure*}

Calculating explicit closed-form solutions for delay differential equations is almost impossible. This is also true for the delay model of platoon \eqref{eq: sys0} and its transition matrix $\Phi$. It was shown that the covariance matrix $\Sigma_{\infty}$ depends on $\Phi$ as in \eqref{eq: cov}. In fact,  obtaining an explicit expression for the solution of the unperturbed platoon heavily depends on the form of function $f: S\rightarrow \mathbb R_+$ as in \eqref{eq: functionf}. Although $f(s_1,s_2)$ has a closed-form, it does not admit an explicit form. Consequently, the formulas in \eqref{eq: riskcolformula} and \eqref{eq: riskampformula} are expressed in terms of improper integrals. This induces a computational burden that quickly becomes an issue as the number of vehicles in the platoon increases. Thus, it is desirable to find efficient  approximations of $f(s_1,s_2)$ over its domain $S$ in order to reduce computational complexity of our proposed methodology.  Our investigation reveals that the behavior of $f(s_1,s_2)$, depending on its parameters, is similar to some non-trivial rational function for which the existing conventional approximation techniques (e.g., Legendre polynomials) are proven to be inefficient. In the following, we first propose 
a rational approximation for $f(s_1,s_2)$ along with its relative error bound. The details of our derivations can be found in Appendix A. Then, this approximation is applied to obtain an efficient approximation of the risk measures.


We recall that $f(s_1,s_2)$ diverges on the boundary of $S$ for $s_2\neq 0$ where $f$ meets its poles. On the $x$-axis, over which $s_2=0$ and  $f$ is finite, the dynamics of the unperturbed platoon seize to satisfy  Definition \ref{def: platooning}. Thus, we will approximate $f$ in a compact subset of $S$ that excludes any pole or degeneracy. We adopt a  compact subset  of $S$ that is characterized by $\overline{S}=[0.1,s^*-0.05]\times [0.1, 0.9]$, where $s^*$ is the solution of $s^* \cot(s^*)=s_2$ for $s_2\in [0.1,0.9]$. This subset is depicted in Figure \ref{fig: stability2} along  with $S$. One can verify that $s^* \cot(s^*)=s_2$   is invertible for $s_2 \in [0.1,0.9]$ and it can be expressed as $s^*(s_2)$.   For $s_1, s_2\in \overline{S}$, we choose the following class of rational  functions \begin{equation}\label{eq: equationsimpler}\begin{split}
\tilde{f}(s_1,s_2)&= \frac{A_{s_2}(s_1)}{s_1^2}+\frac{B_{s_2}(s_1)}{s_1-s^*(s_2)}
\end{split}
\end{equation} in which the enumerators \begin{equation*}
\begin{split}
A_{s_2}(s_1)=\sum_{i=0}^5 a_i(s_2) s_1^i~~~\textrm{and}~~~
B_{s_2}(s_1)=\sum_{i=0}^4 b_i(s_2) s_1^i
\end{split}
\end{equation*} are polynomials in $s_1$  with real-valued coefficients that exclusively depend  on $s_2$. Functions $a_i$ and $b_i$ can not be expressed in closed-forms. Some typical graphs of these functions are shown in Figures  \ref{fig: coef1}. Figure \ref{fig: coef2} depicts the exact function $f(s_1,0.5)$ and the rational approximation $\tilde{f}(s_1,0.5)$ for $s_1 \in \overline{S}$ together with the associated relative error. The relative error function
 $$\eta(s_1,s_2)=\bigg| 1- \frac{\tilde{f}(s_1,s_2)}{f(s_1,s_2)} \bigg|$$ 
is plotted in Figure \ref{fig: relativerror3} for all $s_1,s_2 \in \overline{S}$. Our extensive numerical experiments verifies that $\max_{s_1,s_2\in \overline{S}} \eta(s_1,s_2) = \mathcal O (10 ^{-4}).$ Moreover, computational explorations suggest that $\overline{S}$ contains the minimum value of $\tilde{f}(s_1,s_2)$. This is a significant indication in favor of utilizing $\tilde{f}$ as a computationally efficient surrogate in order to develop efficient algorithms for design of low-risk platoons.

%

In the final step, we utilize approximation  \eqref{eq: equationsimpler} and arrive at a tight  approximation of the risk measure $\mathcal {\tilde{R}}_{\varepsilon}^{C/A} = \mathcal R_{\varepsilon}^{C/A}(\tilde{\sigma}_i)$, where 
\begin{equation*}
\tilde{\sigma}_{i}=\frac{|g|\tau^\frac{3}{2}}{\sqrt{2\pi}}\sqrt{\sum_{j=2}^n\big[(\mathbf e_{i+1}-\mathbf e_i]^T\mathbf q_j\big)^2 \bigg[\frac{A_{\beta \tau}(\lambda_j\tau)}{(\lambda_j\tau)^2}+\frac{B_{\beta \tau}(\lambda_j\tau)}{\lambda_j\tau-s^*(\beta \tau)}\bigg]}
\end{equation*} provided $(\lambda_j\tau,\beta\tau)\in \overline{S}$ for $j=2,\dots,n-1$.

\section{Examples Of Communication Topologies}\label{sect: topological}
Using the results of the previous section, we obtain approximate closed-form expressions for the risk of systemic events in platoons with 
certain symmetric communication topologies. 
The marginal variance is evaluated for platoons with complete, path, and $p$-cycle communication graphs with uniform feedback gains, i.e., $k_{ij} \equiv k$. The eigenvalues and eigenvectors of these graphs can be obtained explicitly \cite{Mieghem:2011:GSC:1983675,Gray:2005:TCM:1166383.1166384}. We restrict our attention to calculation of the marginal standard deviation of the relative distance between two successive vehicles. 
Through our analysis, it is possible to calculate the value-at-risk measures of the collision and  detachment events for a given confidence level  $\varepsilon$ and set of parameters $a, c, d$ and $h$.

\subsection{The Complete Graph}The eigenvalues  of a complete graph are: $\lambda_1=0$ and $\lambda_j=k n$ for all $j \in \{2,\dots,n\}$. For $(k n \tau,\beta \tau)\in S$, the marginal standard deviation for the complete graph topology is
\begin{equation*}
\sigma_i = |g|\Sp \tau^{\frac{3}{2}} \Sp \sqrt{\frac{f(k n \tau,\beta \tau)}{\pi}}
\end{equation*} 
for all $i \in \{1,\dots,n-1\}$. Using this quantity,  
one can easily calculate the value-at-risk measures. The first plot from left in Figure \ref{fig: topologicalgraph} illustrates $\sigma_i$ as a function of time-delay $\tau$,   number of vehicles, network parameters $g=1$, $k=0.3$, and $\beta=0.1$. We conclude that for small time-delay, larger ensembles of vehicles experience lower risk. For large value of $\tau$, it appears that the smaller the ensemble, the safer the platoon.    

\subsection{The Path Graph} The path graph over $n$ nodes has $n-1$ edges of $k$ weight. The $j^{th}$ eigenvalue is $\lambda_j=2k\big(1-\cos(\pi(j-1)/n)\big)$ with the corresponding eigenvectors $\mathbf q_1=\frac{1}{\sqrt{n}} \mathbbm 1$ and $\mathbf q_j=\big[q_j^{(1)},\dots,q_j^{(n)}\big]^T$ with $q_j^{(l)}=\sqrt{\frac{2}{n}}\cos\big[ \frac{\pi (n-j+1)}{2 n}(2l-1)\big]$ if $j \in \{2,\dots,n\}$. The marginal standard deviation is
\begin{equation*}
\sigma_i= \frac{2|g|\tau^{\frac{3}{2}}}{\sqrt{n\pi}}\sqrt{\sum_{j=2}^n w(j,i,n) f(\lambda_j \tau,\beta \tau )}
\end{equation*} with 
$w(j,i,n)=\sin^2\big(\frac{\pi (n-j+1) }{n}i\big) \sin^2\big(\frac{\pi (n-j+1)}{2n}\big)$. Then, one can easily calculate the risk measures. The second plot in Figure \ref{fig: topologicalgraph} illustrates $\sigma_i$ with respect to vehicle labels, where it is assumed that communication graph in all simulations is path with network parameters $g=1$, $k=0.3$, and $\beta=0.1$. We conclude that the safest regions with lowest risk of  collision or detachment are located in the two ends of the platoon. As we approach the center, the risk of collision or detachment will increase monotonically and reach its peak half way before it begins decreasing again. This implies that the vehicles in the middle of the platoon are more likely than the others to experience collision or detachment. 
It is  observed that time-delay uniformly increases the risk across the platoon.

\subsection{The $p$-Cycle Graph}
A platoon with a $p$-cycle communication graph is a network where each vehicle communicates with its $p$-immediate neighbors. The corresponding Laplacian matrix is a special type of circulant matrices whose eigen-structure is discussed in \cite{Gray:2005:TCM:1166383.1166384} from where it can be shown that
$\lambda_1=0$ and $\lambda_j=k\left(2p+1-\frac{\sin((2p+1)(j-1)\pi/n)}{\sin((j-1)\pi/n)}\right)$ with $([\mathbf e_{i+1}-\mathbf e_i]^T\mathbf q_j)^2=\frac{2}{n}\big(1-\cos\big(\frac{2\pi}{n}\big) \big)$ for all $j \in \{2,\dots, n\}$. 
The marginal standard deviation of the distance between vehicles $i+1$ and $i$ is 
\begin{equation*}
\sigma_i= \frac{|g|\tau^{\frac{3}{2}}}{\sqrt{n \pi}}\sqrt{\sum_{j=2}^n \bigg(1-\cos\bigg(\frac{2\pi (j-1)}{n} \bigg) \bigg) f(\lambda_j \tau,\beta \tau)}.
\end{equation*}
The third plot from left in Figure \ref{fig: topologicalgraph} is a graphic illustration of the marginal variance over a platoon with $n=101$ vehicles as a  function of parameter $p \in [2,50]$. The network parameters are $k=0.0211$, $\beta=1$, $\tau=0.5$, and $g=1$. For small value of $p$, i.e., for loosely connected platoons, the marginal variance, which is identical for all vehicles, is large. As connectivity enhances by increasing $p$, the platoon becomes less fragile to systemic events of collision and detachment. Finally, when $p$ approaches the limit value $50$,  i.e., the complete graph topology, the eigenvalues approach the boundary of $S$ and the platoon becomes unstable. The authors acknowledge that it is an unrealistic to have vehicles to communicate over a cycle graph as the first vehicle may not be able to communicate with the last. However, the $p$-cyclic graphs serves as a nice approximation for the $p$-path graph when $n$ is large enough.  For $n \gg 1$, the $p$-cycle graph resembles a graph where every vehicle communicates with its $p$ nearest neighbors from each side (front and behind). 

\begin{figure}
\includegraphics[trim= 10 0 30 10, clip, width=4.4cm]{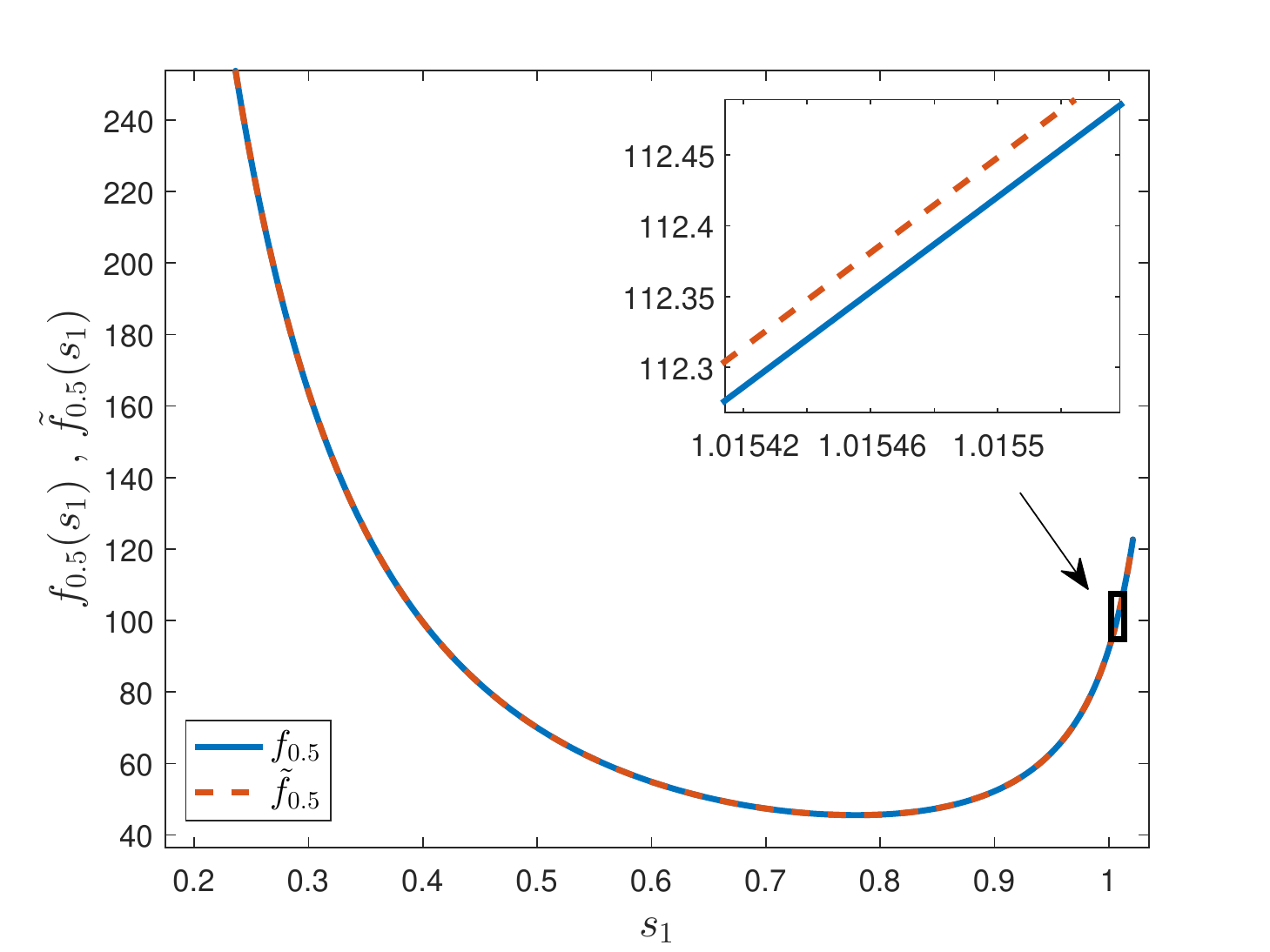}
\includegraphics[trim= 10 0 30 10, clip, width=4.4cm]{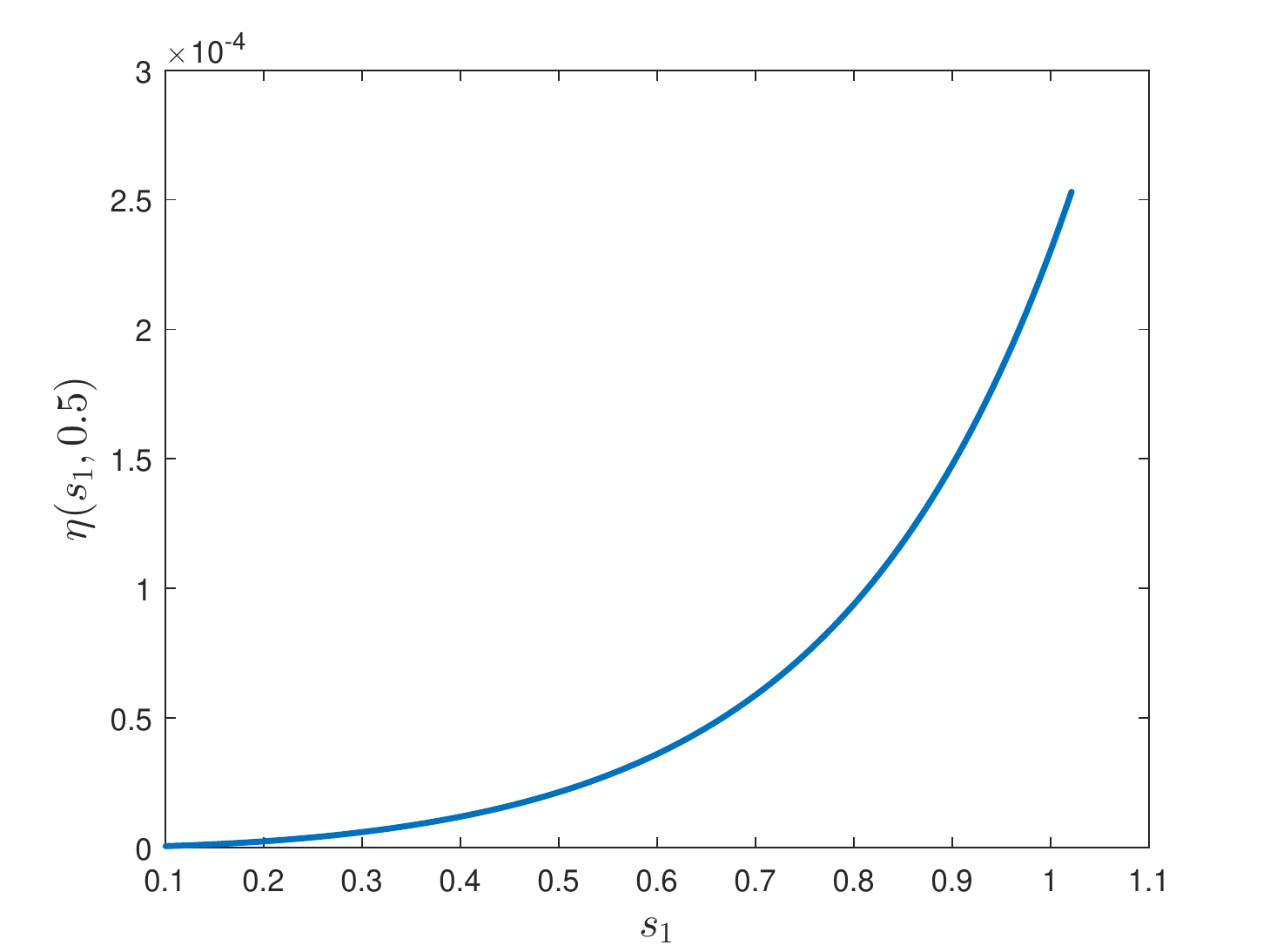}\caption{ (Left) The left plot depicts $f(s_1,0.5)$  and its approximation $\tilde{f}(s_1,0.5)$ over $[0.1,s^*(0.5)-0.05]$. The right plot is showing  the relative error.}\label{fig: coef2}
\end{figure}

\section{Simulations}\label{sect: simulation} We discuss three case studies for platoons with dynamics governed by \eqref{eq: sys0}. First, we show that time-delay can steer a platoon to become more prone to risk and eventually instability. The second case study illustrates that, in the presence of time-delay, deviating from the optimal graph topology increases the risk of systemic events. Finally, the third case examines how spatial localization of communication may affect the risk measures. 

\subsection{Risk Behavior vs. Connectivity}\label{example1} We consider  a platoon of $n=10$ vehicles. The desired distance is set to $d=1$, the scale between position  and velocity alignment to $\beta=1/3$, and the drift coefficient of noise to $g=2$. The other parameters are $c=1$ and $\varepsilon=0.01$ for the collision events, and $a=2,~h=1$,  $\varepsilon=0.05$ for the network detachment events. The desired relative distance between two successive vehicles $i+1$  and $i$ is $1$. The vehicles are in  collision when $x_t^{(i+1)}-x_t^{(i)}\leq 0$ and the vehicles have lost connectivity when $x_t^{(i+1)}-x_{t}^{(i)}\geq 2$. The objective is to guarantee that vehicles 6 and 5 will neither collide nor lose connectivity. The corresponding values of risk measures are 
\begin{equation*}\begin{split}
\mathcal R_{0.01}^{C,5}(\sigma_5)&=\begin{cases}
\frac{1}{1-0.4299\, \sigma_5}-1, & \sigma_5<0.4299\\
\infty,& \sigma_5\geq 0.4299,
\end{cases}\\  \mathcal R_{0.05}^{D,5}(\sigma_5)&=\begin{cases}
\frac{1}{1-0.6080, \sigma_5}-1, & \sigma_5<0.6080\\
\infty,& \sigma_5\geq 0.6080
\end{cases}
\end{split}
\end{equation*}

Topology of the communication graph is generated randomly by ensuring  Assumption 1 and  all feedback gains are equally chosen to be $r\geq 1$.  The first round of simulations assumes a platoon without time delay. For $r=1$, it turns out that $\sigma_5$ is greater than the critical values $\frac{d}{\kappa_{0.01} \sqrt{2}}=0.4299$ and $\frac{d}{\kappa_{0.05} \sqrt{2}}=0.6080$ that results in $\mathcal{R}_{0.01}^{C,5}=\mathcal{R}_{0.05}^{D,5}=\infty$. By tuning up the feedback gain to $r=1.9$, the risk measures become $\mathcal R_{0.01}^{C,5}=4.5285$ and $\mathcal R_{0.05}^{D,5}=3.8547$, which are rather large, but finite. Increasing $r$ to $5$ yields $\mathcal R_{0.01}^{C,5}=0.4519$ and $\mathcal R_{0.05}^{D,5}=0.2998$. This is to verify that in the absence of time-delay, the risk of systemic events becomes small as connectivity improves.  

Next, we consider a platoon with time-delay $\tau=0.07$ and repeat our simulations for different values of the feedback gain $r$. At $r=1$, it turns out that $\mathcal{R}_{0.01}^{C,5}=\mathcal{R}_{0.05}^{D,5}=\infty$. As we increase feedback gain up to $r=1.74$, the risk measure decreases to $\mathcal R_{0.01}^{C,5}=0.754$ and $\mathcal{R}_{0.05}^{D,5}=0.348$, before it starts to increase again. At $r=1.8401$, it goes back to $\mathcal{R}_{0.01}^{C,5}=\mathcal{R}_{0.05}^{D,5}=\infty$. Examples of the relative distance between vehicles 6 and 5 are illustrated in Figure \ref{fig: simulation1} over time interval $[0,100]$ seconds.

\begin{figure}
\includegraphics[trim=10 0 30 18, clip, width=4.5cm]{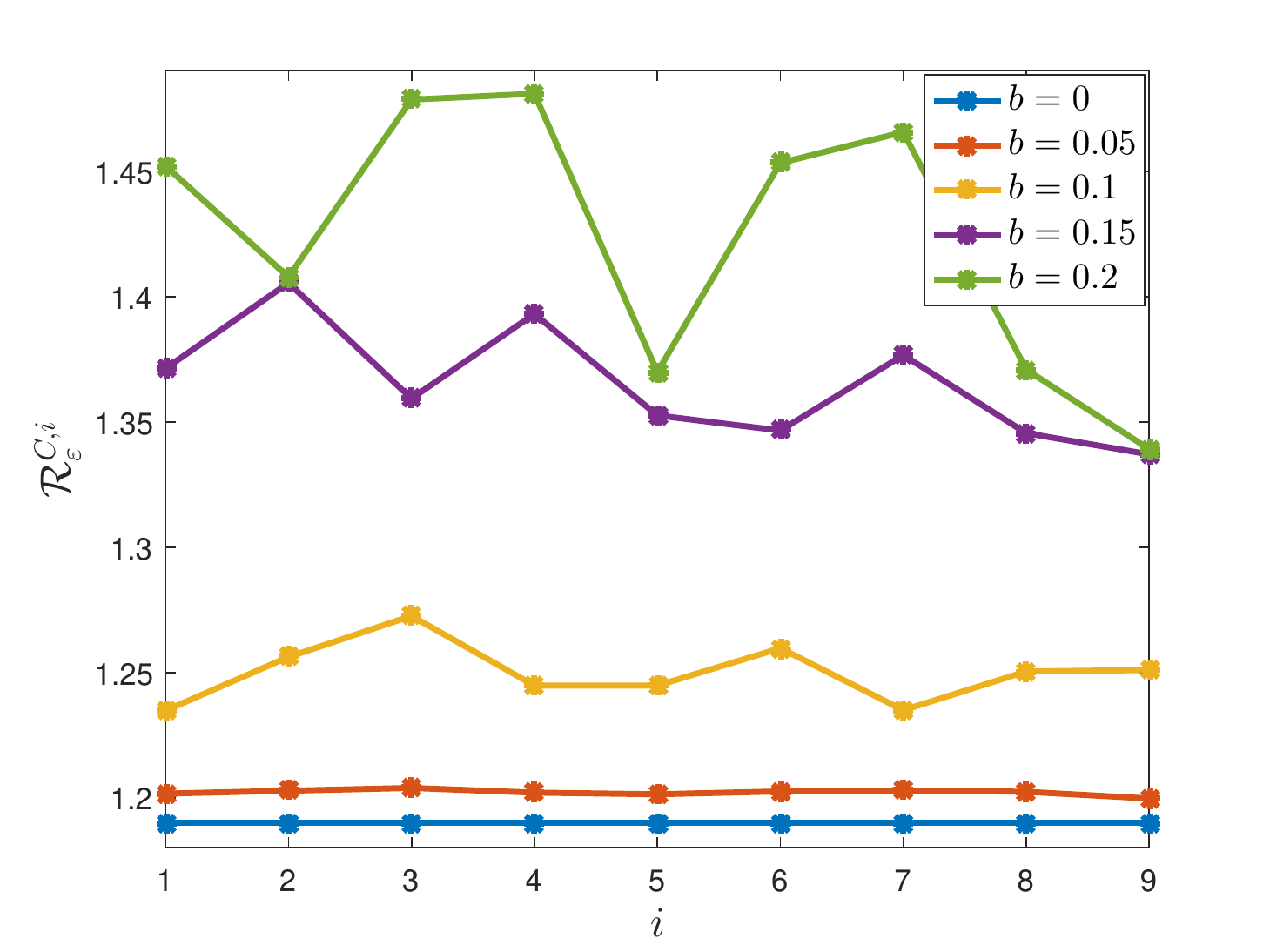}
\includegraphics[trim=10 0 30 18, clip, width=4.5cm]{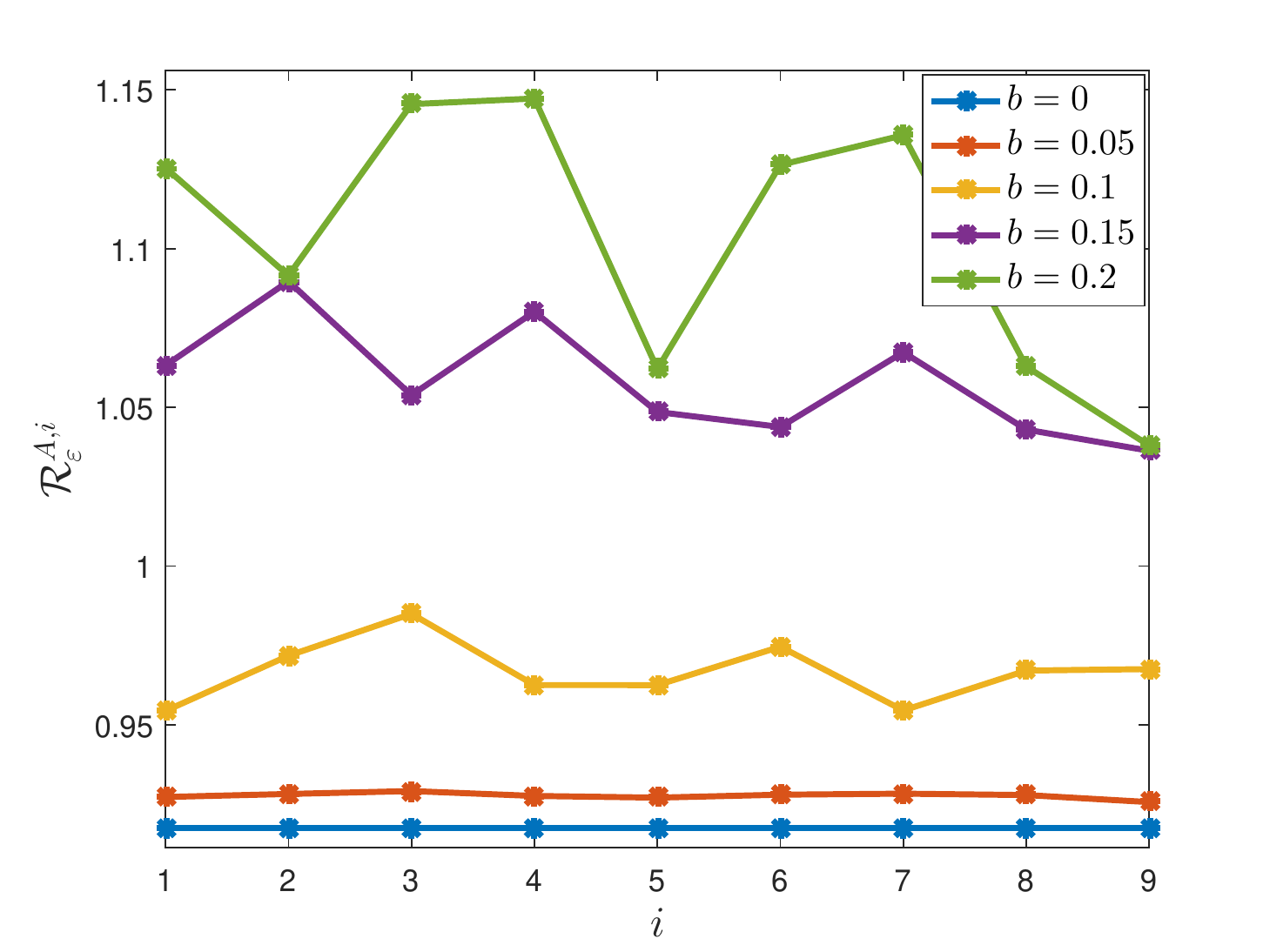}\caption{Example of randomly perturbed graphs. Randomness increases with parameter $b$. At $b=0$ we have the optimal graph topology. Risk curves are almost the same for systemic events of collision and detachment.}\label{fig: perturb}
\end{figure}

\subsection{Random Perturbations in Optimal Graphs}
In this case study, we investigate how deviation from an optimal graph topology affects the risk of collision and detachment by considering a platoon with $n=10$ vehicles. The network parameters are $g=1.5$, $\tau=0.1$, $d=0.5$, and $\beta=2.2$. For the collision events, we have set $c=1.5$ and a cut-off value $\varepsilon=0.05$. For the detachment events, we have set $a=2$, $h=3$, and a cut-off value $\varepsilon=0.1$. We recall from Section \ref{sect: tradeoff} that the optimal communication graph is the one with $\beta\tau\approx 0.220$ and $\lambda_j\tau\approx 1.111$ for $j=2,\dots,n$. An optimal graph for our example is the complete graph with identical link weights $k_{ij}^*=1.111$ for all $i,j \in \{1,\dots,n\}$. We generate random perturbation of the optimal topology by substituting feedback gain $k_{ij}^*$ with new feedback gain $k_{ij}^*+b\cdot \xi$, where $\xi$ is a random variable that is uniformly distributed in $(0,1)$. Figure \ref{fig: perturb} illustrates the value of the elements of vectors $\mathfrak{R}_{0.05}^{C}$ and $\mathfrak{R}_{0.1}^{D}$ for different values of parameter $b$. We remark that the risk of collision qualitatively behaves similar to the risk of detachment, although we numerically verified that they are not proportional. In addition, the risk of detachment is significantly smaller for all values of $b$ due to the difference in the cut-off values.

\begin{figure}
\includegraphics[trim=40 10 35 10, clip,width=9cm]{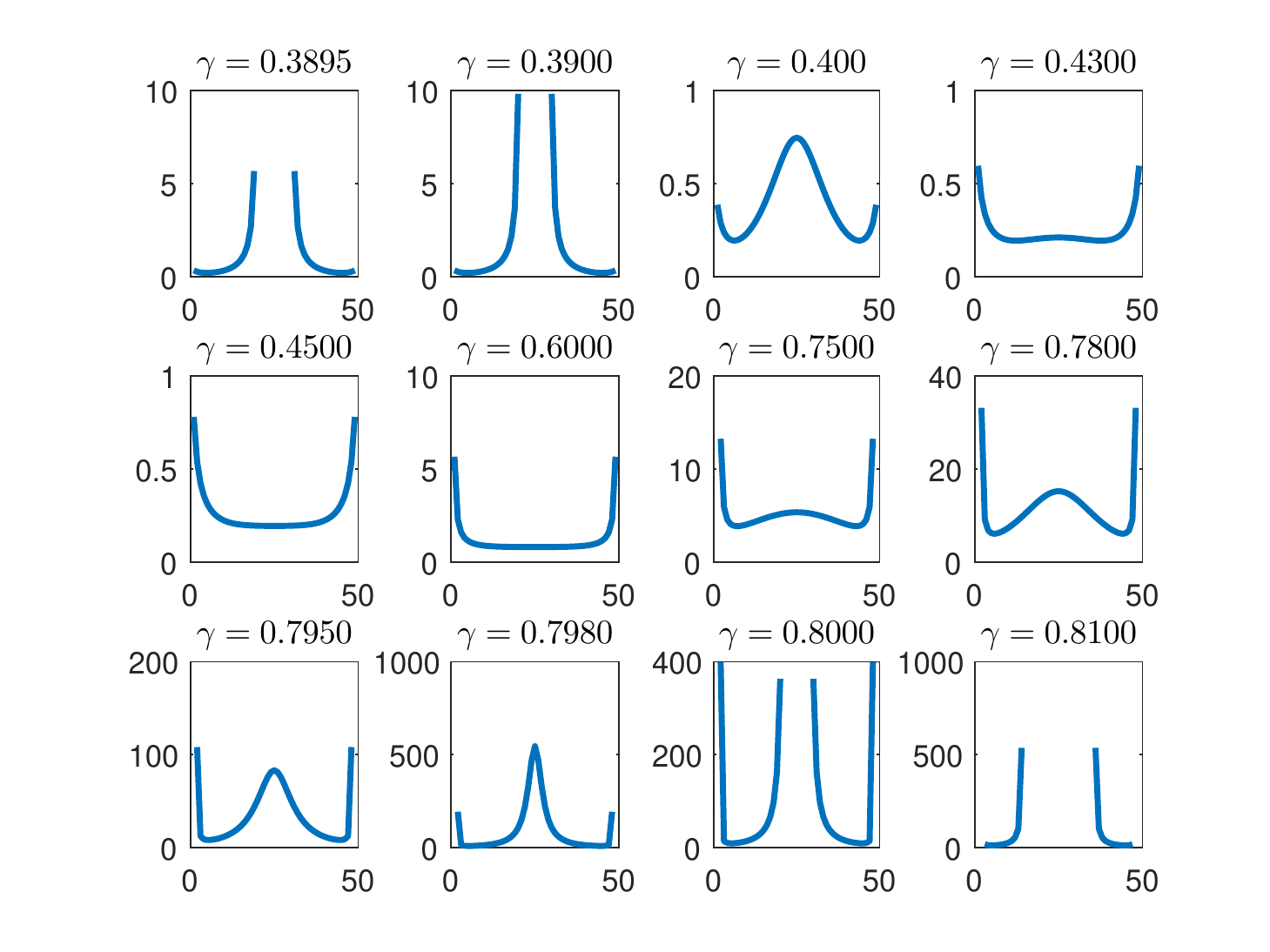} \caption{Phase transition of the risk measure for various exponents of $\gamma$. The $x$-axes of all figures represent the inter-distance between vehicles. The $y$-axes represent the risk of collision. The network parameters are $g=0.8$, $\beta=1$, $d=1$, $c=1.5$, $\tau=0.18$ and $\varepsilon=0.05$.}\label{fig: spatialdecay}
\end{figure}

\subsection{Spatially Decaying Topologies} In this case, we consider geometric graphs where all-to-all communication is allowed by enforcing  the range of connectivity to decrease with distance. This class of communication graphs is typical in  wireless communication networks. Each vehicle broadcasts its message, where signal-to-noise ratio of the received signal by another vehicle decreases with distance between the two vehicles.  We consider a platoon with $n=50$ vehicles with usual ascending labels  from $1$ to $50$. The receiver of the $i$'th vehicle collects the state information of the $j$'th vehicle with feedback gain  $k_{ij}=1.5\, e^{-\gamma |i-j|}$ for some $\gamma\geq 0$. The exponent $\gamma$ is the spatial decay index or localization parameter.  For small values of $\gamma$, the network topology approximates a complete graph. As $\gamma$ increases, the effective communication range of vehicles becomes more localized. For large enough  values of $\gamma$ that preserve connectivity, the communication topology resembles the path graph. In Figure \ref{fig: spatialdecay}, we illustrate $\mathfrak{R}_{0.05}^C$ for different exponents and observe various transition of the risk profile. For small $\gamma$, the graph is heavily connected, which combined with the effect of time-delay results in increased risk values. The high risk vehicles are the ones in the middle. As $\gamma$ increases, connectivity decreases and the platoon experiences lower risk of collision. It is remarkably interesting that the vehicles in the middle become very safe.  For larger values of $\gamma$, the communication network keeps losing connectivity and risk of systemic events  becomes more evident. The more susceptible vehicles for $\gamma\geq 0.85$ are again the ones in the middle. As $\gamma$ exceeds $1$, the communication gets very localized. Since $n$ is large, poor connectivity makes the platoon susceptible to noise, leading to infinite value of risk for many pairs of vehicles.

\section{Discussion} We focused on collision or detachment events between vehicles in a platoon, where each vehicle is modeled as a double integrator subject to exogenous  noise. The ensemble is controlled via a distributed consensus feedback control law that suffers from communication time-delay. We developed a risk oriented framework to assess the possibility of some relevant systemic events. Our model, albeit simple,  facilitates a rigorous analysis to the point of characterizing intrinsic interplay among the risk measures,  network connectivity,  time-delay, and statistics of the exogenous uncertainty. Our results  are particularly useful to design low-risk platoons by  optimizing topology of the underlying communication network. On the downside, we acknowledge a number of shortcomings that pose new challenges for future research.

We would like to point out that it is possible to study risk on more general models. Risk measures ask for distributional properties of stochastic processes (in our case, solutions of a stochastic differential equation). It is well-known that evolutionary densities of stochastic dynamical systems typically satisfy partial differential equations of Fokker-Planck type. This means that, at least in theory, it is feasible to evaluate risk of systemic events for a broad class of dynamical systems. There is, however, little hope for closed-form solutions such as the ones established in this work. Analysis may only be feasible on a case-by-case basis. Linearity helps us to leverage the Ornstein–Uhlenbeck type of processes with normal distributions. Risk analysis provides a solid framework to study the manner with which marginal variances that encapsulate features of the dynamical system (network topology, time-delay, noise) characterize systemic fragility as they project on the geometry of undesirable events. 

An interesting extension is to consider network with heterogeneous delays. Uniform time-delay is from many aspects simplistic. However, heterogeneity of time-delay destroys symmetry. A dire consequence is that it becomes quite challenging to identify stability region $S$ i.e., a crucial step for risk analysis. Numerical methods can be employed to overcome this hurdle \cite{kharitonov2012time}.

The two systemic events considered in this work are the event of vehicles collision and the event of vehicles detachment. Undesirable (unsafe) behavior is triggered when two successive vehicles are either too close or too far away.  The questions that fell within our scopes were on characterizing the types of interactions that promote robustness with respect to collision or detachment when real-world deficiencies, such as disturbance and time-delay, are present.   One future extension is to consider internal structure (e.g., engine dynamics) and non-trivial shapes of vehicles. It would be interesting to explore how vehicle dynamics integrate with network assimilated data and distributed control laws for motion coordination. To this end, we would like to mention one important feature of vehicle dynamics, which is omitted in this work. That is acceleration damping, i.e. a stabilizing state-feedback control term on the rate of change of velocity of vehicles that ensures finite speed of the ensemble. In addition, absolute feedback control methods are in many cases known to generally improve performance \cite{Bamieh12,4282756}. It would be very interesting to test these control policies from a risk theory perspective.

A final future research direction that is worth mentioning is to develop efficient and scalable algorithms to design communication topologies for platoons by striking a balance among connectivity, performance, and risk of collision or detachment events.



\begin{figure}\center
\includegraphics[trim=10 0 20 10, clip, width=8cm]{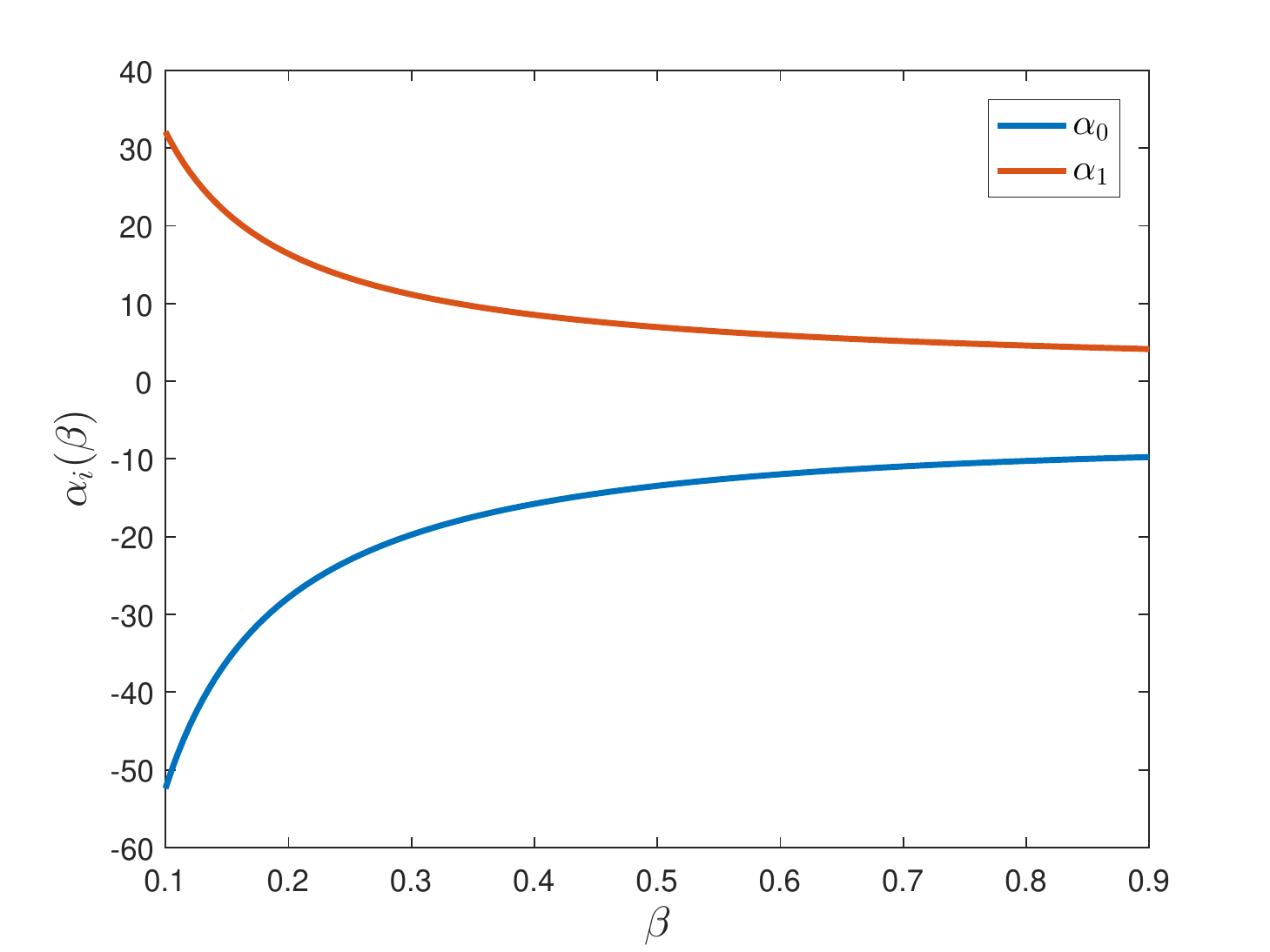}\caption{Plot of coefficients $\alpha_0$ and $\alpha_1$ as functions of or $\beta$.}\label{fig: coeffs1and2}
\end{figure}

\section*{Appendix A: Rational Approximation Of $f(s_1,s_2)$} 

The function $f(s_1,s_2)$ in \eqref{eq: functionf} plays an instrumental role in the actual calculation of risk. As it is not explicitly stated, we can approximate $f(s_1,s_2)$ for $s_1$ and $s_2$ in a compact subset $\overline{S}$ of $S$, defined as: 
$$\overline{S}=[0.1,s^*-0.05]\times [0.1, 0.9]$$ where $s^*~:~\frac{s^*}{\tan(s^*)}=s_2,~s_2\in [0.1,0.9],$ and it is presented in Figure \ref{fig: stability2}, together with $S$.  For $s_1,s_2\in \overline{S}\subset S$, we will construct a rational approximation of $f(s_1,s_2)$. Our approach relies on ideas developed in \cite{pranic2013}. 

Now, we see that for fixed $s_2\in (0,1)$ the function $f_{s_2}(s_1)=f(s_1,s_2)$ attains a pole at $s_1=0$ of order 4 and a pole at $s_1=s^*=s^*(s_2)\in (0,\pi/2)$ of order 1. In fact, the collection of all poles $s^*(s_2),~s_2\in(0,1)$ is the curved boundary of $S$. The zero poles lie along the vertical axis. For fixed $s_2\in [0.1,0.9]$ we recall $s^*=s^*(s_2)$ and consider the vector space $\mathcal T_{s_2}$ spanned by the functions
\begin{equation*}
\mathcal T_{s_2}=\bigg\{ 1, ~s_1, ~\frac{1}{s_1},~ s_1^2, ~\frac{1}{s_1^2}, ~s_1^3, ~\frac{1}{s_1-s^*},~s_1^4\bigg\}
\end{equation*} The inner product $$\langle g_1,g_2 \rangle =\int_{0.1}^{s^*-0.05}g_1(t)g_2(t)\,dt$$ for any $g_1,g_2 \in \mathcal T_{s_2}$, will be used to generate an orthonormal basis of $\mathcal T_{s_2}$ following the Gram-Schmidt process. We arrive at 
\begin{equation*}\begin{split}
\mathcal Q_{s_2}=\big\{\psi_{s_2}^{(0)}(s_1),\psi_{s_2}^{(1)}(s_1),\psi_{s_2}^{(2)}(s_1),& \psi_{s_2}^{(3)}(s_1),\psi_{s_2}^{(4)}(s_1),\\ 
& \psi_{s_2}^{(5)}(s_1),\psi_{s_2}^{(6)}(s_1),\psi_{s_2}^{(7)}(s_1)\big\}.
\end{split}
\end{equation*}  We introduce
\begin{equation*}\label{eq: firstequation}
P(s_1,s_2)= \sum_{k=0}^7 w_k \psi_{s_2}^{(k)}(s_1).
\end{equation*} with the weights $w_k=w_k(s_2)=\langle f_{s_2},\psi_{s_2}^{(k)}\rangle$. From the orthonormalization process,  $\psi_{s_2}^{(k)}$ are linear combinations of elements of $\mathcal T_{s_2}$. Thus we can write $P$ as
\begin{equation*}
\begin{split}
P(s_1,s_2)&=\frac{\sum_{k=0}^7\alpha_k(s_2)s_1^k}{s_1^2\,\big(s_1-s^*(s_2)\big)}
\end{split}
\end{equation*}
for $\alpha_k$ that generally depend on $s_2\in [0.1,0.9]$. The first two coefficients, $\alpha_1$ and $\alpha_2$ are illustrated in Figure \ref{fig: coeffs1and2}. Numerical explorations show that $\alpha_k,~k=2,\dots,7$ attain constant values  $\alpha_2 \approx -0.0742 $, $\alpha_3 \approx 0.0198 $, $\alpha_4 \approx -0.0036 $, $\alpha_5 \approx 0.0008 $, $\alpha_6 \approx -10^{-4} $, $\alpha_7 \approx 10^{-6}$.\footnote{The aforementioned coefficients appear to be approximately constant for a large range of values $s_2$ in $[0.1,0.9]$. In fact they all turn to be non-constant with non-smooth behavior for values near to $0.9$. We approximated $\{\alpha_i \}_{i=2}^7 $ with their average in $[0.1,0.9]$.} We can discard the terms $\alpha_5,\alpha_6,\alpha_7$, for being of negligible magnitude and this yields 
\begin{equation*}
\begin{split}
\tilde{f}(s_1,s_2)&=\frac{-0.0036 s_1^4+0.0198s_1^3-0.0742s_1^2+\alpha_1(s_2)s_1+\alpha_0(s_2)}{s_1^2(s_1-s^*(s_2))}\\
&=\frac{q_{s2}(s_1)}{s_1^2(s_1-s^*(s_2))}.
\end{split}
\end{equation*}

Elementary algebra yields
\begin{equation*}\label{eq: equationsimpler2}\begin{split}
\tilde{f}(s_1,s_2)&=- \frac{ q_{s_2}(s_1)}{(s^*(s_2))^2}\frac{s_1+ s^*(s_2)}{s_1^2}+\frac{q_{s_2}(s_1)}{(s^*(s_2))^2}\frac{1}{s_1-s^*(s_2)}\\
&=\frac{A_{s_2}(s_1)}{s_1^2}+\frac{B_{s_2}(s_1)}{s_1-s^*(s_2)}
\end{split}
\end{equation*} where 
\begin{equation*}
\begin{split}
A_{s_2}(s_1)&=-\frac{1}{s^*(s_2)}q_{s_2}(s_1)\big(s+s^*(s_2)\big)=\sum_{i=0}^5 a_i(s_2) s_1^i, \\
B_{s_2}(s_1)&=\frac{q_{s_2}(s_1)}{(s^*(s_2))^2}=\sum_{i=0}^4 b_i(s_2) s_1^i
\end{split}
\end{equation*}
The accuracy of $\tilde{f}(s_1,s_2)$ is validated numerically. The results are depicted in Figure \ref{fig: relativerror3}. The maximum relative  error is of order $10^{-4}$. 
\begin{figure*}[t]
\includegraphics[scale=0.39]{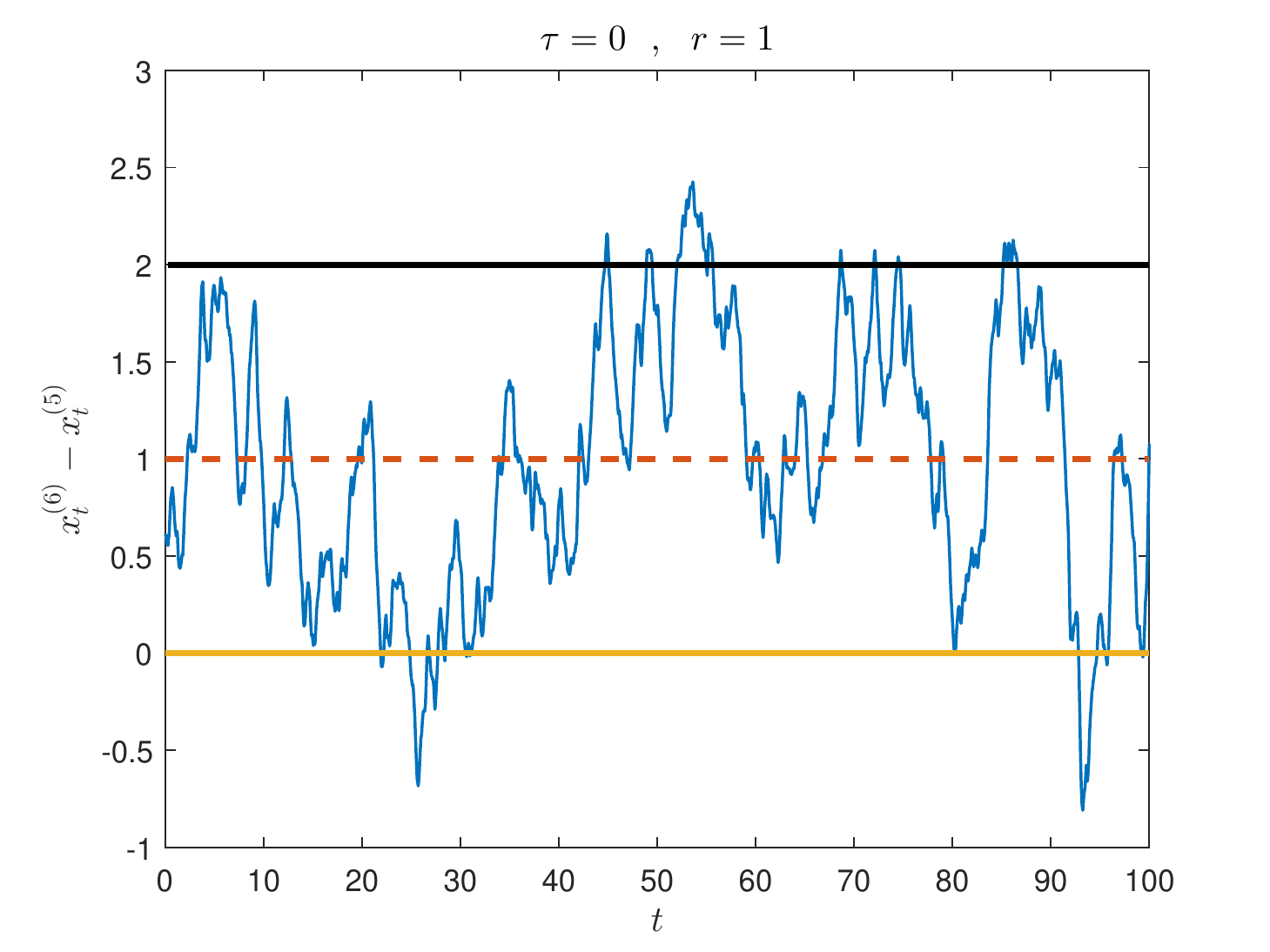}
\includegraphics[scale=0.39]{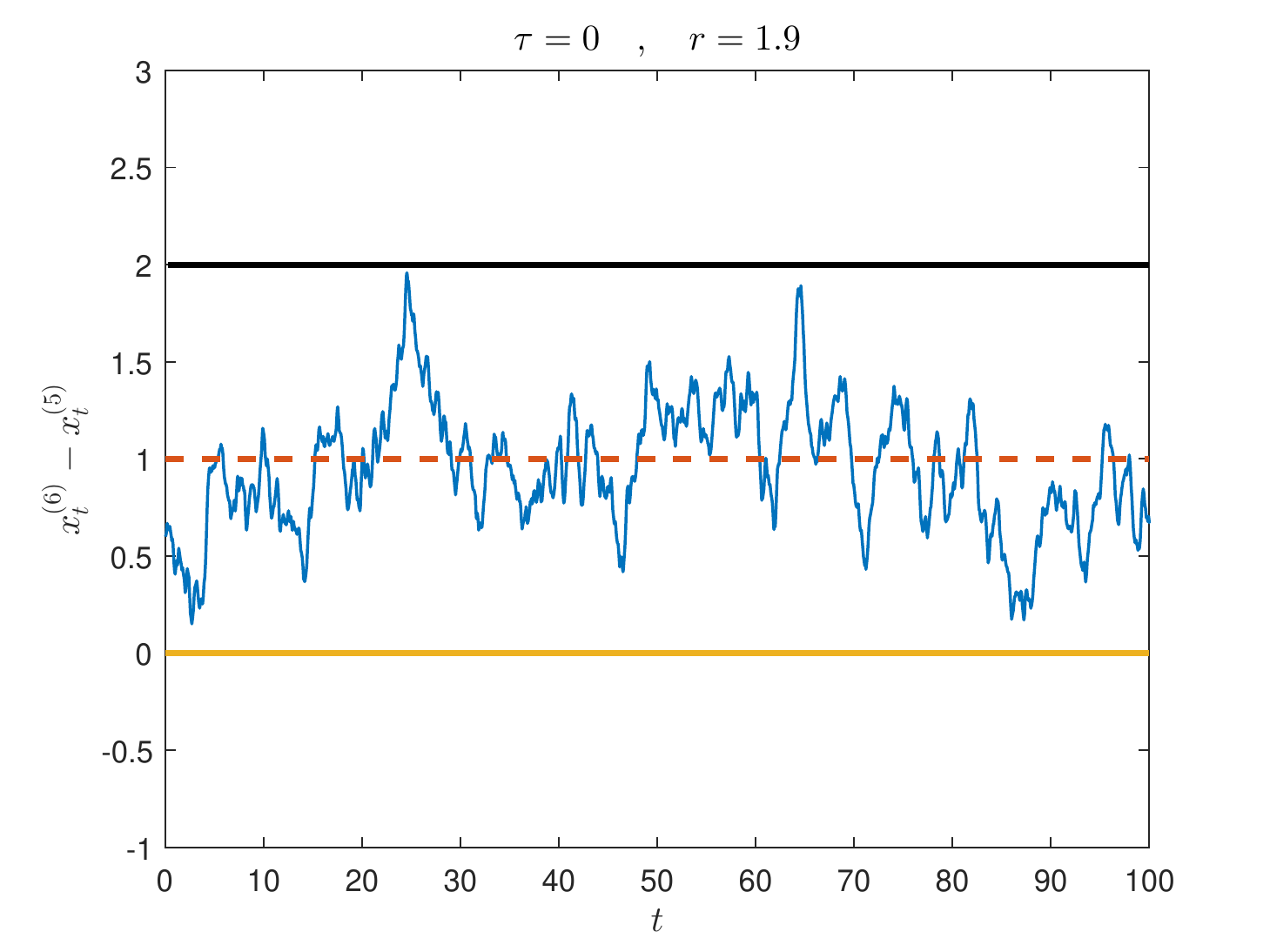} 
\includegraphics[scale=0.39]{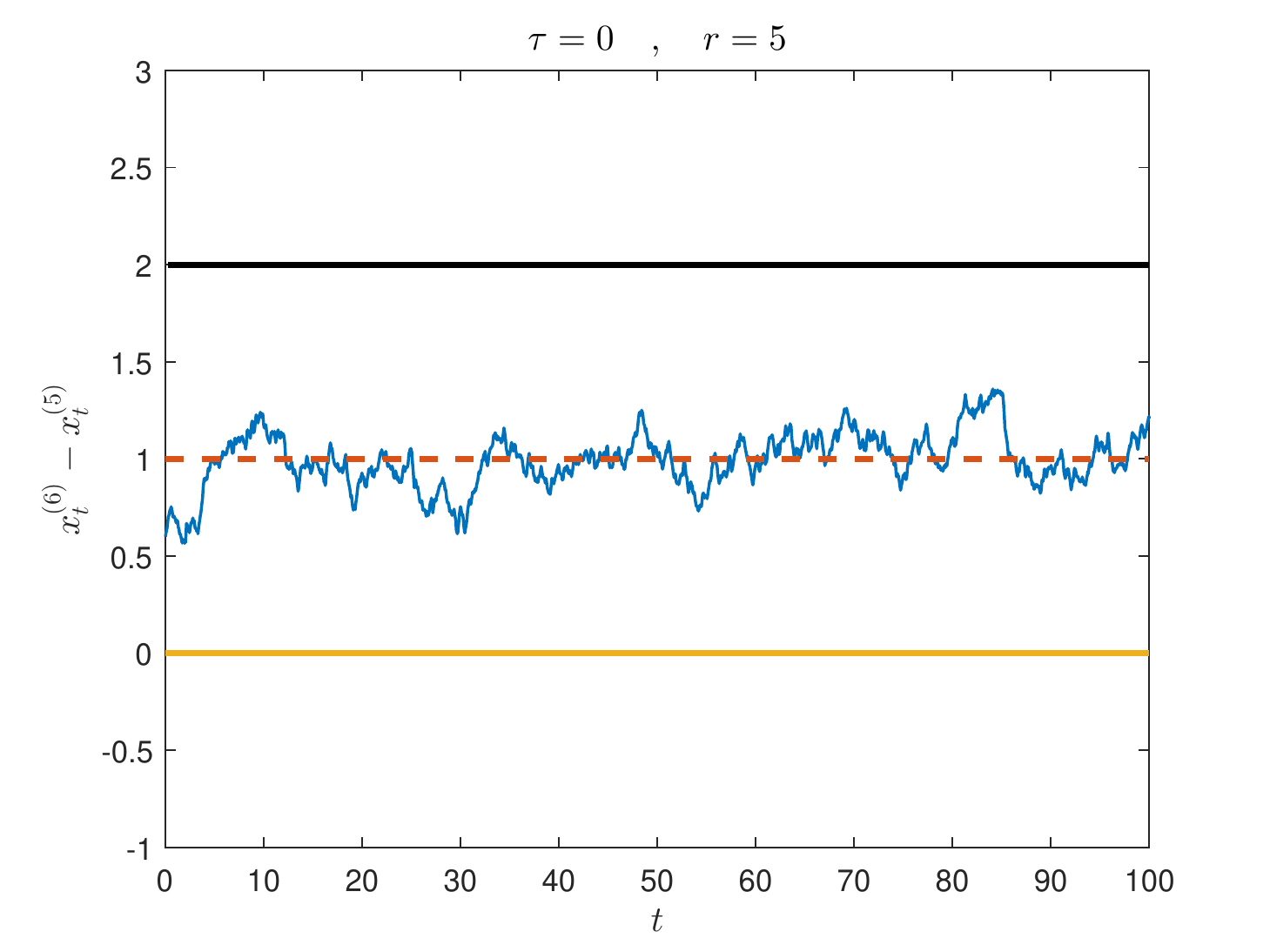}\\ 
\includegraphics[scale=0.39]{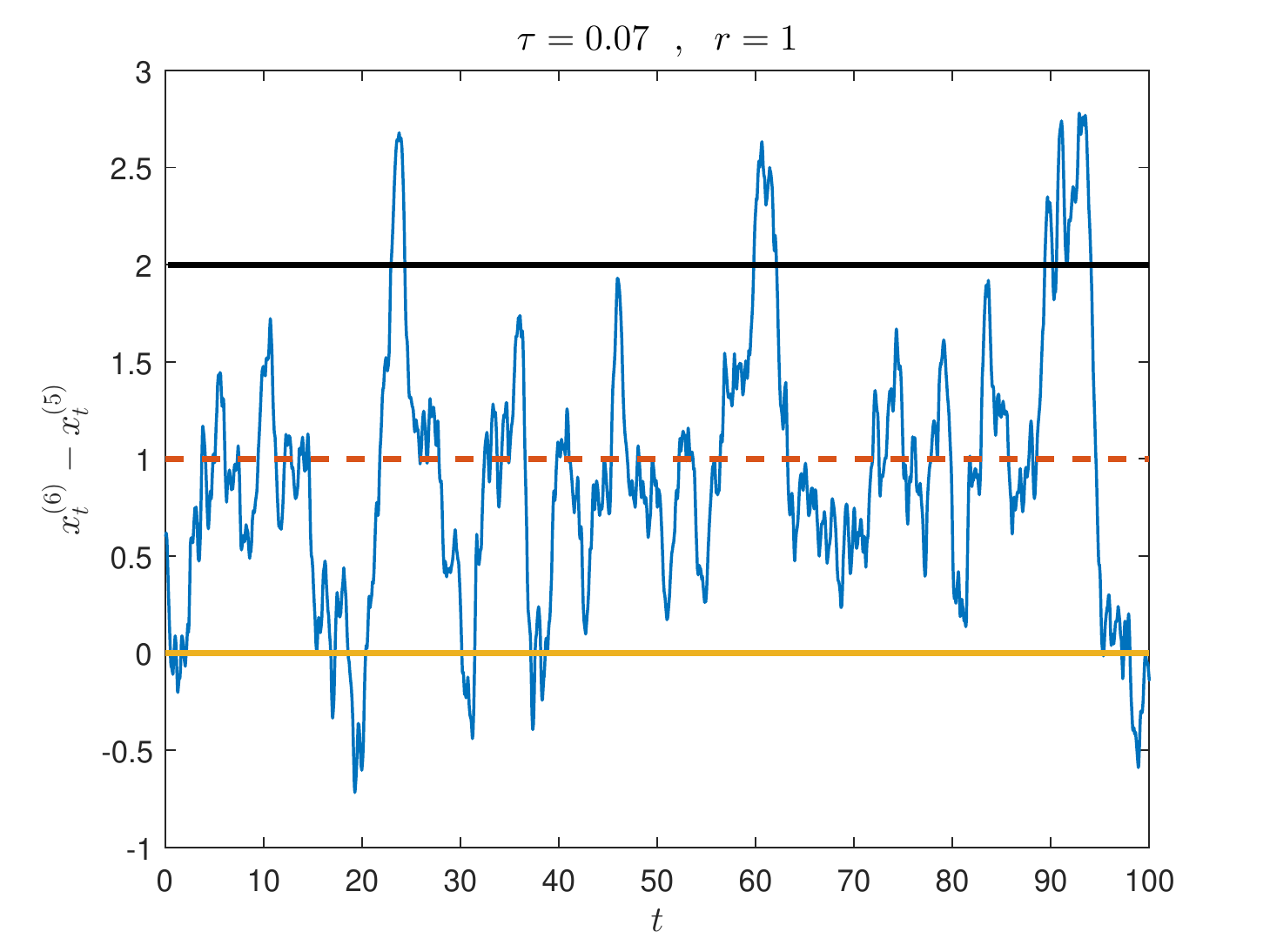}
\includegraphics[scale=0.39]{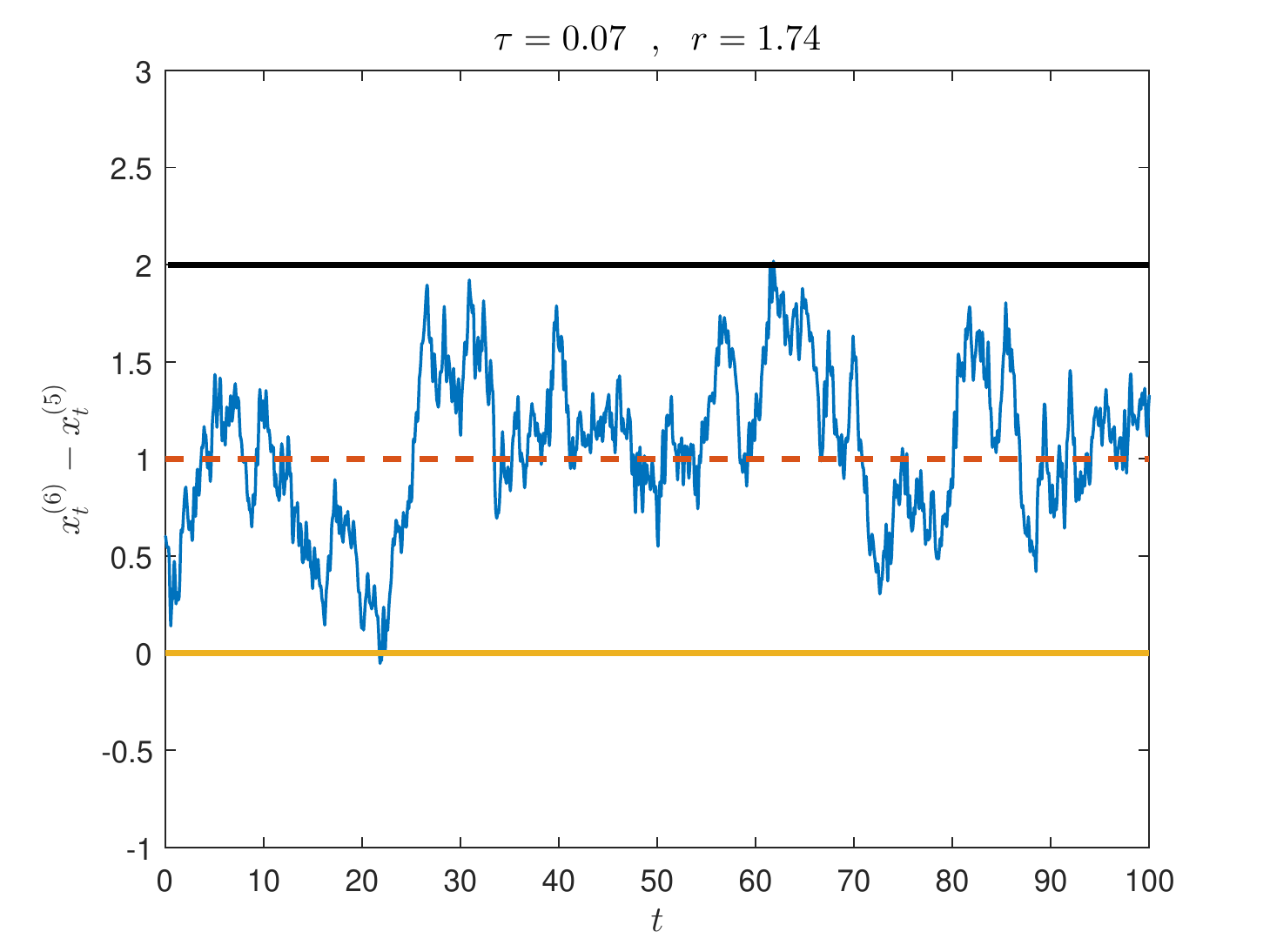}
\includegraphics[scale=0.39]{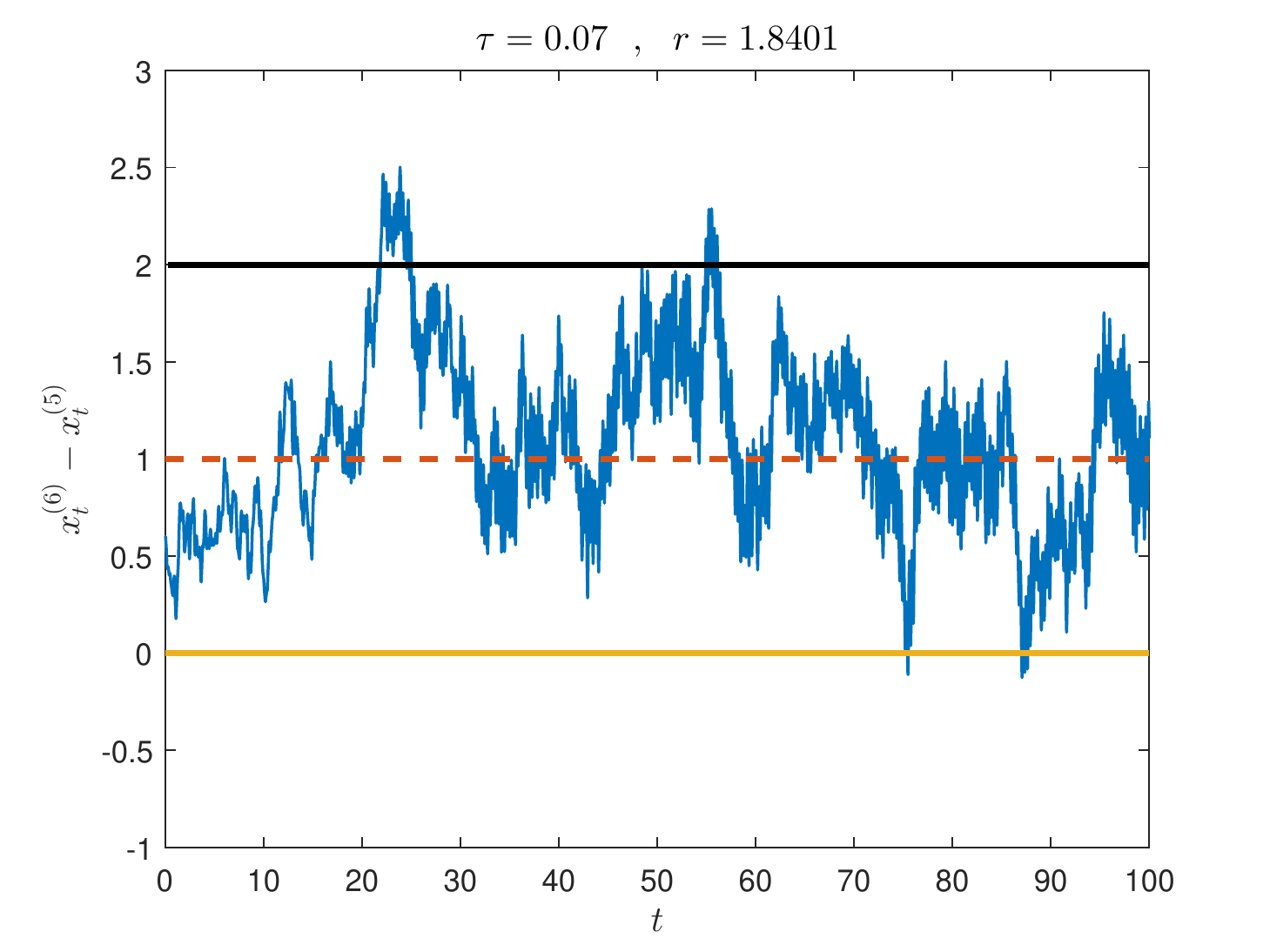}
\caption{ Simulation of Example \ref{example1}. The initial distance of 0.6 space units between vehicles 6 and 5 converges to a fluctuating distance around the desired value exponentially fast. In the absence of time delay, when $r$ increases (i.e., $\Xi_{\mathcal G}$ decreases), the risk measure improves monotonically with the network connectivity. The situation alters with non-negligible time delay, as the second row of figures illustrates.}\label{fig: simulation1}
\end{figure*}

\section*{Appendix B: Proofs}

\begin{proof}[Proof of Theorem \ref{thm: main0}]
The stability problem of \eqref{eq: sys0} is directly related to the stability of \eqref{eq: sys2}, that for $g=0$, decouples to
\begin{equation}\label{eq: sys3}
dz_t^{(i)}=\upsilon_t^{(i)} d_t,\hspace{0.2in}d\upsilon_t^{(i)}=-\lambda_i \upsilon^{(i)}_{t-\tau}\,dt-\beta \lambda_i z^{(i)}_{t-\tau}\,dt,
\end{equation} for $i=1,\dots,n$. We can now study the dynamics of \eqref{eq: sys0}, by looking at \eqref{eq: sys3} for every $i=1,\dots,n$, independently. The stability of the $i^{th}$ sub-system \eqref{eq: sys3} is characterized through the location of the roots of the characteristic function $$\Delta_i(s)=s^2+\lambda_i(\beta+s) e^{-\tau s}.$$
For $\tau>0$ we apply Theorem 13.12 in \cite{bellman1963differential}, to conclude that all roots of $\Delta_i(s)$ lie in the left complex half plane iff the conditions on $\lambda_i$ and $\beta$ hold. Standard results in the theory of delay differential equations \cite{lunel93} assure that as $t\rightarrow \infty$
$$ z_t^{(i)}\rightarrow 0 \hspace{0.1in} \text{and} \hspace{0.1in} \upsilon_t^{(i)}\rightarrow 0  $$
exponentially fast for all $i>1$. Inverting the transformation, we observe that in the limit 
$\mathbf v(t)=Q\boldsymbol \upsilon(t) \rightarrow Q [\upsilon_0,0,\dots,0]^T = Q [\frac{1}{\sqrt{n}}\sum_{i=1}^n v_0^{(i)},0,\dots,0]^T$
 which in turn implies $\lim_{t\rightarrow +\infty}v^{(i)}_t= \frac{1}{n}\sum_{i=0}^n v_0^{(i)}$. On the other hand, $x_t^{(i)}-x_t^{(j)}=d_i-d_j+\mathbf \sum_{l=1}^n(q_{il}-q_{jl}\big)z_t^{(l)}=d_i-d_j+\mathbf \sum_{l=2}^n(q_{il}-q_{jl}\big)z_t^{(l)}$ because $q_{i1}=q_{j1}$. So as $t\rightarrow \infty$, $z_t^{(l)}\rightarrow 0$ for $l>1$ and thus $x_t^{(i)}-x_t^{(j)}\rightarrow d_i-d_j=(i-j)d$.
\end{proof}

\begin{proof}[Proof of Theorem \ref{thm: main1}] The spatial-related steady-state statistics of the $i^{th}$ sub-system $\overline{z}^{(i)}$, are obtained from the marginal distribution of 
\begin{equation}\label{eq: statistics1}
\overline{z}^{(i)} \sim \mathcal N\bigg(0,g^2 \mathbf e_1^T\int_0^{\infty}\Phi_i(s)\begin{bmatrix}
0 &  0\\
0 & 1 
\end{bmatrix}\Phi_i^T(s)\,ds\,\mathbf e_1\bigg)
\end{equation}
where $\mathbf e_1=[1,0]^T$. Using Parseval's identity we obtain a more convenient integral representation: 
\begin{equation*}
\sigma_i^2=\frac{1}{2\pi}\int_{\mathbb R}\text{tr}\big[G^H(j\omega) G(j\omega)\big]\,d\omega
\end{equation*}
where $j^2=-1$, $G_i(j\omega)$ is the input-output transfer function: 
\begin{equation*}
G_i(j\omega)=g \,\mathbf e^T_1 \bigg(j\omega I_{2\times 2}- \begin{bmatrix}
0 & 1 \\
0 & 0
\end{bmatrix}- \begin{bmatrix}
0 & 0 \\
-\lambda_i\beta & -\lambda_i
\end{bmatrix}e^{-j\omega\tau}\bigg)^{-1}B_i
\end{equation*} and $G^H_i$ the complex conjugate of $G_i$. Elementary algebra yields
\begin{equation*}\begin{split}
\text{tr}\big[G^H(j\omega) G(j\omega)\big]&=\frac{g^2}{|\Delta_i(j\omega)|^2}\\
&=\frac{g^2}{(\lambda_i\beta-\omega^2 \cos(\omega\tau))^2+\omega^2(\lambda_i-\omega \sin(\omega\tau))^2}
\end{split}
\end{equation*}
and so\begin{equation*}
\sigma_i^2 = \frac{g^2}{2\pi}\int_{\mathbb R} \frac{d\omega}{(\lambda_i\beta-\omega^2 \cos(\omega\tau))^2+\omega^2(\lambda_i-\omega \sin(\omega\tau))^2} 
\end{equation*}
Changing variables to $\overline{\omega}= \omega \tau$, we obtain
\begin{equation*}\begin{split}
\sigma_i^2  = \frac{g^2\tau^3}{2\pi}\int_{\mathbb R} \frac{d\overline{\omega}}{\big((\lambda_i\tau)(\beta\tau)-\overline{\omega}^2 \cos(\overline{\omega})\big)^2+\overline{\omega}^2(\lambda_i\tau-\overline{\omega} \sin(\overline{\omega}))^2}.
\end{split} 
\end{equation*}
\end{proof}

\begin{proof}[Proof of Proposition \ref{prop: monotonicityofrisk}] Assume $\varepsilon_2<\varepsilon_1$. Then $\mathcal R_{\varepsilon_2}=\delta_2$, where $\delta_2$ is the solution of the optimization problem
\begin{eqnarray*}
& & \hspace{-0cm}\underset{\delta}{\textrm{minimize}}  ~~~~~ \delta \\
& & \hspace{-0cm} \mbox{subject to:} ~~~ \mathbb P\left\{\Sp y \in U_{\delta}\Sp \right\} < \varepsilon_2.
\end{eqnarray*} Since $\varepsilon_2<\varepsilon_1$, $\delta_2$ is a feasible solution of the optimization problem
\begin{eqnarray*}
& & \hspace{-0cm}\underset{\delta}{\textrm{minimize}}  ~~~~~ \delta \\
& & \hspace{-0cm} \mbox{subject to:} ~~~ \mathbb P\left\{\Sp y \in U_{\delta}\Sp \right\} < \varepsilon_1.
\end{eqnarray*} Therefore $\mathcal R_{\varepsilon_1}<\mathcal R_{\varepsilon_2}$. Now, let $\mathcal R_{\varepsilon_1}<\mathcal R_{\varepsilon_2}$ and label $\mathcal R_{\varepsilon_i}=\delta_i$ for $i=1,2$. By construction of $\{U_{\delta}\}$, $\mathbb P(\{y\in U_{\delta_i}\})=\varepsilon_i$ for $i=1,2$. Now,  $\delta_1<\delta_2$  implies  $U_{\delta_2}\subset U_{\delta_1}$. Then $\{y\in U_{\delta_2}\}\subset\{y\in U_{\delta_1}\}$ and thus  $\mathbb P\big(y\in U_{\delta_2}\big)< \mathbb P\big(y\in U_{\delta_1}\big)$, concluding the proof.
\end{proof}
\begin{proof}[Proof of Theorem \ref{cor: main1}]
From Theorem \ref{thm: main1} we have shown that the $2n\times n$ vector $\boldsymbol \psi:=\big[z^{(1)}_t,\upsilon_t^{(1)},\dots,z^{(n)}_t,\upsilon_t^{(n)}\big]^T$ is normally distributed with covariance matrix
\begin{equation*}\begin{split}
g^2\int_0^t \begin{bmatrix}
\Phi_1(t-s)B_1 \\ \vdots \\
 \Phi_n(t-s)B_n
\end{bmatrix}\begin{bmatrix}
B_1^T\Phi_1^T(t-s)~|~ \dots~|~ B_n^T \Phi_n^T(t-s)
\end{bmatrix}\,ds=\\
=g^2 \int_0^{t} \text{diag}\big\{\Phi_i(t-s)B_i B_i^T \Phi_i^T(t-s)\big\}\,ds
\end{split}
\end{equation*} by virtue of the form of $2\times n$ matrices $B_i$.  We introduce the affine  transformation $\mathbf z_t= E \boldsymbol \psi$ for  $E$ a $n\times 2n$ matrix with the $i^{th}$ row of to be the $1\times 2n$ canonical vector $\mathbf e_{2i-1}^T$. The covariance matrix of $\mathbf z_t$ is written as $g^2\int_0^t \tilde{\Phi}(t-s) \,ds $ for the $n\times n$ matrix 
\begin{equation*}
\tilde{\Phi}(t-s)=\text{diag} \bigg \{\begin{bmatrix}
1 & 0
\end{bmatrix}\Phi_i(t-s)\begin{bmatrix}
0 & 0 \\
0 & 1
\end{bmatrix}\Phi_i(t-s)^T \begin{bmatrix}
1 \\ 0
\end{bmatrix}\bigg\}
\end{equation*} From $\mathbf x=Q\mathbf z+\mathbf d$ and for fixed $i\geq 1$,  we can express the relative position of vehicles $i+1$ and $i$ as  $$x_t^{(i+1)}-x_t^{(i)}= [\mathbf e_{i+1}-\mathbf e_i]^T Q \mathbf{z}+d $$ In view of Assumption \ref{assum0} we have $q_{i1}\equiv q_{j1}$, we conclude that the latter is yet another affine tranformation of a multivariate normal distribution, from which we deduce that $x_t^{(i+1)}-x_t^{(i)}$ is normally distributed, with covariance matrix $$\sum_{j=2}^n g^2\big([\mathbf e_{i+1}-\mathbf e_i]^T\mathbf q_j\big)^2\int_0^t \tilde{\Phi}_{jj}(t-s) \,ds.$$
Take the limit $t\rightarrow \infty$ and apply Theorem \ref{thm: main1} to conclude.
\end{proof}
\begin{proof}[Proof of Theorem \ref{thm: main2}]
Observe that \eqref{eq: risk} is equivalent to
\begin{equation}\label{eq: auxrisk}
\inf\bigg\{\delta>0~:~\int_{-\infty}^{-\frac{d}{\sigma_i\sqrt{2}}\frac{\delta+c-1}{\delta+c}}e^{-t^2}\,dt<\varepsilon\sqrt{\pi} \bigg\}
\end{equation}
Therefore, in addition to the case $\sigma_i=0$, $\mathcal R_{\varepsilon}^{C,i}=0$ is equivalent to
\begin{equation*}\begin{split}\int_{-\infty}^{-\frac{d}{\sigma_i \sqrt{2}}\frac{c-1}{c}}e^{-t^2}\,dt<\varepsilon\sqrt{\pi}\Leftrightarrow~ \varepsilon>\frac{1-\text{erf}\big(\frac{d}{\sigma_i \sqrt{2}}\frac{c-1}{c}\big)}{2}.
\end{split}
\end{equation*} The last inequality is equivalent to the mutually exclusive cases:
\begin{equation*}
\bigg\{ \varepsilon\geq \frac{1}{2} \bigg \} ~~~\text{and}~~ \bigg\{ \varepsilon<\frac{1}{2}~~\text{and}~~\sigma_{i}\leq \frac{d}{\kappa_{\varepsilon}\sqrt{2}}\frac{c-1}{c} \bigg\}.
\end{equation*}
The union of these two cases, covers the first branch of \eqref{eq: riskcolformula}. On the other hand,
$$\mathcal R_{\varepsilon}^{C,i}=\infty~\Leftrightarrow~\int_{-\infty}^{-\frac{d}{\sigma_i\sqrt{2}}}e^{-t^2}\,dt\geq \varepsilon\sqrt{\pi}$$ which in turn is equivalent to
$\text{erf}\big(\frac{d}{\sigma_i\sqrt{2}}\big)\leq 1-2\varepsilon$, where $\text{erf}(\cdot)$ is the error function. For $1-2\varepsilon\in (0,1)$ we can invert the error function and conclude $$\mathcal R_{\varepsilon}^{C,i}=\infty~\Leftrightarrow~\sigma_i\geq \frac{d}{\sqrt{2}\text{erf}^{-1}(1-2\varepsilon)}.$$
For the third branch of $\mathcal R_{\varepsilon}^{C,i}$, we observe that the infimum of $\delta$ in \eqref{eq: auxrisk} is achieved at 
$$\int_{-\infty}^{-\frac{d}{\sigma_i\sqrt{2}}\frac{\delta+c-1}{\delta+c}}e^{-t^2}\,dt=\varepsilon\sqrt{\pi},$$ and a convenient representation of risk is $$\mathcal R_{\varepsilon}^{C,i}=\frac{\sqrt{2}\kappa_\varepsilon \sigma_i c -d (c-1)}{d-\kappa_\varepsilon \sigma_i \sqrt{2} }=\frac{d}{d-\kappa_\varepsilon \sigma_i \sqrt{2}}-c.$$
\end{proof} 
\begin{proof}[Proof of Theorem \ref{thm: riskestimates}] The proof relies on the Boole-Fr\'echet inequalities:\footnote{The proof of which can be found in \cite{halperin65}.}:  Let the collection of $\mathcal F$-measurable events $A_1,\dots,A_m$. Then,
\begin{equation}\label{eq: bf1}
\max_{i}\bigg\{\mathbb P (A_i) \bigg\} \leq \mathbb P \bigg(\bigcup_{i=1}^m A_i\bigg) \leq \min \bigg\{1, ~\sum_{i=1}^{m}\mathbb P(A_i) \bigg\},
\end{equation}
\begin{equation}\label{eq: bf2}
\begin{split}
\max\bigg\{0,~\sum_{i=1}^{m}\mathbb P(A_i)  -&(m-1)\bigg\}  \leq \mathbb P \bigg(\bigcap_{i=1}^m A_i\bigg) \leq \min_{i}\bigg\{\mathbb P (A_i) \bigg\}.
\end{split}
\end{equation}
These inequalities are the best possible probability estimates of the events $\bigcup_{i=1}^m A_i$ or $\bigcap_{i=1}^m A_i$ when nothing else is known, other than the individual probabilities $\mathbb P(A_i),~i=1,\dots,m$ . We will focus on global collision event risk as the steps on risk of global detachment are identical.
Observe that  \eqref{eq: jointcaprisk}, can be cast as the solution of the chance constraint optimization problem
\begin{eqnarray}
& &\underset{\boldsymbol \delta}{\textrm{minimize}}  ~~~~~\boldsymbol \delta\label{eq: problem1}\\
& & \mbox{subject to:} ~~~ \mathbb P\left(\Sp \bigcap_{i=1}^{n-1} \big\{ y^{(i)} \in C_{\delta_i}\big\} \Sp \right) < \varepsilon.\label{eq: problem1-2}
\end{eqnarray}
Consider the solutions $\{\delta_i^+\}_{i=1}^m$ of the scalar problems
\begin{eqnarray}
& & \hspace{0cm}\underset{\delta_i}{\textrm{minimize}}  ~~\delta_i \label{problem2}\\
& & \hspace{0cm} \mbox{subject to:} ~~ \mathbb P\big\{\Sp y^{(i)} \in C_{\delta_i}\Sp\big\} < \varepsilon,\label{problem2-1}
\end{eqnarray} 
In view of the right hand-side of \eqref{eq: bf1}, $\boldsymbol\delta^+=\big(\delta_1^+,\dots,\delta_{n-1}^+\big)$ is, in fact, a feasible solution of \eqref{eq: problem1}-\eqref{eq: problem1-2} :
\begin{equation*}\begin{split}
&\mathbb P\bigg(\bigcap_{i=1}^{n-1} \big\{ y^{(i)} \in U_{\delta_i^+}\big\} \bigg)\leq \min_{i}\bigg\{ \mathbb P\big\{  y^{(i)} \in U_{\delta_i^+}\big\}  \bigg\}\leq \varepsilon.
\end{split}
\end{equation*} This establishes the upper bound of $\mathbb V$. The lower bound of set $\mathbb V$ is trivial. 
%
%
%
We proceed with the second pair, for which we remark that is the solution of the following constraint optimization problem
\begin{eqnarray}
& & \hspace{0cm}\underset{\boldsymbol \delta}{\textrm{minimize}}  ~~\boldsymbol \delta\label{eq: problem3}\\
& & \hspace{0cm} \mbox{subject to:} ~~ \mathbb P\left(\Sp \bigcup_{i=1}^{n-1} \big\{ y^{(i)} \in C_{\delta_i}\big\} \Sp \right) < \varepsilon.\label{eq: problem3-1}
\end{eqnarray}
On the other hand, we can consider the solutions $\delta_1^*,\dots,\delta_{n-1}^*$ of the scalar problems
\begin{eqnarray*}
& & \hspace{0cm}\underset{\delta_i}{\textrm{minimize}}  ~~\delta_i \label{problem4}\\
& & \hspace{0cm} \mbox{subject to:} ~~ \mathbb P\big\{\Sp y^{(i)} \in U_{\delta_i} \Sp \big\} < \varepsilon_i ,\label{problem4-1}
\end{eqnarray*} for $i=1,\dots,n-1$, and $\varepsilon_i \in (0,1)$ such that $\sum_{i=1}^{n-1}\varepsilon_i =\varepsilon$.  Using \eqref{eq: bf1} we can show that $\boldsymbol \delta^*$ is a feasible solution of \eqref{eq: problem3}-\eqref{eq: problem3-1}. Indeed,
\begin{equation*}\begin{split}
 \mathbb P\left(\Sp \bigcup_{i=1}^{n-1} \big\{ y^{(i)} \in U_{\delta_i^*}\big\} \Sp \right)&\leq  \min\bigg\{1, \sum_{i=1}^{n-1}\mathbb P\big\{y^{(i)} \in U_{\delta_i^*} \big\}\bigg\}\\& < \min\bigg\{1,\sum_{i=1}^n \varepsilon_i\bigg\}=\min\{1,\varepsilon\}=\varepsilon.
 \end{split}
\end{equation*}. On the other hand, if $\boldsymbol{\delta}^\dagger$ is the optimal solution of \eqref{eq: problem3}-\eqref{eq: problem3-1}, then 
$$ \mathbb P\big\{\Sp  y^{(i)} \in U_{\delta_i^\dagger} \big\} <\varepsilon, ~ ~i=1,\dots,n-1,$$ by virtue of the left-hand side of \eqref{eq: bf1}. Consequently, $\delta_{k}^\dagger$ solve 
\begin{eqnarray*}
& & \hspace{0cm}\underset{\delta_k}{\textrm{minimize}}  ~~\delta_k \label{problem5}\\
& & \hspace{0cm} \mbox{subject to:} ~~ \mathbb P\big\{\Sp y^{(i)} \in U_{\delta_i} \Sp \big\} < \varepsilon\label{problem5-1}
\end{eqnarray*} concluding the proof of the theorem. 
\end{proof}

\begin{proof}[Proof of Lemma \ref{prop: limit}]
At first, observe that $f(s_1,s_2)\geq 0$ for $(s_1,s_2)\in S$ attains a minimum in the interior of $S$. This is because $S$ attains a compact closure, and $f$ diverges on its boundary. By virtue of continuity $f$ attains a minimum in its interior, let $\underline{f}$ be this minimum i.e., $\underline{f}=\inf_{(s_1, s_2)\in S}f(s_1,s_2)>0$. This is a value that can be numerically approximated $\underline{f}\approx 25.4603$. Then for $\sigma_i^2$ as in Theorem \ref{cor: main1}, we calculate:
\begin{equation*}\begin{split}
\sigma^2_i&=g^2\frac{\tau^3}{2\pi}\sum_{j=2}^n\big([\mathbf e_{i+1}-\mathbf e_i]^T\mathbf q_j\big)^2 f\big(\lambda_j\tau,\beta\tau\big)\\
&\geq g^2 \frac{\tau^3 \cdot \underline{f}}{2\pi}\sum_{j=2}^n\big([\mathbf e_{i+1}-\mathbf e_i]^T\mathbf q_j\big)^2\\&=\frac{g^2\,\tau^3\,\underline{f}}{2\pi}\,||\mathbf e_{i+1}-\mathbf e_i||^2= \frac{1}{\pi}\, \underline{f} \cdot g^2 \tau^3.
\end{split}
\end{equation*} where $\sum_{j=2}^n\big([\mathbf e_{i+1}-\mathbf e_i]^T\mathbf q_j\big)^2 =\sum_{j=1}^n\big([\mathbf e_{i+1}-\mathbf e_i]^T\mathbf q_j\big)^2=||\mathbf e_{i+1}-\mathbf e_i||^2$ due to the orthogonality of the vectors $\{\mathbf q_i\}_{i=1}^{n}$.
\end{proof}
\begin{proof}[Proof of Theorem \ref{cor: limit1}] The result follows directly after combining Theorems \ref{thm: main1}, \ref{thm: main2}, and Lemma \ref{prop: limit}.
\end{proof}
\begin{proof}[Proof of Theorem \ref{prop: effectiveresistancelimit}]
The proof directly follows from the eigenvalue restrictions dictated by the region $S$. Let $\lambda \tau\in (0,\pi/2)$. Then $\beta\tau < f(a)$ for $a$ to satisfy $g(a)=\lambda\tau$. In other words,

$$ \beta \tau < (\beta\tau)^*= f\circ g^{-1}(\lambda \tau ).$$ Inverting the last equality we obtain the limit $\lambda \tau = g \circ f^{-1}((\beta\tau)^*)$. For given $\beta,\tau$ such that $\beta\tau\in (0,1)$ the limits are reversed so as to $\lambda\tau < (\lambda\tau)^*=g \circ f^{-1}((\beta\tau))$. The last condition has to be satisfied for all non-zero eigenvalues of $L$. The result then follows directly by elementary algebra.
\end{proof}
\begin{proof}[Proof of Theorem \ref{thm: main3}] We will work the details for the systemic risk of vehicle collision only. Throughout  the proof $\mathcal R_{\varepsilon}=\mathcal R^{C,i}_{\varepsilon}$ and $\sigma_i=\sigma$ for notation simplicity.  The steps to derive of the second trade-off condition is identical.  From Theorem \ref{thm: main2} we have
\begin{equation*}
\begin{split}
\mathcal R_{\varepsilon}^2&=\bigg(\frac{d-c(d-\sqrt{2}\kappa_\varepsilon \sigma)}{d-\sqrt{2}\kappa_\varepsilon \sigma}\bigg)^2=\bigg(\frac{(1-c)+c\frac{\sqrt{2}\kappa_\varepsilon\,\sigma}{d}}{1-\frac{\sqrt{2}\kappa_\varepsilon\,\sigma}{d}}\bigg)^2\\
&=\frac{(1-c)^2+2(1-c)c\frac{\sqrt{2}\kappa_\varepsilon\,\sigma}{d}+c^2\big(\frac{\sqrt{2}\kappa_\varepsilon\,\sigma}{d}\big)^2}{\big(1-\frac{\sqrt{2}\kappa_\varepsilon\,\sigma}{d}\big)^2}=:\frac{I_{\text{e}}(\sigma)}{I_{\text{d}}(\sigma)}
\end{split}
\end{equation*} Both the enumerator $I_e$ and the denominator $I_d$ are positive. We will establish non-trivial lower bounds for the two expressions independently. We start with the enumerator $I_e(\sigma)$. It is easy to see that its minimum is achieved at $\underline{\sigma}=\frac{c-1}{c}\frac{d}{\sqrt{2}\kappa_{\varepsilon}}$, and it is achievable provided that $\underline{\sigma}\geq \sigma^*$ the hard limit from Theorem \ref{cor: limit1}. Consequently, $
I_{\text{e}}\geq \underline{E}^C $.  We proceed with the term $1/I_d$. Since  $|\frac{\kappa_{\varepsilon}\sigma\sqrt{2}}{d}|<1$ we can express $\frac{1}{I_d}$  as a geometric sum
\begin{equation*}\begin{split}
\frac{1}{I_d(\sigma)}&=\sum_{m=1}^\infty m \bigg(\frac{\sqrt{2}\kappa_{\varepsilon}\sigma}{d}\bigg)^{m-1}=1+\sum_{m=1}^\infty(m+1)\bigg(\frac{\sqrt{2}\kappa_{\varepsilon}\sigma}{d}\bigg)^{m}\\
&=1+\sum_{m=1}^\infty(m+1)\bigg(\frac{\sqrt{2}\kappa_{\varepsilon}}{d}\bigg)^{m}\big(\sigma^2\big)^{\frac{m}{2}}
\end{split}
\end{equation*}
Now,
\begin{equation*}
\frac{1}{I_d(\sigma)}\Xi_{\mathcal G}=\Xi_{\mathcal G}+\sum_{m=1}^\infty(m+1)\bigg(\frac{\sqrt{2}\kappa_{\varepsilon}}{d}\bigg)^{m}\big(\sigma^2\, \Xi_{\mathcal G}^{\frac{2}{m}} \big)^{\frac{m}{2}}.
\end{equation*} The first term of the sum above is bounded by $\Xi_{\mathcal G}>2n(n-1)\frac{\tau}{\pi}$. The $m^{th}$ term of the second infinite sum term is handled as follows: 
\begin{equation*}\begin{split}
\sigma^2 \cdot \Xi_{\mathcal G}^\frac{2}{m}&= g^2 \frac{\tau^3}{2\pi}\sum_{j=2}^n (w_j)^2f(\lambda_j\tau,\beta\tau)\cdot \Xi_{\mathcal G}^\frac{2}{m}\\&=g^2 \frac{\tau^3}{2\pi} (n\tau)^{\frac{2}{m}}\sum_{j=2}^n\bigg[(w_j)^2 f(\lambda_j\tau,\beta\tau) \bigg(\sum_{k=2}^{n}\frac{1}{\lambda_k\tau} \bigg)^{\frac{2}{m}}\bigg]
\end{split}
\end{equation*}
From the eigenvalue ordering \eqref{eq: laplacian} and the conditions of Theorem \ref{thm: main0}, we deduce for $j=2,\dots,n$, $
\sum_{j^{'}=2}^n \frac{1}{\lambda_{j^{'}}\tau}> \frac{(j-1)}{\lambda_j\tau}+\frac{(n-j)}{\vartheta(\beta\tau)}$. Then
\begin{equation*}
\bigg(\sigma^2 \cdot \Xi_{\mathcal G}^\frac{2}{m}\bigg)^{\frac{m}{2}}> n\tau  \bigg(g^2 \cdot \frac{\tau^3}{2\pi}\bigg)^{\frac{m}{2}}  \sqrt{\big(2 \underline{f}_m\big)^m}
\end{equation*}
Summing over all the terms, we arrive at the lower bound for  $\mathcal R_{\varepsilon}^2\cdot \Xi_{\mathcal G}$
\begin{equation*}
\mathcal R_{\varepsilon}^2\cdot \Xi_{\mathcal G} > n\tau \underline{E}^C \bigg( \frac{2(n-1)}{\pi}+ \sum_{m=1}^\infty 2^{\frac{m}{2}}(m+1)\bigg(\frac{|g|\tau^\frac{3}{2}\kappa_{\varepsilon}}{d\sqrt{\pi}}\bigg)^{m} \underline{f}_m^\frac{m}{2} \bigg).
\end{equation*} We take the square root on both sides to conclude.
\end{proof}
\bibliographystyle{ieeetran}    
\bibliography{bibliography}
\vspace{-0.9cm} 
\begin{IEEEbiography}{Christoforos Somarakis} received the B.S. degree in
Electrical Engineering from the National Technical
University of Athens, Athens, Greece, in 2007 and and the M.S. and Ph.D. degrees in applied mathematics from the University of Maryland
at College Park, in 2012 and 2015, respectively. He was a Post-Doctoral scholar and a Research Scientist with the Department of Mechanical Engineering and Mechanics at Lehigh University from 2016 to 2019. He is currently member of research stuff with the System Sciences Lab at Palo Alto Research Center. 
\end{IEEEbiography}
\vspace{-0.9cm} 
\begin{IEEEbiography}{Yaser Ghaedsharaf} received his B.Sc. degree in Mechanical Engineering from Sharif University of Technology in 2013. He is currently pursuing a Ph.D. in the Department of Mechanical Engineering and Mechanics at Lehigh University. He is the Runner-Up for NecSys 2016 Best Student Paper Award. His research interests include analysis and optimal design of networked control systems with applications in distributed control and cyber-physical systems.
\end{IEEEbiography}
\vspace{-0.9cm} 
\begin{IEEEbiography}{Nader Motee}(S’99–M’08–SM’13) received the B.Sc. degree in electrical engineering from the Sharif University of Technology, in 2000, and the M.Sc. and Ph.D. degrees in electrical and systems engineering from the University of Pennsylvania, Philadelphia, PA, USA, in 2006 and 2007, respectively. From 2008 to 2011, he was a Postdoctoral Scholar with the Control and Dynamical Sys tems Department, California Institute of Tech- nology, Pasadena, CA, USA. He is currently an Associate Professor with the Department of Mechanical Engineering and Mechanics, Lehigh University, Bethlehem, PA, USA. His current research area is distributed control systems and real-time robot perception. Dr. Motee is a past recipient of several awards including the 2019 Best SIAM Journal of Control and Optimization (SICON) Paper Prize, the 2008 AACC Hugo Schuck Best Paper Award, the 2007 ACC Best Student Paper Award, the 2008 Joseph and Rosaline Wolf Best Thesis Award, the 2013 Air Force Office of Scientific Research Young Investigator Program (AFOSR YIP) award,  2015 NSF Faculty Early Career Development (CAREER) award, and a 2016 Office of Naval Research Young Investigator Program (ONR YIP) award. 
\end{IEEEbiography}
\end{document}